

\documentclass[journal]{IEEEtran}
\makeatletter
\let\NAT@parse\undefined
\makeatother
\pdfoutput=1





\usepackage{graphics} 
\usepackage{epsfig} 
\usepackage{mathptmx} 
\usepackage{times} 
\usepackage{amsmath} 
\usepackage{amssymb}  
\usepackage{xcolor}  
\usepackage{caption} 
\usepackage{amsthm}
\usepackage{subcaption}
\usepackage{graphicx}
\usepackage{multirow}
\usepackage{booktabs}
\usepackage{MnSymbol}

\usepackage[hyphens]{url} %
\newcommand{\sli}[1]{\textcolor{black}{#1}} 
\newcommand{\slired}[1]{\textcolor{black}{#1}} 
\newcommand{\slirebuttal}[1]{\textcolor{black}{#1}} 

\captionsetup[table]{skip=5pt}
\usepackage[numbers,compress]{natbib}

\bibliographystyle{IEEEtran}

\newcommand{\squeeze}{-1.5mm}

\usepackage{enumitem}
\newtheoremstyle{mystyle}
  {}
  {}
  {}
  {}
  {\bfseries}
  {.}
  {.5em}
  {}
  
\newtheorem{definition}{Definition}
\newtheorem{assumption}{Assumption}
\newtheorem{proposition}{Proposition}
\newtheorem{lemma}{Lemma}
\newtheorem{theorem}{Theorem}
\newtheorem{corollary}{Corollary}
\theoremstyle{mystyle}{
    \newtheorem*{remark}{Remark}
}

\newtheorem{controller}{Controller}
\newtheorem{observation}{Observation}

\makeatletter
\def\subsubsection{\@startsection{subsubsection}
                                 {3}
                                 {\z@}
                                 {0ex plus 0.1ex minus 0.1ex}
                                 {0ex}
                                 {\normalfont\normalsize\itshape}}
\makeatother
                                 
\title{
Hybrid System Stability Analysis of Multi-Lane Mixed-Autonomy Traffic}

\author{Sirui Li, Roy Dong, Cathy Wu
\thanks{This work was supported by a gift from Mathworks and MIT’s Research Support Committee.}
\thanks{Sirui Li is with the Institute for Data, Systems, and Society, Massachusetts Institute of Technology,
Cambridge, MA, 02139, USA.
    {\tt\small siruil@mit.edu}}%
\thanks{Roy Dong is with the Industrial \& Enterprise Systems Engineering department at the University of Illinois at Urbana-Champaign, Urbana, IL, 61801, USA.
    {\tt\small roydong@illinois.edu}}
\thanks{Cathy Wu is with the Laboratory for Information \& Decision Systems; the Institute for Data, Systems, and Society; and the Department of Civil and Environmental Engineering, Massachusetts Institute of Technology,
        Cambridge, MA, 02139, USA.
        {\tt\small cathywu@mit.edu}}%
}

\begin{document}

\maketitle

\begin{abstract}
Autonomous vehicles (AVs) hold vast potential to enhance transportation systems by reducing congestion, improving safety, and lowering emissions. AV controls lead to emergent traffic phenomena; one such intriguing phenomenon is traffic breaks (rolling roadblocks), where a single AV efficiently stabilizes multiple lanes through frequent lane switching, similar to the highway patrolling officers weaving across multiple lanes during difficult traffic conditions. While previous theoretical studies focus on single-lane mixed-autonomy systems, this work proposes a stability analysis framework for multi-lane systems under AV controls. Casting this problem into the hybrid system paradigm, the proposed analysis integrates continuous vehicle dynamics and discrete jumps from AV lane-switches. Through examining the influence of the lane-switch frequency on the system's stability, the analysis offers a principled explanation to the traffic break phenomena, and further discovers opportunities for less-intrusive traffic smoothing by employing less frequent lane-switching. The analysis further facilitates the design of traffic-aware AV lane-switch strategies to enhance system stability. Numerical analysis reveals a strong alignment between the theory and simulation, validating the effectiveness of the proposed stability framework in analyzing multi-lane mixed-autonomy traffic systems.
\end{abstract}
\begin{IEEEkeywords}
Intelligent Transportation Systems, Autonomous Agents, Hybrid Systems, Hybrid Logical/Dynamical Planning and Verification
\vspace*{-0.3cm}
\end{IEEEkeywords}


\section{INTRODUCTION}
\label{sec:introduction}
Efficient and eco-friendly transportation systems are crucial for economic vitality, public health and safety, and environmental sustainability~\cite{national2019traffic, epaghgemission, winston2015transportation}. With transportation contributing to 29\% of the U.S. GHG emission in 2021~\cite{national2019traffic}, studies indicate that mitigating congestion could potentially lead to a nearly 20\% decrease in CO$_2$ emissions. Distinct traffic phenomena emerge when congestion mitigation strategies are put into practice. For example, under difficult traffic conditions such as car accidents, high congestion, or road construction, highway patrol officers employ a strategy known as a ``traffic break" (rolling roadblock)~\cite{fhwa2019,californiahighway} to clear traffic by weaving across multiple lanes. Despite being a common practice, the impact of traffic break on the system's stability is understudied; limited preliminary empirical study on the effect on safety and mobility appear in the literature~\cite{saha2021application,nafakh2022safety}.

\begin{figure}[!t]
    \includegraphics[width=0.49\textwidth]{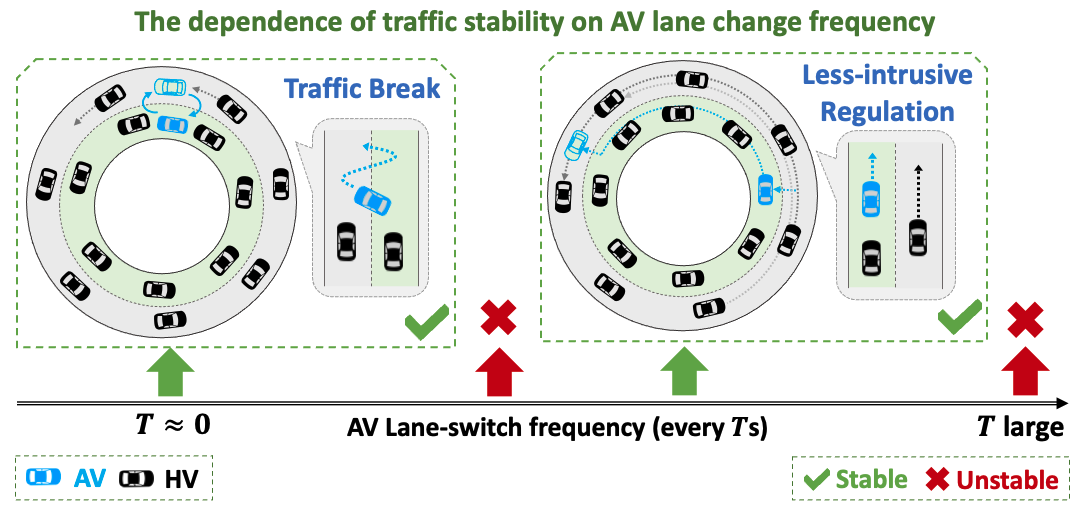}
    \caption{\textbf{Overview.} Consider a multi-lane mixed-autonomy traffic system with human-driven vehicles on each lane (HV, in black), and an autonomous vehicle (AV, in blue) switching between lanes. This work proposes a theoretical framework to analyze the stability of the multi-lane system, uncovering emergent phenomena based on different AV lane-switch frequencies: the AV can stabilize the traffic by implementing traffic breaks with $T \approx 0$ or less-intrusive regulation with a moderate $T$ (the green check-marks); in contrast, traffic becomes unstable when $T$ is small or too large (the red crosses between and outside the two stable regimes).}
    \vspace*{-0.3cm}
    \label{fig:illustration}
\end{figure}

Emerging technologies such as autonomous vehicles (AVs) have the potential to revolutionize transportation by mitigating traffic congestion and alleviating environmental impacts. A field experiment demonstrates that a single AV can stabilize a single-lane circular track with 22 vehicles~\cite{stern2018dissipation}. Simulations further display the emergent phenomena of AVs trained through deep reinforcement learning to stabilize the mixed-autonomy systems under various traffic network topologies~\cite{wu2021flow, yan2022unified}, including the ``traffic break'' where a single AV simultaneously stabilizes two-lane circular tracks by rapidly switching between the lanes.
AVs offer more precise and advanced controls, and are potentially more scalable compared to highway patrolling officers, who requires specialized training to execute traffic-smoothing techniques. This presents an opportunity to systematically deploy AVs to enhance traffic system stability through their precise control of these special techniques.

While ample studies provide stability analysis of a single-lane mixed autonomy traffic~\cite{cui2017stabilizing, zheng2020smoothing, zheng2015stability, zhu2018analysis, wang2020controllability, mousavi2022synthesis, swaroop1994string, bose2003analysis, rogge2008vehicle, wu2018stabilizing, giammarino2020traffic, liu2022structural}, the stability of an AV-controlled multi-lane systems have been rarely considered. The goal of this work is hence to present a theoretical framework for stability analysis of multi-lane mixed autonomy, with a focus on a single AV simultaneously stabilizing two-lane ring-roads by switching between the two lanes. Casting the problem into the hybrid system framework, this work proposes a variance-based Lyapunov analysis to quantify the stability during both continuous operation (when the AV is in a given lane) and discrete jumps (when the AV switches lanes).

The analysis provides a principled theoretical understanding to the emergent traffic phenomena. For instance, when the AV frequent switches lanes, it produces a ``phantom car" effect akin to the traffic break phenomenon, where the AV appears to duplicate itself to maintain stability in both lanes, as illustrated in Fig.~\ref{fig:illustration}. We further analyze the influence of reducing the switch frequency on the system's stability, considering that a reduced frequency leads to more practical deployment with less traffic disruption and enhanced passenger comfort. The stability analysis in this work uncovers a sweet spot for the switch frequency, which strikes a balance between duration that are neither too short (resulting in insufficient control duration) nor too long (leading to uncontrolled instability blowup) beyond the phantom car regime. The analysis further contributes to designing traffic-aware AV lane-switch strategies to enhance the stability of the multi-lane hybrid system.

In summary, our contributions are
\begin{itemize}
    \item We propose a theoretical framework for the stability analysis of multi-lane mixed-autonomy traffic systems.
    \item The analysis recovers emergent phenomena, such as traffic breaks, under various AV lane-switch frequencies.
    \item The analysis enables the design of traffic-aware AV lane-switch strategies to further improve traffic stability.
    \item We conduct extensive numerical analysis to validate that our theoretical analysis closely aligns with simulations.
\end{itemize}

\sli{The structure of the paper is as follows: Sec.~\ref{sec:related_work} discusses related works. Sec.~\ref{sec:prelim} introduces the multi-lane traffic model, which combines continuous vehicle dynamics \slirebuttal{with} AV lane-switches modeled as discrete jumps. \slirebuttal{Sec.~\ref{sec:analysis} presents} a Lyapunov stability analysis of the system under the hybrid-system framework\slirebuttal{, with detailed derivation in Sec.~\ref{sec:variance_derivation}}. The analysis is applied in Sec.~\ref{sec:controller_fixed_duration} to explain the emergent traffic phenomena depicted in Fig.~\ref{fig:illustration}, and in Sec.~\ref{sec:controller} to design controllers to improve system's stability. Finally, Sec.~\ref{sec:experiment} validates the stability analysis through extensive numerical experiments.}

\section{RELATED WORK}
\label{sec:related_work}
\subsection{Traffic Stabilization with Autonomous Vehicles}
Controlling AVs to stabilize mixed-autonomy systems has garnered considerable attention from both empirical and theoretical perspectives. Empirical studies include field experiments~\cite{stern2018dissipation} and deep reinforcement learning simulations where controlling a small number of AVs can stabilize traffic~\cite{wu2021flow}, reduce fuel consumptions~\cite{jayawardana2022learning}, and exhibit emergent behaviors such as traffic lights~\cite{yan2022unified}. Theoretical studies have majorly been conducted on the linearized Optimal Velocity Model (OVM)~\cite{bando1995dynamical} for the single-lane ring-road traffic setting, under asymptotic stability~\cite{cui2017stabilizing, zheng2020smoothing, zheng2015stability, zhu2018analysis, wang2020controllability, mousavi2022synthesis}, or string stability~\cite{swaroop1994string, bose2003analysis, rogge2008vehicle, wu2018stabilizing, giammarino2020traffic, liu2022structural}. Recent theoretical studies extend to disturbance and uncertainty~\slirebuttal{\cite{mousavi2022synthesis, jin2016optimal, liu2023reachability}}, sector nonlinearity~\cite{gisolo2022nonlinear}, and deploying AV policies under coarse-grained guidance~\cite{li2023stabilization}. However, previous theoretical works focus on the stability of a single-lane traffic setting, with very few works consider multi-lane traffic systems. This work extends the previous analysis to multi-lane mixed-autonomy systems, investigating the capacity of a single AV to simultaneously stabilize multiple lanes. 

\vspace{-0.2cm}
\subsection{Switched and Hybrid Systems}
There exists a rich literature on switched and hybrid systems~\cite{liberzon2003switching, shorten2007stability, zhu2015optimal}, which are prevalent in real-world systems~\cite{frazzoli2000robust, gans2007stable, schill2018robust} and are characterized by a set of continuous modes with discrete transitions among the modes. Associated stability analysis techniques involve common Lyapunov function~\cite{shorten1998stability, liberzon2004common, williams2013constrained}, multiple Lyapunov function~\cite{branicky1998multiple, long2017multiple}, and slow-switching conditions based on the dwell-time~\cite{yuan2014hybrid, zhao2014switching}. Inspired by these stability analysis techniques, this work introduces a variance-based Lyapunov analysis \slirebuttal{that upper bounds} stability changes in \slirebuttal{both} continuous dynamics (controlled and uncontrolled lanes) and transitions via discrete jumps (AV lane-switches).

\vspace{-0.2cm}
\subsection{Lane Changing Maneuver}
\textbf{Lane Changing Models:} An essential component of the proposed analysis is the AV lane-switch; this is related to previous literature on lane changing models~\cite{moridpour2010lane, zheng2014recent}, which consider different levels of complexity and scales (macroscopic~\cite{tang2009macroscopic, jin2010macroscopic} vs. microscopic~\cite{rahman2013review, halati1997corsim, ahmed1996models}) and different lane changing rules (discretionary~\cite{knoop2014calibration} v.s. mandatory~\cite{zhang2019game}). Certain studies also explore the control of vehicles during lane changing maneuvers through Model Predictive Control~\cite{kamal2015efficient, dixit2019trajectory}, Monte Carlo Tree Search~\cite{lenz2016tactical}\sli{, or Belief Tree Search under the POMDP framework~\cite{lauri2022partially, ding2021epsilon, ulbrich2013probabilistic}.} 

However, these previous works primarily center on modeling human drivers' lane changing behaviors for realistic traffic simulations, or enhancing the safety and efficiency of the ego lane changing vehicle. In contrast, this work focuses on AV lane-switch with the intent of enhancing the stability of the entire system. \sli{While this work models the AV lane-switch as discrete jumps (following prior works~\cite{piu2022stability, wu2017multi}) to simplify the theoretical analysis, we acknowledge that real-world lane changes involve continuous dynamics consisting of longitudinal and lateral maneuvers, which can be modeled by Dubins’ path~\cite{dubins1957curves} or Kinematic Bicycle Model~\cite{kong2015kinematic}.}

\textbf{Lane Changing Effects:} Numerous previous studies \sli{examine simulation and real-world traffic data} to show that lane changing executed by HVs can cause shockwave formulation and traffic slowdown~\cite{daganzo2002behavioral,jin2010kinematic,oh2015impact}. Similarly, in this work, AV lane switching can cause traffic bottlenecks as it may disrupt the traffic flow. The proposed analysis models the AV lane-switch as discrete jumps to assess its impact on the system stability.

\vspace{-0.2cm}
\subsection{Multi-lane Stability Analysis}
To the best of our knowledge, theoretical analysis on the multi-lane mixed-autonomy systems has not been considered. Previous relevant studies primarily focus on multi-lane systems with solely HVs, including macroscopic analyses~\cite{michalopoulos1984multilane, daganzo2002behavioral} that examine the cumulative effects from lane changing by HVs, and a microscopic study~\cite{wu2017multi} that reduces multi-lanes into a stochastic single-lane model \sli{in which vehicles appear and disappear based on human-driving data, and applies Markov Chain analysis to the reduced system. While the prior works focus on modeling of the multi-lane HV systems, this work focuses on the control aspect by introducing an AV to switch between lanes to stabilize the HVs, and derives Lyapunov stability analysis of the multi-lane mixed-autonomy system.}

The most related prior study~\cite{piu2022stability} examines a microscopic lane changing model for HVs on multi-lane ring-roads, where they consider simple lane change rules that account for the target speed and safety distances. The study identifies the equilibrium state of a multi-lane system, as well as regions around the equilibrium such that the HVs have no incentive to change lanes. We draw inspiration from the analysis of this work. \sli{Instead of finding conditions such that lane change never occurs (based on each HV's incentive), we allow and design AV controllers to switch lanes (frequently) as a means to stabilize the system through AV lane-switching.}

\section{MULTI-LANE MODEL}
\label{sec:prelim}
This work considers a two-lane ring road system with one AV $i=n$ and $n-1$ HVs $i=1,...,n-1$ on each lane, for a total of $2(n-1)$ HVs. Each lane has a circumference $C$ (see Fig.\ref{fig:illustration}). We assume the HVs stay on the same lane (no lane changing), while the AV can switch between lanes at any time. The two lanes are denoted as $\{L, R\}$ to represent the Left and Right lanes with respect to the traffic flow. We formulate the continuous vehicle dynamics for each lane and the discrete jump from AV lane-switch as follows.

\vspace{-0.2cm}
\subsection{Continuous vehicle dynamics}
\label{sec:prelim_singlelane}
At each time $t$, \slirebuttal{the AV is present on one of the two lanes $l \in \{L, R\}$. We denote the lane with AV presence ($n_l = n$ total number of vehicles) as the Controlled Lane, and the other lane ($n_l = n-1$ total number of vehicles) as the Uncontrolled lane. Equivalently, we say a lane is controlled (uncontrolled) by the presence (absence) of the AV on the lane.} For each lane $l$, we denote the position of the $i^{th}$ vehicle by $p_i^l(t)$, the velocity by $v^l(t) = \dot{p}^l(t)$, and the acceleration by $a^l(t) = \dot{v}^l(t)$, \slirebuttal{The vehicles are ordered such that  $mod(p^l_1(t), C) \geq ... \geq mod(p_{n_l}(t), C)$, with the headway $s_i^l(t) = mod(p_{i-1}^l(t) - p_i^l(t), C)$, where the modulo operation accounts for the circular redundancy of the vehicle's position on the ring-road}. The car following model (CFM) for each HV takes the nonlinear form
\vspace{\squeeze}
\begin{equation}
    \dot{v}^l_i(t) = F(s^l_i(t), \dot{s}^l_i(t), v^l_i(t)),
\vspace{\squeeze}
\end{equation}
where the uniform flow equilibrium is obtained at headway $s^{*}_{n_l}$ and velocity $v^{*}_{n_l}$ such that $F(s^{*}_{n_l}, 0, v^{*}_{n_l}) = 0$. Notably, $s^{*}_{n_l}$ and $v^{l*}$ vary with the number of vehicles $n_l$ on the lane. 

Following~\cite{zheng2020smoothing}, we denote the error state as $\tilde{s}^l_i(t) = s^l_i(t) - s^{*}_{n_l}$ and $\tilde{v}^l_i(t) = v^l_i(t) - v^{*}_{n_l}$. \slirebuttal{We assume that $F$ is differentiable around the equilibrium $(s^{*}_{n_l}, v^{*}_{n_l})$, and} consider linearization of the HV's CFM around the equilibrium  
\begin{equation}
    \begin{cases}
    \dot{\tilde{s}}^l_i(t) & = \tilde{v}^l_{i-1}(t) - \tilde{v}^l_i(t)\\
    \dot{\tilde{v}}^l_i(t) & = a_1^{l} \tilde{s}^l_i(t) - a_2^{l} \tilde{v}^l_i(t) + a_3^{l} \tilde{v}^l_{i-1}(t)
    \end{cases}
    \label{eq:hv_ode}
\end{equation}
where $a_1^{l} = \frac{\partial F}{\partial s}, a_2^{l} = \frac{\partial F}{\partial \dot{s}} - \frac{\partial F}{\partial v}, a_3^{l} = \frac{\partial F}{\partial \dot{s}}$ evaluated at $(s^{*}_{n_l}, v^{*}_{n_l})$. 

This work \slirebuttal{further follows~\cite{zheng2020smoothing}} to assume the HVs are following the Optimal Velocity Model (OVM)~\slirebuttal{\cite{orosz2010traffic, bando1995dynamical}}
\begin{equation}
    F(s^l_i(t), \dot{s}^l_i(t), v^l_i(t)) = \alpha(V(s^l_i(t)) - v^l_i(t)) + \beta \dot{s}^l_i(t),
    \label{eq:ovm}
\end{equation}
where $\alpha > 0, \beta > 0$, and the optimal velocity $V(s)$ is
\vspace{\squeeze}
\begin{equation}
    V(s) = \begin{cases}
    0, & s \leq s_{st}\\
    f_v(s), & s_{st} < s < s_{go}\\
    v_{max}, & s \geq s_{go}
    \end{cases}    ,
    \label{eq:vclipping}
\vspace{\squeeze}
\end{equation}
\slirebuttal{where the spacing thresholds $s_{st}$ and $s_{go}$ correspond to $0$ and $v_{max}$ in the optimal velocity function $V(s)$. In between, $f_v(s)$ characterizes the transition in target velocity from $0$ and $v_{max}$, where $f_v(s)$ is generally continuous and monotone increasing. Many choices of $f_v(s)$ exists~\cite{lazar2016review}, and a typical form is~\cite{zheng2020smoothing}}
\vspace{\squeeze}
\begin{equation}
    f_v(s) = \frac{v_{max}}{2}\Big(1 - \cos \big(\pi \frac{s - s_{st}}{s_{go} - s_{st}}\big)\Big).
    \label{eq:ovm_f}
\vspace{\squeeze}
\end{equation}
\slirebuttal{The OVM equilibrium for each ring road follow} $s^*_{n_l} = C / n_l$ and $v^{*}_{n_l} = V(s^{*}_{n_l})$\slirebuttal{; in particular, the equilibrium velocity $v^{*}_{n_l}$ is the target optimal velocity $V(s^{*}_{n_l})$ evaluated at the equilibrium headway $s^{*}_{n_l}$. The OVM linearization satisfies} $a_1^{l} = \alpha \dot{V}(s^*_{n_l}),  a_2^{l} = \alpha + \beta, a_3^{l} = \beta$. \slirebuttal{We refer the readers to Table~\ref{tab:notation_var} in Sec.~\ref{sec:experiment} for a list of the OVM parameters and their descriptions, and~\cite{zheng2020smoothing, li2023stabilization} for visualizations of Eq.~\eqref{eq:vclipping}.}
\\

\noindent \textbf{Controlled Lane.}
When a lane $l$ is under AV control, we directly control the acceleration of the AV with a input $u^l(t)$. The CFM of the AV is hence denoted as
\vspace{\squeeze}
\begin{equation}
    \begin{cases}
    \slirebuttal{\dot{\tilde{s}}^l_{n}(t)}& = \slirebuttal{\tilde{v}^l_{n-1}(t)} - \slirebuttal{\tilde{v}^l_{n}(t)}\\
    \slirebuttal{\dot{\tilde{v}}^l_{n}(t)}& = u^l(t)
    \end{cases}    ,
    \label{eq:av_ode}
\vspace{\squeeze}
\end{equation}
Denote the \textit{state vector} at any time $t$ as $z_c^l(t) = [s_1^l(t), v_1^l(t), ..., s_n^l(t), v_n^l(t)]$, and the corresponding \textit{error state vector} as $x^l_c(t) = [\tilde{s}^l_{1}(t), \tilde{v}^l_{1}(t), ..., \tilde{s}^l_{n}(t), \tilde{v}^l_{n}(t)]^\intercal = z_c^l(t) - z^{*}_n$, where $z^{*}_n = [s^{*}_n, v^{*}_n, ... s^{*}_n, v^{*}_n] \in \mathbb{R}^{1\times 2n}$ is the equilibrium state. The error dynamics of the linearized controlled system is
\vspace{\squeeze}
\begin{equation}
    \dot{x}^l_c(t) = A_c x^l_c(t) + B_c u^l(t)
    \label{eq:control_matrix_system}
\vspace{\squeeze}
\end{equation}
with 
\vspace{\squeeze}
\begin{equation}
    A_c = \begin{bmatrix} 
            D_1 & 0 & ... & ... & 0 & D_2 \\
        	D_2 & D_1 & 0 & ... & ... & 0\\
        	0 & D_2 & D_1 & 0 & ... & 0 \\
        	\vdots & \ddots & \ddots & \ddots & \ddots & \vdots\\
        	0 & ... & 0 & D_2 & D_1 & 0 \\
            0 & ... &  ... & 0 &  C_2 & C_1 \\

    	\end{bmatrix},
    B_c = \begin{bmatrix} 
        	B_2 \\
        	B_2 \\ 
        	B_2 \\
        	\vdots \\
        	B_1
    	\end{bmatrix} \in \mathbb{R}^{2n \times 2n},
\end{equation}
\begin{equation}
\begin{aligned}
    \text{where} \quad \quad D_1 = \begin{bmatrix}
            0 & -1 \\
            a_1 & - a_2
          \end{bmatrix} &,
    D_2 = \begin{bmatrix}
            0 & 1 \\
            0 & a_3
          \end{bmatrix}, \\[5pt]
    C_1 = \begin{bmatrix}
            0 & -1 \\
            0 & 0
          \end{bmatrix},
    C_2 = \begin{bmatrix}
            0 & 1 \\
            0 & 0
          \end{bmatrix} & ,
    B_1 = \begin{bmatrix}
            0 \\
            1
          \end{bmatrix}, 
    B_2 = \begin{bmatrix}
            0 \\
            0
          \end{bmatrix}.
    \label{eq:BCD_notation}
\end{aligned}
\end{equation}
\slirebuttal{The AV adopts a full state feedback control $u^l(t) = - K x^l_c(t)$. We assume the control gain matrix $K$ can stabilize the single-lane linearized OVM dynamics, i.e. the resulting system $\dot{x}^l_c(t) = (A_c - B_c K) x^l_c(t)$ is stable. The prior work~\cite{zheng2020smoothing} proves that such a matrix $K$ exists.} \sli{From the Converse Lyapunov Theorem~\cite{khalil2001nonlinear}, there exists a valid Lyapunov function $V(x^l_c(t)) = x^l_c(t)^\intercal P x^l_c(t) > 0$ with $\dot{V}(x^l_c(t)) = -x^l_c(t)^\intercal Q x^l_c(t) < 0$ and $-Q = (A_c - B_c K)P + P(A_c - B_c K)^\intercal$, where \slirebuttal{$P, Q \in \mathbb{R}^{n\times n} > 0$}.}\\

\noindent \textbf{Uncontrolled Lane.} \slirebuttal{Similarly}, denote the \textit{state vector} at time $t$ as $z^l_u(t) = [s_1^l(t), v_1^l(t), ..., s_{n-1}^l(t), v_{n-1}^l(t)]$, and the corresponding \textit{error state vector} as $x^l_u(t) = [\tilde{s}^l_{1}(t), \tilde{v}^l_{1}(t), ..., \tilde{s}^l_{n-1}(t), \tilde{v}^l_{n-1}(t)]^\intercal = z^l_u(t) - z^{*}_{n-1}$, where $z^{*}_{n-1} = [s^{*}_{n-1}, v^{*}_{n-1}, ... s^{*}_{n-1}, v^{*}_{n-1}] \in \mathbb{R}^{1\times 2(n-1)}$ is the equilibrium state. The error dynamics of the linearized uncontrolled system is
\vspace{\squeeze}
\begin{equation}
    \dot{x}^l_u(t) = A_u x^l_u(t) 
    \label{eq:uncontrol_matrix_system}
\vspace{\squeeze}
\end{equation}
where $A_u \in \mathbb{R}^{2(n-1)\times 2(n-1)}$ is similar to $A_c$ but with no AV control, and can be obtained from $A_c$ by removing the $[0; ...; 0; C_2; C_1]$ rows and adjusting the first row so that the $D_2$ matrix represents the corresponding leading vehicle (last HV) for the first HV. \slired{A previous work~\cite{cui2017stabilizing} uses string stability to show that the linearized, uncontrolled single-lane OVM system is unstable if $\alpha + 2\beta < 2\dot{V}(s^*_{n_l})$, or equivalently, $\alpha + 2\beta - v_{max}\frac{\pi}{s_{go}-s_{st}}\sin(\pi\frac{C/n_l - s_{st}}{s_{go} - s_{st}}) < 0$ for $s^*_{n_l} = C / n_l \in [s_{st}, s_{go}]$. By default, we assume OVM system parameters are chosen such that the uncontrolled single-lane system is asymptotically unstable. Notably, the stability of the original, nonlinear single-lane OVM system may improve at the beginning of a trajectory before eventually becoming unstable (see Sec.~\ref{sec:analysis_uncontrol_var} for details).}

\begin{figure}[!t]
    \centering
    \includegraphics[width=0.49\textwidth]{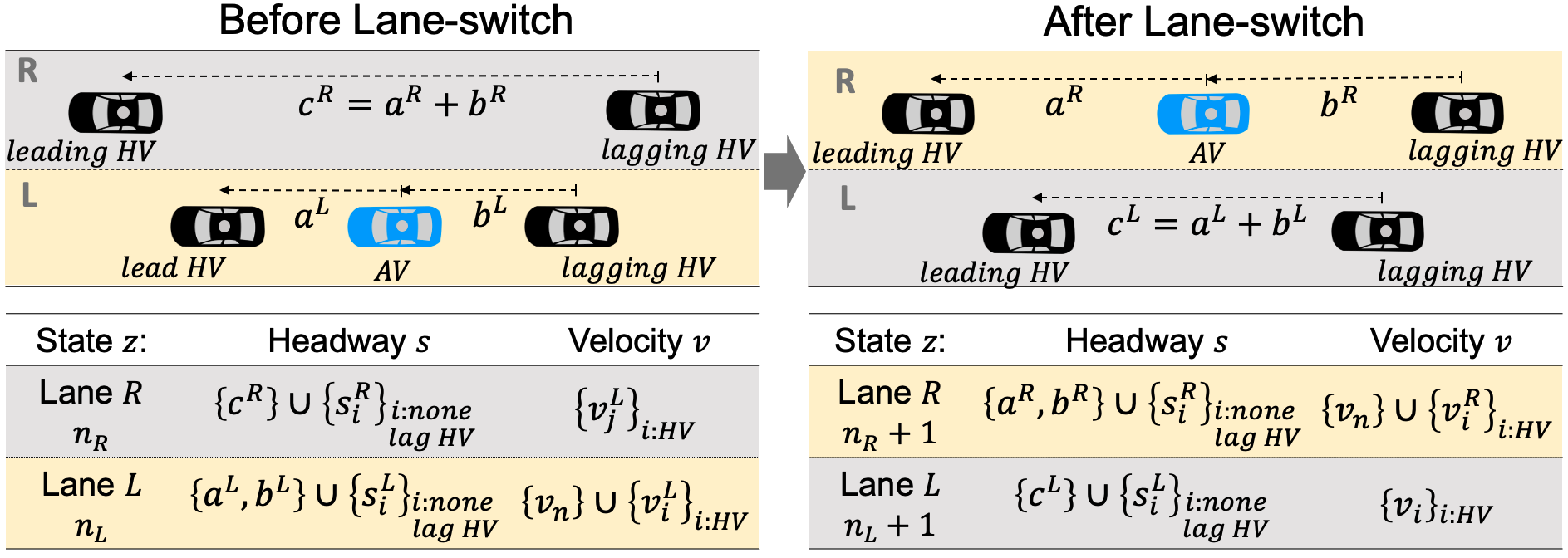}
    \caption{\textbf{Notation for AV Lane switching.} \sli{The table presents the detailed state notation ($z^l$) for the headways and velocities of all vehicles before and after the lane-switch. The AV (in blue) switches from lane $L$ to the same position on lane $R$, maintaining the same velocity $v_n$ and changes the total number of vehicles on each lane from \slirebuttal{$n_L = n, n_R = n-1$ to $n_L = n-1, n_R = n$}. The instantaneous impact of an AV lane-switch takes place in the local neighborhood involving the AV and its two neighboring HVs, where the headways change from $a^L, b^L$ and $c^R = a^R + b^R$ to $c^L = a^L + b^L$ and $a^R, b^R$. The headways $\{s^L_i\}_{\substack{i: none\\ lag\;HV}}, \{s^R_i\}_{\substack{i: none\\ lag\;HV}}$ for all none-lagging HVs, and the velocities $\{v^L_i\}_{i: HV}, \{v^R_I\}_{i: HV}$ for all HVs  stay the same.\vspace{-0.2cm}}}
    \label{fig:laneswitch_notation}
\end{figure}

\begin{figure*}[!t]
    \centering
    \includegraphics[width=0.95\textwidth]{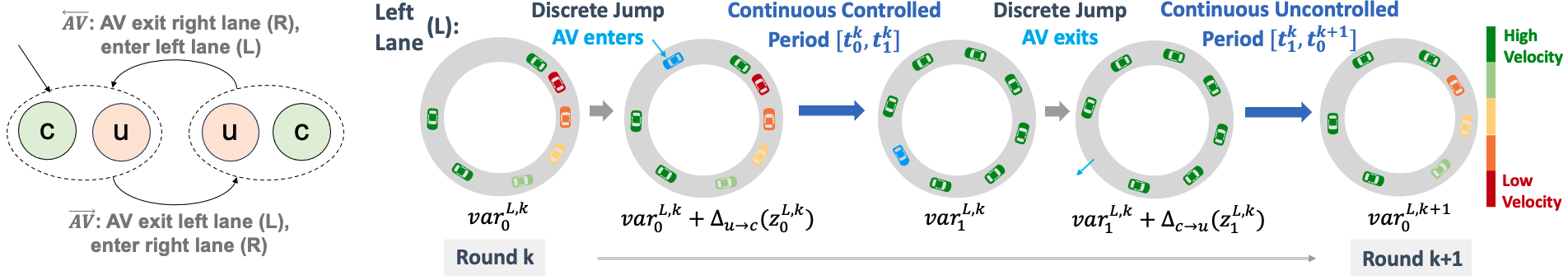}
    \caption{\textbf{System-level stability analysis for hybrid multi-lane systems.} \textbf{Left:} We analyze the multi-lane system under the hybrid system framework. \sli{The two modes are depicted by dashed oval, where the green and red circles inside each oval represent controlled and uncontrolled lanes. Transitions between modes correspond to the AV switching lanes. The AV starts by controlling lane $L$ in the two-lane hybrid system, as denoted by the solid pointed line.} \textbf{Right:} The proposed stability analysis for the traffic system assesses the variance change of each lane $l$ during each round $k$, where a round consists of 1) the AV enters a lane (discrete jump), 2) the AV continuously control the lane for a period, 3) the AV exits the lane (discrete jump), and 4) the lane stays uncontrolled for a period. The system becomes more stable if the initial variance $var^{l, k}_0$ is less than the final variance $var^{l, k}_2$ of the round. We illustrate the behavior of lane $L$, which becomes more stable, during the round. The color of each HV represents its velocity, and the AV is depicted in blue.~\slirebuttal{The color bar ranges from low velocity (congestion) to high velocity ($\geq$ equilibrium) to highlight the congestion mitigation and formation due to the presence and absence of AV.}}
    \label{fig:round_notation}
\end{figure*}

\vspace{-0.2cm}
\subsection{AV lane-switch maneuver}
\label{sec:prelim_multilane}
Without loss of generality, suppose the AV is on lane $L$ at time $t$, whose position $p^L_{n}(t) = p_n$ and velocity $v^L_{n}(t) = v_n$. We allow the AV to switches lane at any time. Following previous works~\cite{piu2022stability}, we model the lane switch as an instantaneous discrete jump, where the AV switches to the same position $p_n$ on lane $R$ and maintain the same velocity $v_n$. \slirebuttal{For the tractability of theoretical analysis, this work represents each vehicle by a single position $p_i^l(t) \in \mathbb{R}$ with zero length~\cite{zheng2020smoothing, li2023stabilization, liu2023reachability} (gray dot at each vehicle's center in Fig.~\ref{fig:laneswitch_notation}). We leave as a future work to incorporate positive vehicle length in the modeling of the system dynamics. Notably, although we do not explicitly prohibit the AV to switch lanes when it is close to the HVs on the enter lane, our theoretical analysis reveals the inadequacy of such a lane-switch strategy as it leads to system instability. In Sec.~\ref{sec:controller}, we further design Traffic-Aware controllers to ensure adequate vehicle headways when AV switches lanes, which can significantly improves the system’s stability.}

As shown \slirebuttal{in} Fig.~\ref{fig:laneswitch_notation}, we present a few notation for the headway and velocity components of the state $z^l(t)$ at the discrete jump to facilitate the theoretical analysis. One instantaneous effect of AV lane-switch is changing the headways between the two adjacent HVs to the AV (on both exit and enter lanes). We denote the headway between the AV and the leading and lagging HVs as $a^L, b^L$ (exit lane, which is combined to a single headway of $c^L = a^L + b^L$ after the switch) and $a^R, b^R$ (enter lane, split from a single headway of $a^R + b^R$ before the switch). The AV lane-switch further affects the lagging HVs' acceleration (on both lanes) by changing the leading vehicles' velocities due to a change in $\dot{\tilde{s}}_i = \tilde{v}_{i-1} - \tilde{v}_i$. We further denote the headway of all none-lagging HVs as $\{s^L_i\}_{\substack{i: none\\lag\;HV}}$ (exit lane) and $\{s^R_i\}_{\substack{i: none\\lag\;HV}}$ (enter lane), and denote the velocity of all HVs as $\{v^L_i\}_{i: HV}$ and $\{v^R_i\}_{i: HV}$ on each lane. These quantities remain unchanged immediately after the AV switches lanes.

\section{STABILITY ANALYSIS UNDER HYBRID SYSTEM FRAMEWORK}
\label{sec:analysis}
This section presents a variance-based stability analysis by casting the two-lane mixed-autonomy system in Sec.~\ref{sec:prelim} into the hybrid system framework. Intuitively speaking, to regulate the traffic and achieve stability on both lanes through the control of a single AV, the AV should balance the \slirebuttal{duration of having} each lane \slirebuttal{controlled and uncontrolled} (which generally increases and decreases the lane's stability, respectively), by properly switching between them, while considering the effect of the AV lane-switches, which may introduce additionally shockwaves. \sli{We formalize the intuition by presenting a Lyapunov analysis framework that uses variance as a stability metric, to account for the tug-of-war between various factors}. 

The analysis is later applied to analyze specific controllers (Sec.~\ref{sec:controller_fixed_duration} and~\ref{sec:controller}). Rather than a uniform outcome, the analysis reveals that different configurations of the traffic state reveal different results for which the tug-of-war is won or lost. This not only explains the iconic emergent phenomenon of the traffic break, but also uncovers behaviors resulting from slowed down versions of the traffic break, characterized by less frequent lane switches. The analysis can further improve the controller design, by mitigating the impact of AV lane switches on the system's stability.

\vspace{-0.3cm}
\subsection{Hybrid System Modeling of the Multi-lane System}
As depicted in Fig.~\ref{fig:round_notation} (left), the two-lane traffic system can be naturally modeled as a hybrid system with two modes, representing whether the AV controls lane $L$ or $R$ \slirebuttal{(the dashed oval). The transition between the two modes represents the AV lane-switch. We formally model the system as a hybrid automaton~\cite{goebel2009hybrid} with the following components:
\begin{itemize}
    \item \textbf{Modes:} $Q = \{m_L, m_R\}$ where the AV controls lane L in mode $m_L$ and lane R in mode $m_R$.
    \item \textbf{Domain Mapping:} $Domain(m_L) = \mathbb{R}^{2n}\times \mathbb{R}^{2n-2}$, $Domain(m_R) = \mathbb{R}^{2n-2}\times\mathbb{R}^{2n}$, representing the continuous states $(x_c^L(t), x_u^R(t))$ and $(x_u^L(t), x_c^R(t))$.
    \item \textbf{Flow Map} $f: Q \times \mathbb{R}^{4n-2} \rightarrow \mathbb{R}^{4n-2}$: For each mode, the state for the controlled lane $x_c^l(t)$ follows the continuous dynamics in Eq.~\eqref{eq:control_matrix_system} and the uncontrolled lane $x_u^l(t)$ follows the continuous dynamics in Eq.~\eqref{eq:uncontrol_matrix_system}.
    \item \textbf{Edges} $E = \{(m_L, m_R), (m_R, m_L)\}$: The possible transitions between modes corresponding to the AV lane switch.
    \item \textbf{Reset Map} $R: E\times \mathbb{R}^{4n-2} \rightarrow \mathbb{R}^{4n-2}$: Following the lane-switch maneuver as described in Sec.~\ref{sec:prelim_multilane}, the reset map $R = R_{permute} \circ R_{insert-remove}$ is a composite function, where $R_{insert-remove}$ inserts the AV to the corresponding location in the states vector of the enter lane and removes it from the states vector of the exiting lane, $R_{permute}$ permutes the vehicle indices in the enter lane so that the AV appears in the last index of the state vector. Eq.~\eqref{eq:reset_map} illustrates the reset map when the AV switches from lane $L$ to $R$.
    \item \textbf{Guard Condition}: This paper controls the AV lane-switch behavior, which equivalently controls the guard condition that describes the criteria for transition between modes. The Fixed Duration Control in Sec.~\ref{sec:controller_fixed_duration} enforces mode transition every $Ts$, while the Traffic-Aware Control in Sec.~\ref{sec:controller} enables mode transition when the state of the enter lane satisfies certain regularity conditions.
\end{itemize}}

Without loss of generality, we suppose the AV starts by controlling lane $L$ (the solid pointed line). Under every two transitions, the system returns to the same mode, although possibly with different system states (headways and velocities). We denote this two-transition period as a \textit{round} $k$, which spans a time period $[t_0^k, t_0^{k+1}] = [t_0^{k}, t_1^{k}] \cup [t_1^{k}, t_0^{k+1}]$. A trajectory consists of multiple such rounds $k \in \mathbb{N}$. Fig.~\ref{fig:round_notation} (right) illustrates the behavior of lane $L$ during the round $k$: the AV enters lane $L$ (discrete jump), controls it for $[t_0^{k}, t_1^{k}]$ (continuous controlled period), exits lane $L$ at time $t_1^k$ (discrete jump), and lane $L$ stays uncontrolled for $[t_1^k, t_0^{k+1}]$ (continuous uncontrolled period). 

\begin{table*}[]
\centering
\begin{equation}
    \begin{aligned}
    (x_c^L(t), x_u^R(t)) = \quad\quad\;\; ([s_1^L(t), v_1^L(t), ..., \underline{s_n^L(t), v_n^R(t)}]& , [s_1^R(t), v_1^R(t), ..., s_{n-1}^R(t), v_{n-1}^R(t)]) \\
    & \downmapsto \; \boxed{R_{insert-remove}, \text{ where the AV position satisfies } mod(p_{k-1}^R(t), C) \geq mod(p_n^L(t), C) \geq mod(p_k^{R}(t), C)}\\
   ([s_1^L(t), ..., s_{n-1}^R(t), v_{n-1}^R(t)] & , [s_1^R(t), ..., v_{k-1}^R(t), \underline{s_n^L(t), v_n^R(t)}, s_{k}^R(t), ..., v_{n-1}^R(t)]) \\
    & \downmapsto \; \boxed{R_{permute}, \text{ based on } \sigma: [1, ..., k-1, k, k+1, ..., n] \mapsto [n-k+1, ..., n-1, n, 1, ..., n-k]} \quad   \\
    (x_u^L(t), x_c^R(t)) = \:\:\: ([s_1^L(t), v_1^L(t), ..., s_{n-1}^R(t), v_{n-1}^R(t)] & , [s_{k}^R(t), ..., v_{n-1}^R(t), s_1^R(t), ..., v_{k-1}^R(t), \underline{s_n^L(t), v_n^R(t)}]) 
    \end{aligned}
    \label{eq:reset_map}
\end{equation}
\vspace{-0.15cm}
\end{table*}

\slirebuttal{
\noindent \textbf{Stability analysis overview.} In the following, we present a Lyapunov stability framework for the two-lane hybrid system, using a variance-based stability metric in Eq.~\eqref{eq:variance_twolane} as the Lyapunov function. Summarized in Sec.~\ref{sec:analysis_summary} and derived in Sec.~\ref{sec:variance_derivation}, the analysis examines the evolution of the stability metric within each round, which involves a dynamic interplay among multiple competing factors: the continuous controlled periods (improves stability), the continuous uncontrolled periods, and the discrete jumps (both could impair stability). }

\begin{table*}[]
\centering
\caption{Notation for states and variances at key time points during a round $k$ \slirebuttal{spanning} a time period $[t_0^k, t_1^k] \cup [t_1^k, t_0^{k+1}]$. We consider the states and variances \slirebuttal{of the specific time points} before (top) and after (bottom) the AV switches lanes. \slirebuttal{For example, we denote the state at the time point $t_0^k$ right before the AV switches lanes (enter lane $L$) as $z^L_u(t_0^k) = z_0^{L, k}$, and the corresponding variance as $var^L(t_0^k) = var_0^{L, k}$ obtained by evaluating the variance function $var^l(t)$ in Eq.~\eqref{eq:variance} at the specific time $t_0^k$.}}
\scalebox{0.95}{
\begin{tabular}{lcccc}
\hline \\[-0.85em]
\multirow{2}{*}{$[t_0^k, t_1^k] \cup [t_1^k, t_0^{k+1}]$: Round $k$}  & \multicolumn{2}{c}{Lane $L$}   & \multicolumn{2}{c}{Lane $R$}  \\[-0.85em] \\ 
\cline{2-5}\\[-0.85em] & State   & Variance    & State        & Variance    \\[-0.85em]  \\ \hline \\[-0.85em]
\begin{tabular}[c]{@{}l@{}}$t_0^k:$ Initialization  / \\ \hspace{0.4cm} End of prev. cont. period\\ \hspace{0.4cm} (Before lane-switch)\end{tabular}       & $z_0^{L, k}$         & $var_0^{L, k}$   & $z_0^{R, k}$     & $var_0^{R, k}$       \\ \\[-0.85em]
\begin{tabular}[c]{@{}l@{}}$t_0^k:$ AV enters lane $L$\\\hspace{0.4cm} (After lane-switch)\end{tabular}    & $\tilde{z}_0^{L, k}$ & $\tilde{var}_0^{L, k} = var_0^{L, k} + \Delta_{u\rightarrow c}(z_0^{L, k})$ & $\tilde{z}_0^{R, k}$               & $\tilde{var}_0^{R, k} = var_0^{R, k} + \Delta_{c\rightarrow u}(z_0^{R, k})$  \\ \\[-0.85em]
\multicolumn{5}{c}{\hrulefill \;\; $[t_0^k, t_1^k]$: AV controls lane $L$ \;\;\hrulefill } \\ \\[-0.85em]
\begin{tabular}[c]{@{}l@{}}$t_1^k:$ End of cont. period \\\hspace{0.4cm} (Before lane-switch)\end{tabular} & $z_1^{L, k}$         & $var_1^{L, k}$        & $z_1^{R, k}$     & $var_1^{R, k}$       \\ \\[-0.85em]
\begin{tabular}[c]{@{}l@{}}$t_1^k:$ AV enters lane $R$\\\hspace{0.4cm} (After lane-switch)\end{tabular}    & $\tilde{z}_1^{L, k}$ & $\tilde{var}_1^{L, k} = var_1^{L, k} + \Delta_{c\rightarrow u}(z_1^{L, k})$ & $\tilde{z}_1^{R, k}$               & $\tilde{var}_1^{R, k} = var_1^{R, k} + \Delta_{u\rightarrow c}(z_1^{R, k})$ \\ \\[-0.85em]
\multicolumn{5}{c}{\hrulefill \;\; $[t_1^k, t_0^{k+1}]$: AV controls lane $R$ \;\;\hrulefill}    \\  \\[-0.85em]
$t_0^{k+1}$ End of cont. period    & $z_0^{L, k+1}$       & $var_0^{L, k+1}$     & $z_0^{R, k+1}$ & $var_0^{R, k+1}$    \\[-0.85em]   \\ \hline
\end{tabular}\vspace{-0.2cm}
}
\label{tab:notation_var}
\end{table*}
\vspace{-0.3cm}
\subsection{Variance-based Stability Metric}
\label{sec:analysis_variance}
In a stable system, all vehicles on each lane maintain uniform headway and velocity, resulting in low variance; conversely, stop-and-go waves in an unstable system disrupt the uniformity, causing fluctuations in the headways and velocities as the vehicles' decelerate or accelerate, resulting in high variance. We formalize this intuition by designing a varianced-based stability metric as follows. \slirebuttal{Specifically, we consider the state of each vehicle as individual samples, and use the sample variance of the headways and velocities of the vehicles on each lane as the stability metric for the respective lane.}  

\begin{definition}[Stability Metric]
\label{def:stability_metric}
For each lane $l \in \{L, R\}$, we define the variance-based stability metric as 
\begin{equation}
    \begin{aligned}
        var^l(t) &= variance(\{s^l_i(t)\}_{i}) + variance(\{v^l_i(t)\}_{i})
    \end{aligned}
    \label{eq:variance}
\end{equation}
where the variances are taken over all vehicles on the lane; the \slirebuttal{sample} variance operator applying on a set $\{\delta_i\}_{i=1}^{m}$ of size $m$ is $variance(\{\delta_i\}_{i=1}^{m}) = \frac{1}{m}\sum\limits_{i=1}^{m} \delta_i^2 - \slirebuttal{\Big(}\frac{1}{m}\sum\limits_{i=1}^{m} \delta_i\slirebuttal{\Big)}^2$. \\
\slirebuttal{The stability metric for the two-lane system is then defined as 
\begin{equation}
V_{m}(t) = var^L(t) + var^R(t), \text{ for each mode } m\in \{m^L, m^R\}, 
\label{eq:variance_twolane}
\end{equation}
which serve as the Lyapunov function for the Hybrid system.
}
\end{definition}
\vspace{-0.6cm}
\slirebuttal{\begin{remark}
The variance-based stability metric can be seen as ellipsoidal sets centered on the average state that dynamically changes within each mode and between mode transitions. We use this metric because (i) it has ideal convergence property, as the system reaches equilibrium when the variances are zero; (ii) it is challenging to establish meaningful stability upper bounds for AV lane-switches directly with the error state norm $\|x^l(t)\|_2^2 = \|z^l(t) - z^*_{n_l}\|_2^2$, which has been used as Lyapunov functions for the single-lane system~\cite{li2023stabilization}: as this metric is centered on the equilibrium state $z^*_{n_l}$ that is constant within each mode but changes at each mode transition, it complicates the closed-form derivation of the changes in $\|x^l(t)\|_2^2$ when the AV switches lanes; (iii) in contrast, the variance-based stability metric automatically adapts to equilibrium changes by centering on the average state, resulting in interpretable closed-form expressions for the variances change when AV switches lanes (Sec.~\ref{sec:analysis_jump}). This allows us to  explain emergent traffic phenomena (Sec.~\ref{sec:controller_fixed_duration}) and design controllers (Sec.~\ref{sec:controller}).
\end{remark}}

Next, we examine the stability of each lane using the variance-based metric; in essence, the overall two-lane system is stable when both lanes reach low-variance states.
\vspace{-0.2cm}
\subsection{Multi-lane Stability Analysis}
\label{sec:analysis_summary}
Table~\ref{tab:notation_var} \slirebuttal{presents} the notation for states and variances\slirebuttal{, where we introduce additional superscripts and subscripts to refer to the states and variances evaluated at specific key} time points during a round $k$.
The variance notation for lane $L$ aligns with those displayed in Fig.~\ref{fig:round_notation}, \sli{where we denote the variances (i) at the beginning of the round $k$ as $var_0^{L, k}$, (ii) after the AV enters lane $L$ as $var_0^{L, k} + \Delta_{u\rightarrow c}(z_0^{L,k})$, \slirebuttal{where $z_0^{L,k}$ is the state of lane $L$ right before the AV enters, and $\Delta_{u\rightarrow c}(z_0^{L,k})$ denotes the change in lane $L$'s variance after the AV enters}, (iii) after the AV controls lane $L$ for a period of $[t_0^k, t_1^k]$ as $var_1^{L,k}$, (iv) after the AV exits the lane $L$ as $var_1^{L,k} + \Delta_{c\rightarrow u}(z_1^{L,k})$, \slirebuttal{where $z_1^{L, k}$ and $\Delta_{c\rightarrow u}(z_1^{L,k})$ are similarly defined when the AV exits}, and (v) after the lane stays uncontrolled for a period of $[t_1^k, t_0^{k+1}]$ as $var_0^{L, k+1}$.} The variance metric allows us to measure the stability of each lane $l \in \{L, R\}$, \sli{by unwrapping the hybrid system dynamics (i) - (v)}. This leads to a tug-of-war among: 
\begin{enumerate}
\item the continuous controlled period\slirebuttal{s}, which generally reduce the variance. For instance, generally, $var_0^{L, k} + \Delta_{u\rightarrow c}(z_0^{L, k}) < var_1^{L, k}$, whose magnitude depends on the duration $t_1^k - t_0^k$;
\item the continuous uncontrolled period\slirebuttal{s}, which generally increase the variance. For instance, generally, $var_1^{L, k} + \Delta_{c\rightarrow u}(z_1^{L, k}) > var_0^{L, k+1}$, whose magnitude depends on the duration $t_0^{k+1} - t_1^k$;
\item the discrete jumps, which generally increase the variance $\Delta_{u\rightarrow c}, \Delta_{c\rightarrow u} > 0$) through altering the headways (and subsequently, velocities) of the leading and \slirebuttal{lagging} HVs when \slirebuttal{the} AV switches lanes.
\end{enumerate}
If the final variance is smaller than the initial variance ($var_0^{l, k+1} < var_0^{l, k}$), the variance of the lane decreases within the round $k$, after which the lane starts a new round $k+1$ with a lower initial variance $var_0^{l, k+1}$. \slirebuttal{Moreover,} the variances of both lanes are interconnected\slirebuttal{:} if the AV primarily focuses on controlling one lane, it would lead to a high variance on the other lane \slirebuttal{due to an extended uncontrolled period}. Decreasing variances for both lanes in successive rounds promote\slirebuttal{s} the system to become stable, whereas continual increasing variance for some lane leads to instability of the lane and the overall two-lane system. 

\vspace*{0.1cm}
\noindent \textbf{General Theorem.} The following Theorem~\ref{thm:general_round} formally characterizes the variance changes in a two-lane mixed-autonomy system during each round $k$, \sli{through upper bounding the variance changes after each continuous period and discrete jump}. The theorem can be applied to different continuous dynamics and discrete jumps, and further to a broader class of hybrid systems with the same two-mode structure as in Fig.~\ref{fig:round_notation} (left). We later apply the general theorem to analyze the two-lane traffic system with the specific discrete jump specification. 
\begin{theorem}
\label{thm:general_round} For each lane $l \in \{L, R\}$, consider a round $k$ within a time period $[t_0^k, t_0^{k+1}] = [t^k_0, t^k_1] \cup [t_1^k, t_0^{k+1}]$ as described in Fig.~\ref{fig:round_notation} and Table~\ref{tab:notation_var}, where each lane has an initial variance $var_0^{l, k}$ at $t^k_0$. Suppose
\begin{enumerate}
\item \slirebuttal{$f_c$ and $f_u$ are differentiable functions such that} the variances at the continuous controlled and uncontrolled periods are upper bounded by $var^l(t) \leq f_c(var_0^l, t)$ and $var^l(t) \leq f_u(var_0^l, t)$ where $t \geq 0$ represents the controlled or uncontrolled duration and $var_0^l$ represents the initial variance at a duration $t = 0$. 
\item the discrete jumps (from AV entering and exiting the lane) change the variances by ${\Delta}_{u\rightarrow c}(z^l)$ and ${\Delta}_{c\rightarrow u}(z^l)$,  where $z^l$ denote the state of the lane before the jump. 
\end{enumerate}
Then, the variance of lane $L$ after the round $k$ is upper bounded by $var_0^{L,k+1}$, where we have
\begin{equation}
    \begin{aligned}
        var_1^{L, k} & \leq f_{c}(var_0^{L, k} + \Delta_{u\rightarrow c}(z_0^{L, k}), t^{L,k}_1-t^{L,k}_0), \\
        var_0^{L, k+1} & \leq f_{u}(var_1^{L,k} + \Delta_{c\rightarrow u}\big(z_1^{L,k}), t_0^{L, k+1} - t_1^{L, k}).
    \end{aligned}
    \label{eq:general_round_1}
\end{equation}
The variance of lane $R$ after the round $k$ is upper bounded by $var_0^{R,k+1}$, where we have
\begin{equation}
    \begin{aligned}
        var_1^{R, k} & \leq f_{u}(var_0^{R, k} + \Delta_{c\rightarrow u}\big(z_0^{R,k}), t^{k}_1-t^{k}_0),\\
        var_0^{R, k+1} & \leq f_{c}(var_1^{R, k} + \Delta_{u\rightarrow c}(z_1^{R, k}), t_0^{R, k+1} - t_1^{R, k}).
    \end{aligned}
    \label{eq:general_round_2}
\end{equation}
where $z_0^{L, k}, z_1^{L, k}, z_0^{R, k}, z_1^{R, k}$ are the states of lane $L$ and $R$ at $t_0^k$ and $t_1^k$ before AV switches lanes.

\slirebuttal{From Eq.~\eqref{eq:variance_twolane}, the stability metric for the two-lane system is given by a sum of the variances on both lanes (e.g. $var_0^{L, k} + var_0^{R, k}, var_1^{L, k} + var_1^{R, k}, var_0^{L, k+1} + var_0^{R, k+1}$);} the system becomes stable if \slirebuttal{the upper bounds on all the corresponding variance terms in Eq.~\eqref{eq:general_round_1} and~\eqref{eq:general_round_2}} remain small after some round $\bar{k} \geq 0$, but unstable \slirebuttal{otherwise}.
\end{theorem}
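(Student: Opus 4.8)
The plan is to prove Theorem~\ref{thm:general_round} by a straightforward composition (telescoping) argument: within one round each lane passes through exactly four sub-phases --- an entering/exiting discrete jump, a continuous phase, the complementary discrete jump, and a second continuous phase --- and the two hypotheses of the theorem are precisely per-sub-phase bounds, so the stated inequalities follow by chaining them while keeping track of which lane is controlled on which sub-interval.

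First I would fix a round $k$ on $[t_0^k,t_0^{k+1}] = [t_0^k,t_1^k]\cup[t_1^k,t_0^{k+1}]$ and record the sub-phase structure dictated by the hybrid automaton of Sec.~\ref{sec:analysis} with the AV starting in lane $L$: at $t_0^k$ the AV enters lane $L$ and simultaneously exits lane $R$; on $[t_0^k,t_1^k]$ lane $L$ is controlled while lane $R$ is uncontrolled; at $t_1^k$ the AV exits lane $L$ and enters lane $R$; on $[t_1^k,t_0^{k+1}]$ lane $L$ is uncontrolled while lane $R$ is controlled. For lane $L$: the second hypothesis gives the post-jump variance $var_0^{L,k}+\Delta_{u\rightarrow c}(z_0^{L,k})$ immediately after the AV enters at $t_0^k$; resetting the clock to zero at the start of this continuous segment and invoking the first hypothesis with the $f_c$ bound over the controlled duration $t_1^k-t_0^k$ yields $var_1^{L,k}\leq f_c\big(var_0^{L,k}+\Delta_{u\rightarrow c}(z_0^{L,k}),\,t_1^k-t_0^k\big)$; applying the second hypothesis again at $t_1^k$ gives the post-jump variance $var_1^{L,k}+\Delta_{c\rightarrow u}(z_1^{L,k})$, and a second application of the first hypothesis, now with the $f_u$ bound over the uncontrolled duration $t_0^{k+1}-t_1^k$, yields $var_0^{L,k+1}\leq f_u\big(var_1^{L,k}+\Delta_{c\rightarrow u}(z_1^{L,k}),\,t_0^{k+1}-t_1^k\big)$. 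These are exactly the two lines of Eq.~\eqref{eq:general_round_1}. The lane-$R$ bounds of Eq.~\eqref{eq:general_round_2} follow by the mirror-image argument, with the roles of $f_c$ and $f_u$ (and of $\Delta_{u\rightarrow c}$ and $\Delta_{c\rightarrow u}$) interchanged, because on this round lane $R$ is first uncontrolled --- with the AV exiting at $t_0^k$ --- and then controlled --- with the AV entering at $t_1^k$. Since this holds for every $k\in\mathbb{N}$, the bounds propagate along the whole trajectory.

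For the concluding statement about the two-lane system, I would substitute the first inequality of each pair into the second to eliminate $var_1^{l,k}$, producing a round-to-round recursion that upper bounds $var_0^{L,k+1}$ and $var_0^{R,k+1}$ purely in terms of $var_0^{L,k}$, $var_0^{R,k}$ and the sub-phase durations; adding the two gives a recursion for the Lyapunov function $V_m(t)=var^L(t)+var^R(t)$ evaluated at the round boundaries. Iterating, the two-lane system is stable when these composed maps are eventually non-expanding, so the variance upper bounds stay bounded (and contract) after some $\bar k$, and unstable when they grow without bound --- which is how Sec.~\ref{sec:controller_fixed_duration} and Sec.~\ref{sec:controller} later specialize the theorem to concrete controllers.

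The real crux lies not in Theorem~\ref{thm:general_round} itself --- which is essentially a clean bookkeeping lemma once the per-sub-phase ingredients are named --- but in the two places it rests on. First, one must actually exhibit the differentiable bounding functions $f_c,f_u$ and the jump increments $\Delta_{u\rightarrow c},\Delta_{c\rightarrow u}$ with usable closed forms for the linearized OVM dynamics and the lane-switch reset map; this is the substantive derivation deferred to Sec.~\ref{sec:variance_derivation}. Second, the substitution used in the concluding stability argument requires $f_c$ and $f_u$ to be monotone non-decreasing in their first (variance) argument, so that replacing $var_1^{l,k}$ by its upper bound preserves the inequality; I would flag verifying this monotonicity for the concrete $f_c,f_u$ as the one side-condition that must be checked before the round-to-round recursion is rigorous.
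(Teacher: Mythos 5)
Your proposal is correct and follows essentially the same route as the paper's proof, which simply unwraps the round into its four sub-phases and chains the per-phase bounds $f_c$, $f_u$, $\Delta_{u\rightarrow c}$, $\Delta_{c\rightarrow u}$ in the order dictated by which lane is controlled. Your added remark that monotonicity of $f_c,f_u$ in the variance argument is needed only when composing the two inequalities into a round-to-round recursion (and not for Eq.~\eqref{eq:general_round_1}--\eqref{eq:general_round_2} themselves, whose first arguments are exact post-jump variances) is a correct and worthwhile observation, satisfied by the exponential forms in Cor.~\ref{cor:traffic_round} since $\alpha_1,\beta_1>0$.
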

\begin{proof}
Eq.~\eqref{eq:general_round_1} and~\eqref{eq:general_round_2} follows by unwrapping the dynamics with the corresponding upper bounds, applying the variances upper bounds $f_c, f_u$ at the continuous controlled and uncontrolled periods and the discrete jumps $\Delta_{u\rightarrow c}, \Delta_{c\rightarrow u}$. 
\end{proof}
\vspace*{-0.1cm}
\noindent \textbf{Two-Lane Traffic Specification.} 
\sli{The variance upper bounds ($f_c, f_u, \Delta_{c\rightarrow u}, \Delta_{u\rightarrow c}$) in the general theorem can be specified to the specific two-lane mixed-autonomy system in Sec.~\ref{sec:prelim}.} The following corollary presents the resulting variance upper bounds, and we derive the corollary in the next Sec.~\ref{sec:variance_derivation}. 

\sli{At a high level, the variance upper bounds for continuous periods is derived from both a Lyapunov analysis, which captures the asymptotic behaviors of the controlled and uncontrolled periods (variance eventually decreases and increases, respectively), and a state-dependent analysis, which refines the short-term behavior of the uncontrolled period (initial variance decrease followed by eventual increase)\slired{, as illustrated in Fig.~\ref{fig:statedep}}. Meanwhile, the magnitude of the variance changes at discrete jumps is affected by the interaction between the AV and the leading/lagging HVs.} The variance upper bounds for lane $L$ and $R$ can be similarly derived due to the symmetric structure. 
\begin{corollary}
\label{cor:traffic_round}
Consider a round $k$ within a time period $[t_0^k, t_0^{k+1}] = [t_0^k, t_1^k] \cup [t_1^k, t_0^{k+1}]$ as described in Fig.~\ref{fig:round_notation} and Table~\ref{tab:notation_var}, where lane $L$ has an initial variance $var^{L, k}_0$ \slirebuttal{at $t^k_0$ before the AV enters lane $L$}. The variance of lane $L$ after the round $k$ is upper bounded by $var_0^{L,k+1}$, with
\begin{equation}
    \begin{aligned}
    var_1^{L, k} & \leq \alpha_1(var_0^{L, k} + \Delta_{u\rightarrow c}(z_0^{L, k}))\exp(-\alpha_2(t^{1,k}_1-t^{1,k}_0)), \\
    var_0^{L, k+1} & \leq \beta_1 (var_1^{L, k} + \Delta_{c \rightarrow u}(z_0^{L, k}))\exp(\beta_2(t_0^{L, k+1} - t_1^{L, k}))
    \end{aligned}
    \label{eq:traffic_1}
\end{equation}
where  $\alpha_1, \alpha_2, \beta_1, \beta_2 > 0$, and
\begin{equation}
\begin{aligned}
    \Delta_{u\rightarrow c}(z_0^{L, k}) = &  - \frac{1}{n}var_0^{L,k} - \frac{1}{n}2a_0^{L,k} b_0^{L,k} + \frac{1}{n^2(n-1)}C^2  \\
    & + \frac{1}{n^2(n-1)}(-(n-1)v_{n, 0}^{L,k} + \sum\limits_{i=1}^{n-1} v_{i, 0}^{L,k})^2,
    \label{eq:traffic_31}
\end{aligned}
\end{equation}
\begin{equation}
    \begin{aligned}
    \Delta_{c\rightarrow u}(z_1^{L, k}) = &  \frac{1}{n-1}var_1^{L,k}  + \frac{1}{n-1}2a_1^{L,k}b_1^{L,k} - \frac{1}{(n-1)^2n}C^2 \\
    & - \frac{1}{(n-1)^2n}(-(n-1)v_{n,1}^{L,k} + \sum\limits_{i=1}^{n-1} v_{i, 1}^{L,k})^2,
    \end{aligned}
    \label{eq:traffic_32}
\end{equation}
\noindent where $a_0^{L, k}, b_0^{L, k}, a_1^{L, k}, b_1^{L, k}$ are the headways between AV and the leading and lagging HVs, and $v_{i, 0}^{L, k}, v_{i, 1}^{L, k}$ are the velocities of the HVs ($1 \leq i \leq n-1$) and the AV ($i=n$), both at $t_0^k$ and $t_1^k$ before AV switches lanes (see notation in Fig.~\ref{fig:laneswitch_notation}). 

Let $\epsilon > 0$, if the initial state $\tilde{z}_1^{L, k}$ for the uncontrolled period (the state immediately after the AV exits the lane, with variance $var_1^{L, k} + \Delta_{u \rightarrow c}(z_0^{L, k})$) is close to a fixed initial state $\bar{z}_{0,u}$, i.e. $\|\tilde{z}_1^{L, k} - \bar{z}_{0, u}\|_2 \leq \delta_\epsilon$ for some $\delta_\epsilon \geq 0$, we have the tightened variance upper bound
\slired{\begin{equation}
    \begin{aligned}
        var_0^{L, k+1} \leq 
        \begin{cases}  \overline{var}_u(t_0^{k+1} - t_1^{k}) + \epsilon, \hspace{0.4cm} \text{if } t_0^{k+1} - t_1^{k} \in [0, t_\epsilon] \\ 
        \beta_1 (\overline{var}_u(t_\epsilon) + \epsilon)\exp(\beta_2(t_0^{k+1} - t_1^{k} - t_\epsilon)) , \\
        \hspace{3.7cm} \text{if } t_0^{L, k+1} - t_1^{L, k} \in [t_\epsilon, \infty]\end{cases}
    \end{aligned}
    \label{eq:traffic_2}
\end{equation}}
\begin{figure}
    \centering
    \includegraphics[width=0.3\textwidth]{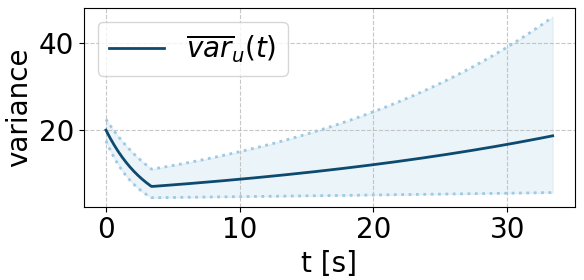}
    \caption{\sli{\textbf{Illustration of the State-Dependent Variance Upper Bound}. For a lane $l \in \{L, R\}$, consider an uncontrolled period with a trajectory \slirebuttal{$z_u^l(t)$}, whose initial state $z_{0}$ is close to the fixed initial state $\bar{z}_{0, u}$ (within the shaded blue region at $t = 0$), where $\bar{z}_{0, u}$ is the state immediately after the AV exits a stable lane at equilibrium, as depicted in Fig.~\ref{fig:control_uncontrol_singlelane} (f). Due to the continuous dependence of the system's solution on the initial state (Sec.~\ref{sec:analysis_uncontrol_var}), the variance of the trajectory $var^l(t)$ with initial state $z_{0}$ stays close to the variance $\overline{var}_u(t)$ of the nominal trajectory with initial state $\bar{z}_{0, u}$ (the solid dark blue curve); the width of the region, which upper bounds the distance between $var^l(t)$ and $\overline{var}_u(t)$, increases as $t$ increases. The state-dependent upper bound of $var^l(t)$ hence follows a similar rate of decrease as $\overline{var}_u(t)$ at some initial time period $[0, t_\epsilon]$, when the upper bound is relatively tight.\vspace*{-0.2cm}}}
    \label{fig:statedep}
\end{figure}
where $\overline{var}_u(t)$ is the variance of the trajectory $\bar{z}_{u}(t)$ with the fixed initial state $\bar{z}_{0,u}$ as depicted in Fig.~\ref{fig:control_uncontrol_singlelane} (f-j), which decreases in the initial period $[0, t_\epsilon]$. 

Moreover, let $\bar{z}_u(t)$ and $\bar{z}_c(t)$ denote trajectories under some fixed initial states $\bar{z}_{0, u}$ and $\bar{z}_{0, c}$. If the initial states for the controlled and uncontrolled periods are close to the corresponding fixed initial states $\|\tilde{z}_0^{L, k} - \bar{z}_{0, c}\|_2 \leq \delta_\epsilon$ and $\|\tilde{z}_1^{L, k} - \bar{z}_{0, u}\|_2 \leq \delta_\epsilon$, the states at the end of the controlled and uncontrolled periods $z_1^{L, k}$ and $z_0^{L, k+1}$ further satisfy 
\slired{\begin{equation}
\begin{aligned}
\|z_1^{L, k} - \bar{z}_c(t_1^{k} - t_0^{k})\|_2 \leq \epsilon, \;\;  \text{ if } t_1^{k} - t_0^{k} \in [0, t_\epsilon] \\
\|z_0^{L, k+1} - \bar{z}_u(t_0^{k+1} - t_1^{k})\|_2 \leq \epsilon \;\;\text{ if } t_0^{k+1} - t_1^{k} \in [0, t_\epsilon]  \\
\end{aligned}
\end{equation}}
\end{corollary}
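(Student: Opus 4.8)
The plan is to derive Corollary~\ref{cor:traffic_round} by instantiating Theorem~\ref{thm:general_round} for the two-lane OVM dynamics of Sec.~\ref{sec:prelim}: it suffices to exhibit differentiable bounding functions $f_c(v,t)=\alpha_1 v\exp(-\alpha_2 t)$ and $f_u(v,t)=\beta_1 v\exp(\beta_2 t)$ for the continuous controlled and uncontrolled phases, and to evaluate the jump increments $\Delta_{u\rightarrow c},\Delta_{c\rightarrow u}$ in closed form. The unifying observation I would use is that on each lane the stability metric equals the squared norm of the \emph{centered} error state: since $\sum_i\tilde s_i^l = C - n_l s^{*}_{n_l} = 0$ on the ring the headway sample mean vanishes identically, and a short computation gives $var^l(t) = \tfrac1{n_l}\big(\sum_i(\tilde s_i^l)^2 + \sum_i(\hat v_i^l)^2\big) = \tfrac1{n_l}\|\hat x^l(t)\|_2^2$, where $\hat v_i^l = \tilde v_i^l - \tfrac1{n_l}\sum_j\tilde v_j^l$ and $\hat x^l = [\tilde s_1^l,\hat v_1^l,\dots]^\intercal$. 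This converts all variance statements into norm statements, which is the natural setting for Lyapunov-type estimates.

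For the uncontrolled phase, I would first verify that the centered state obeys the \emph{same} linear dynamics as the full error state: substituting $\tilde v_i^l = \hat v_i^l + \bar{\tilde v}^l$ into Eq.~\eqref{eq:hv_ode} and using the OVM identities $a_2=\alpha+\beta$, $a_3=\beta$, the mean-velocity mode decouples, $\dot{\bar{\tilde v}}^l = (a_3-a_2)\bar{\tilde v}^l = -\alpha\bar{\tilde v}^l$, and one obtains $\dot{\hat x}^l = A_u\hat x^l$. Hence $var^l(t) = \tfrac1{n_l}\|e^{A_u t}\hat x^l(0)\|_2^2 \le \|e^{A_u t}\|_2^2\,var^l(0)$, and any submultiplicative estimate $\|e^{A_u t}\|_2 \le \beta_1^{1/2}\exp(\tfrac12\beta_2 t)$ (through the spectral abscissa of $A_u$, or simply $\beta_1=1$, $\beta_2=2\|A_u\|_2$) yields $f_u$. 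For the controlled phase I would invoke the given Lyapunov pair $(P,Q)$ for $A_c-B_cK$: from $\dot V = -(x^l_c)^\intercal Q x^l_c \le -\gamma_c V$ with $\gamma_c=\lambda_{\min}(Q)/\lambda_{\max}(P)$ we get $V(x^l_c(t))\le V(x^l_c(0))e^{-\gamma_c t}$, and together with $var^l(t)\le\tfrac1n\|x^l_c(t)\|_2^2\le\tfrac1{n\lambda_{\min}(P)}V(x^l_c(t))$ this produces exponential decay at rate $\alpha_2=\gamma_c$; the remaining step is the upper conversion from $\|x^l_c(0)\|_2^2$ to a multiple of $var^l(0)$, which I return to below.

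The jump increments I would compute directly from the reset map of Sec.~\ref{sec:prelim_multilane}: a lane-switch leaves every velocity unchanged and every headway unchanged except that on the exit lane the two gaps $a,b$ around the AV merge into $c=a+b$ while the vehicle count drops $n\to n-1$, with the reverse on the enter lane. Inserting the resulting ``before'' and ``after'' headway multisets (which always sum to $C$) and velocity multisets (which gain or lose the single entry $v_n$) into $variance(\{\delta_i\}_1^m) = \tfrac1m\sum\delta_i^2-(\tfrac1m\sum\delta_i)^2$ and simplifying via $\sum_i\tilde s_i = 0$, the headway cross terms collapse to $2ab$ and the velocity cross terms to the single residual $\big((n-1)v_n-\sum_{i=1}^{n-1}v_i\big)^2$, reproducing Eqs.~\eqref{eq:traffic_31}--\eqref{eq:traffic_32}; the only care needed is to track the two normalizations $\tfrac1n$ versus $\tfrac1{n-1}$ consistently. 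Finally, the tightened state-dependent bound Eq.~\eqref{eq:traffic_2} and the trajectory-closeness estimates follow from continuous dependence of ODE solutions on initial data (Gr\"onwall): if $\|\tilde z_1^{L,k}-\bar{z}_{0,u}\|_2\le\delta_\epsilon$ then $\|z_u^L(t)-\bar{z}_u(t)\|_2$ can be made arbitrarily small on the bounded horizon $[0,t_\epsilon]$, and since $var^l$ is locally Lipschitz in the state this gives $var^l(t)\le\overline{var}_u(t)+\epsilon$ there; on $[0,t_\epsilon]$ one uses that $\overline{var}_u$ is decreasing (the short-horizon behavior established in Sec.~\ref{sec:analysis_uncontrol_var}), and for $t>t_\epsilon$ one restarts the bound $f_u$ from the state at $t_\epsilon$, whose variance is at most $\overline{var}_u(t_\epsilon)+\epsilon$; the two remaining estimates are the analogous Gr\"onwall comparison applied to the controlled and uncontrolled flows directly.

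I expect the principal obstacle to be the controlled-phase conversion. The variance metric is only positive semidefinite on the error state --- its kernel is the uniform-velocity-shift direction --- so $\|x^l_c(0)\|_2^2 = n\big(var^l(0) + (\bar{\tilde v}^l(0))^2\big)$ cannot in general be bounded by a multiple of $var^l(0)$; moreover, unlike the uncontrolled lane the centered state $\hat x^l_c$ no longer satisfies an autonomous linear ODE once the feedback $u^l=-Kx^l_c$ is active, and the reference velocity $v^{*}_{n_l}$ itself jumps when the vehicle count changes, so the mean-velocity deviation entering right after the AV enters a lane need not be commensurate with $var^l(0)$. Closing this gap --- either by a refined Lyapunov argument on the extended coordinate $(\hat x^l_c,\bar{\tilde v}^l)$, which does form a closed and (being a linear change of variables of the closed loop) Hurwitz system, together with a bound on $\bar{\tilde v}^l$ along the trajectory, or by a mild regularity restriction on the states at which lane-switches are permitted --- is the crux; the remaining ingredients ($f_u$, the $\Delta$ formulas, and the Gr\"onwall-based refinements) are comparatively routine.
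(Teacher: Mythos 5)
Your decomposition matches the paper's: Eq.~\eqref{eq:traffic_1} comes from a Gr\"onwall/Lyapunov bound on the continuous phases (the paper's Prop.~\ref{prop:cont_var_bd1}), the jump increments Eqs.~\eqref{eq:traffic_31}--\eqref{eq:traffic_32} from direct algebra on the reset map (Prop.~\ref{prop:lanechange}), and the tightened bound Eq.~\eqref{eq:traffic_2} and the trajectory-closeness estimates from continuous dependence on initial data (Thm.~\ref{thm:cont_init}, Cor.~\ref{cor:cont_init}, Prop.~\ref{prop:uncontrol_var}). Your jump computation and your Gr\"onwall refinement are essentially the paper's arguments. Where you genuinely diverge is the uncontrolled phase: you observe that because $\sum_i \tilde s_i^l = 0$ and $a_2 - a_3 = \alpha$, the mean-velocity mode decouples and the \emph{centered} error state satisfies the same linear ODE, so $var^l(t) = \tfrac{1}{n_l}\|\hat x^l(t)\|_2^2 \le \|e^{A_u t}\|_2^2\, var^l(0)$ with no further hypotheses. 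This is cleaner than the paper, which instead bounds $\|x_u^l(t)\|_2^2$ by Gr\"onwall and then needs a \emph{two-sided} norm equivalence $c_1\|x^l\|_2^2 \le var^l \le c_2\|x^l\|_2^2$ (Lemma~\ref{lemma:varnorm}) to convert back --- the lower half of which is exactly the step you show cannot hold unconditionally.

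The genuine gap is the one you flag yourself: bounding $\|x^l_c(0)\|_2^2 = n\,var^l(0) + n(\bar{\tilde v}^l(0))^2$ by a multiple of $var^l(0)$ in the controlled phase. You leave this open, offering two possible repairs without executing either. The paper does not close it by a refined argument; it closes it by fiat, via Assumption~\ref{assump:varnorm}: whenever some velocity deviates from $v^*_{n_l}$, there exist vehicles with velocities both strictly below and at-or-above $v^*_{n_l}$. Under that hypothesis the proof of Lemma~\ref{lemma:varnorm} splits $\{\tilde v_i^l\}$ into sign classes and shows $\tfrac{1}{n_l^2}\big(\sum_i \tilde v_i^l\big)^2 \le \tfrac{\max\{n_l^-,n_l^+\}}{n_l^2}\|\tilde v^l\|_2^2$ with $\max\{n_l^-,n_l^+\} < n_l$, which yields $variance(\{\tilde v_i^l\}) \ge c_1\|\tilde v^l\|_2^2$ and hence the needed lower bound; the assumption is justified as typical of stop-and-go traffic and validated empirically (Fig.~\ref{fig:lemma_varnorm}), not proved. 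So your proposal is incomplete exactly where the paper substitutes an assumption for a proof; to match the corollary as stated you would need to import Assumption~\ref{assump:varnorm} (or prove your proposed invariance of the extended coordinate $(\hat x^l_c, \bar{\tilde v}^l)$ with a bound on $\bar{\tilde v}^l(0)$ at switch times, which the paper does not do). Everything else in your outline is sound and, for the uncontrolled phase, strictly less assumption-dependent than the published route.
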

\vspace{-0.4cm}
\begin{proof} The variance bounds in the Corollary are derived in Sec.~\ref{sec:variance_derivation}. The Lyapunov-based variance bound (Eq.~\eqref{eq:traffic_1}) is derived in Prop.~\ref{prop:cont_var_bd1}; the closed-form expressions for the discrete jumps $\Delta_{u\rightarrow c}$ and $\Delta_{c \rightarrow u}$ are derived in Prop.~\ref{prop:lanechange}. The state-dependent tightened variance bound for the uncontrolled period (Eq.~\eqref{eq:traffic_2}) is derived in Prop.~\ref{prop:uncontrol_var}, and the state-dependent bounds on the actual trajectory are derived in Cor.~\ref{cor:cont_init}.
\end{proof}
\vspace*{-0.3cm}
The Corollary offers a systematic explanation for emergent traffic phenomena of stabilizing a two-lane traffic system with a single AV, where we present a detailed analysis in Sec.~\ref{sec:controller_fixed_duration}. At a high level, suppose $t_1^{k}-t_0^{k} = t_0^{k+1}-t_1^{k} = T\;\forall k\in \mathbb{N}$. Traffic break occurs when the AV rapidly switches between lanes with $T \approx 0 \in [0, t_\epsilon]$, where the system essentially returns to the same state after every two  AV lane switches. The two-lane system can be stabilized to low-variance states due to minimal variance impact from discrete jumps $\Delta_{u\rightarrow c}(z_0^{L, k}) + \Delta_{c\rightarrow u}(z_1^{L,k}) \approx 0$, as well as variance reduction from both the continuous controlled and uncontrolled periods (see Eq.~\eqref{eq:traffic_1} and~\eqref{eq:traffic_2}).

When we reduce the lane-switch frequency by increasing $T$, the AV lane-switches lead to variance increases in the system, $\Delta_{u\rightarrow c}(z_0^{L, k}) + \Delta_{c\rightarrow u}(z_1^{L,k}) > 0$, due to the evolution of the system's state between lane switches. The stability of the two-lane system is compromised when $T$ is either (1) too small, such that the variance reduction from continuous period is insufficient to counterbalance the variance increase from the discrete jumps (see Eq.~\eqref{eq:traffic_1} and~\eqref{eq:traffic_2}), or (2) too large, which results in significant variance increases during the continuous uncontrolled periods when $T > t_\epsilon$ is large (see Eq.~\eqref{eq:traffic_2}). As a consequence, stabilizing the two-lane system with less frequent lane-switches require an appropriate $T$ so that the continuous period can effectively reduce variances to dissipate the impact from the discrete jumps. We refer the reader to Sec.~\ref{sec:controller_fixed_duration} for details.

\section{TRAFFIC SYSTEM VARIANCE SPECIFICATION}
\label{sec:variance_derivation}

\begin{figure*}
    \centering
    \begin{subfigure}{\textwidth}
        \label{fig:control_singlelane}
        \begin{subfigure}{0.12\textwidth}
        \includegraphics[width=\textwidth]{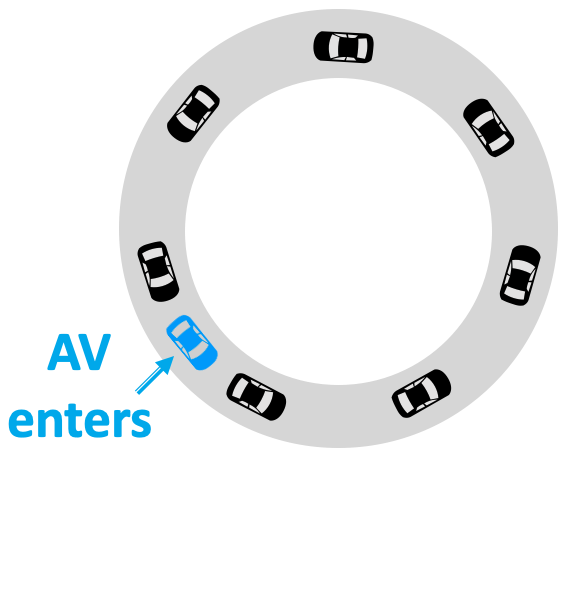}
        \caption{Controlled}
        \end{subfigure}
        \begin{subfigure}{0.21\textwidth}
        \includegraphics[width=\textwidth]{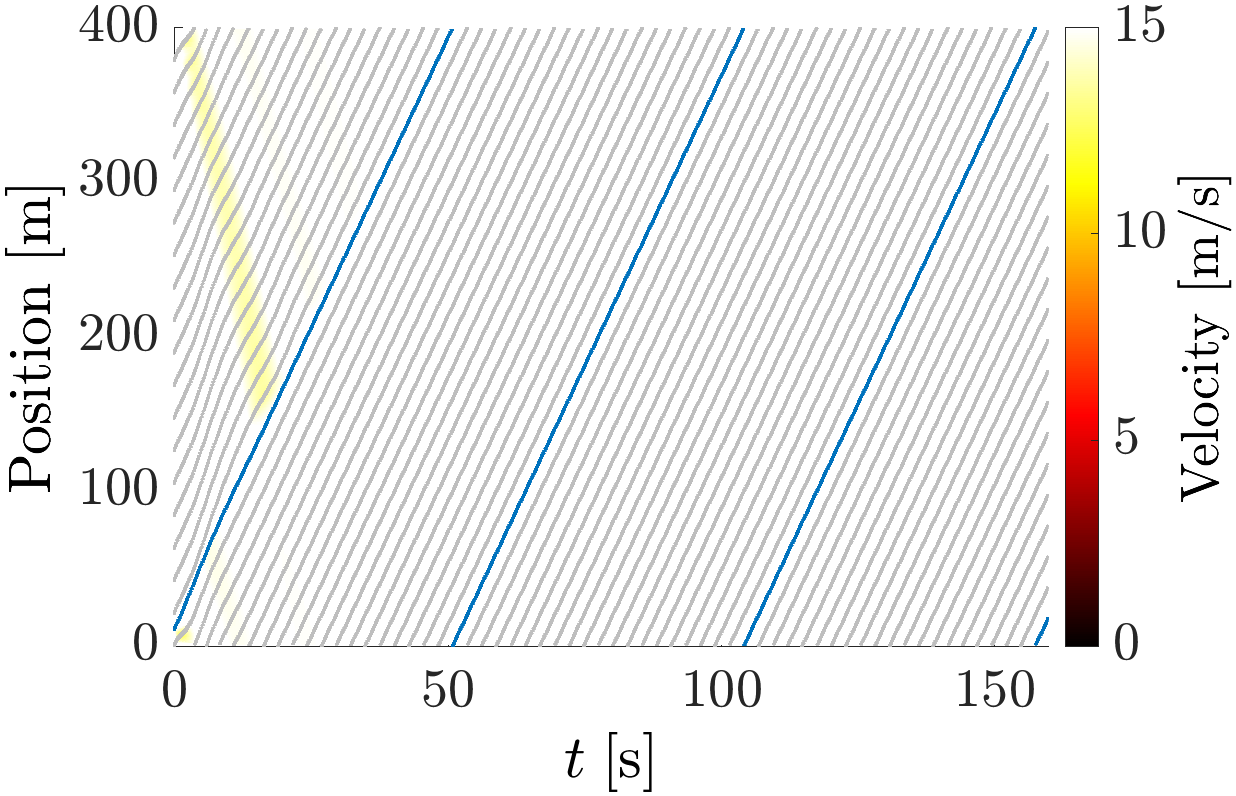}
        \caption{Time-Space Diagram}
        \end{subfigure}
        \begin{subfigure}{0.19\textwidth}
        \includegraphics[width=\textwidth]{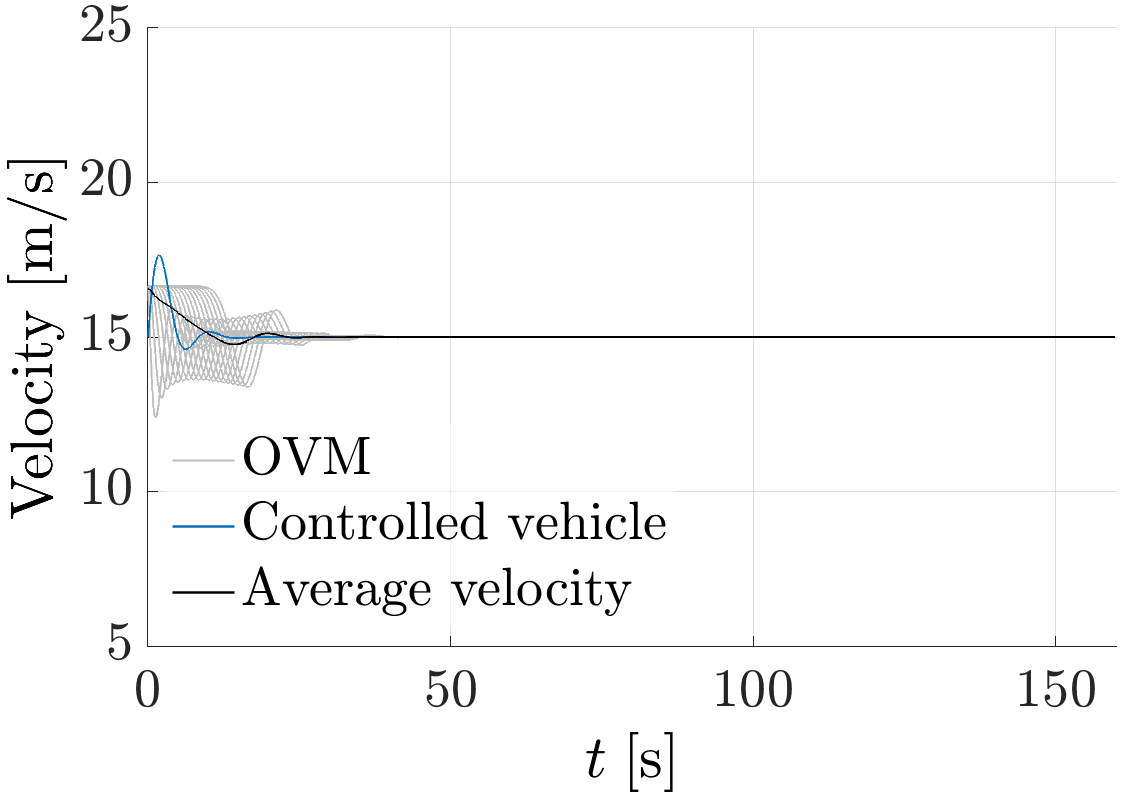}
        \caption{Time-Velocity Diagram}
        \end{subfigure}
        \begin{subfigure}{0.21\textwidth}
            \includegraphics[width=\textwidth]{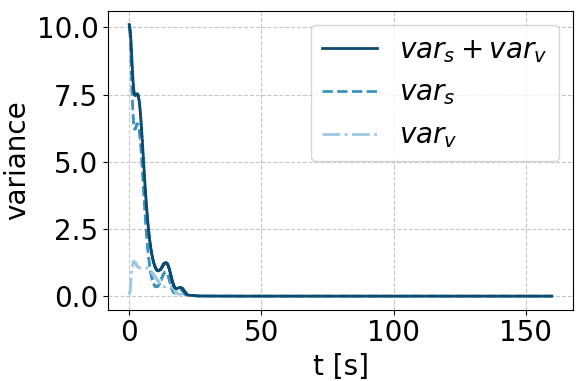}
        \caption{Variance}
        \end{subfigure}
        \begin{subfigure}{0.21\textwidth}
            \includegraphics[width=\textwidth]{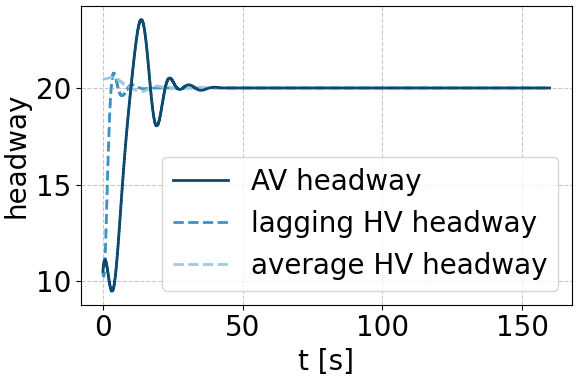}
        \caption{Headways}
        \end{subfigure}
    \end{subfigure}
    \begin{subfigure}{\textwidth}
        \label{fig:uncontrol_singlelane}
        \begin{subfigure}{0.12\textwidth}
        \includegraphics[width=\textwidth]{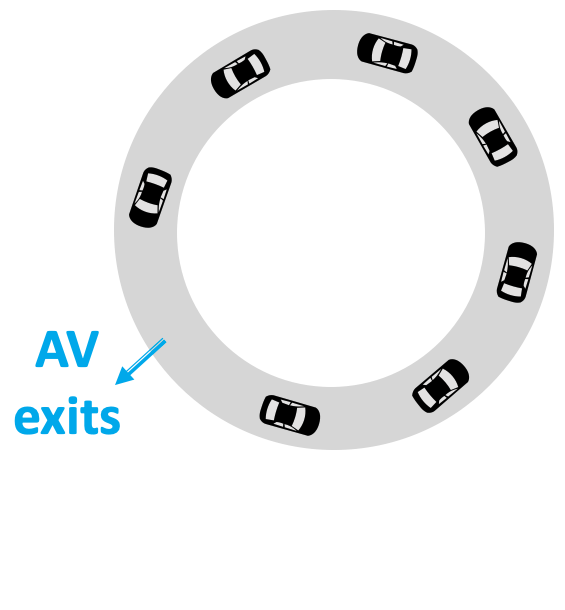}
        \caption{Uncontrolled}
        \end{subfigure}
        \begin{subfigure}{0.21\textwidth}
        \includegraphics[width=\textwidth]{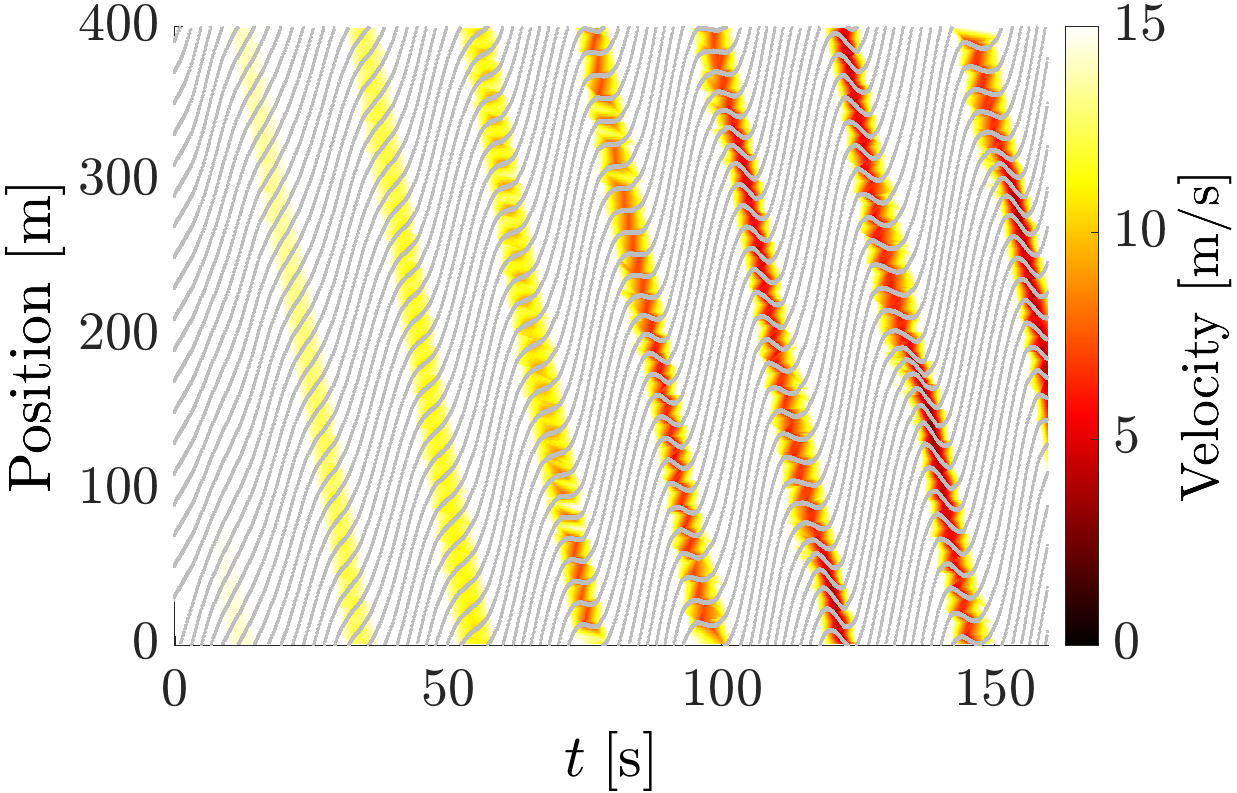}
        \caption{Time-Space Diagram}
        \end{subfigure}
        \begin{subfigure}{0.19\textwidth}
        \includegraphics[width=\textwidth]{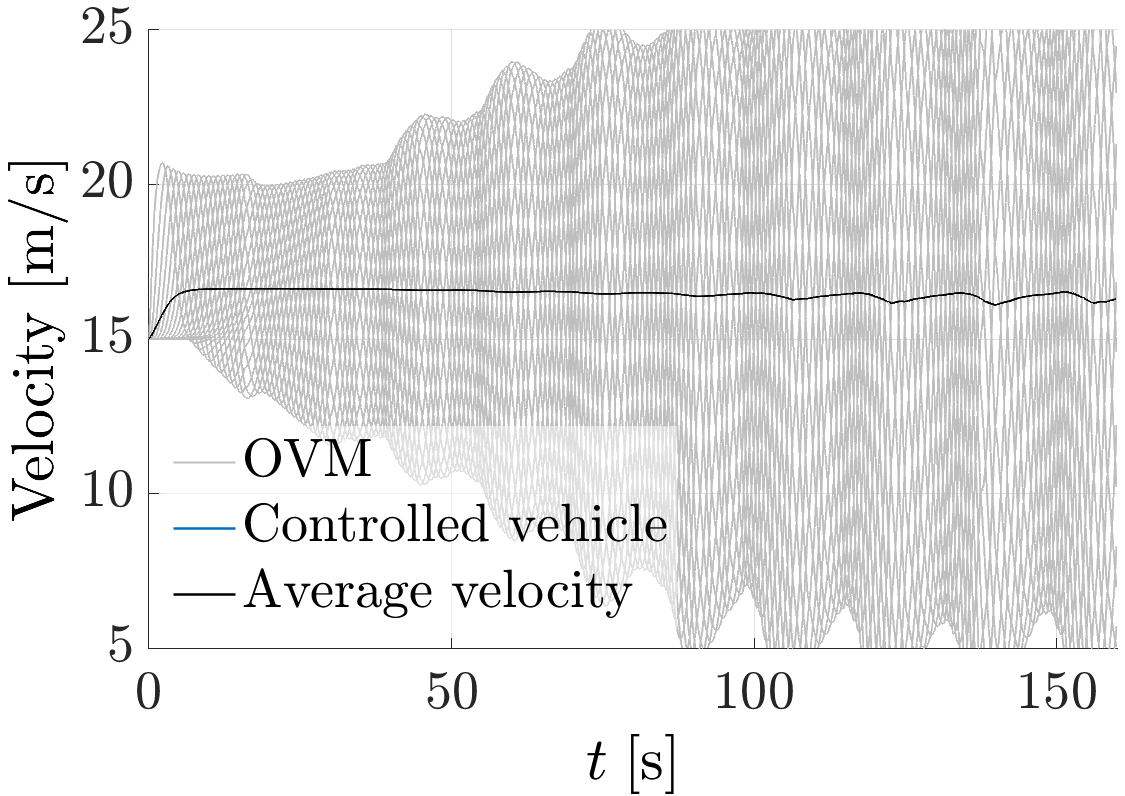}
        \caption{Time-Velocity Diagram}
        \end{subfigure}
        \begin{subfigure}{0.21\textwidth}
            \includegraphics[width=\textwidth]{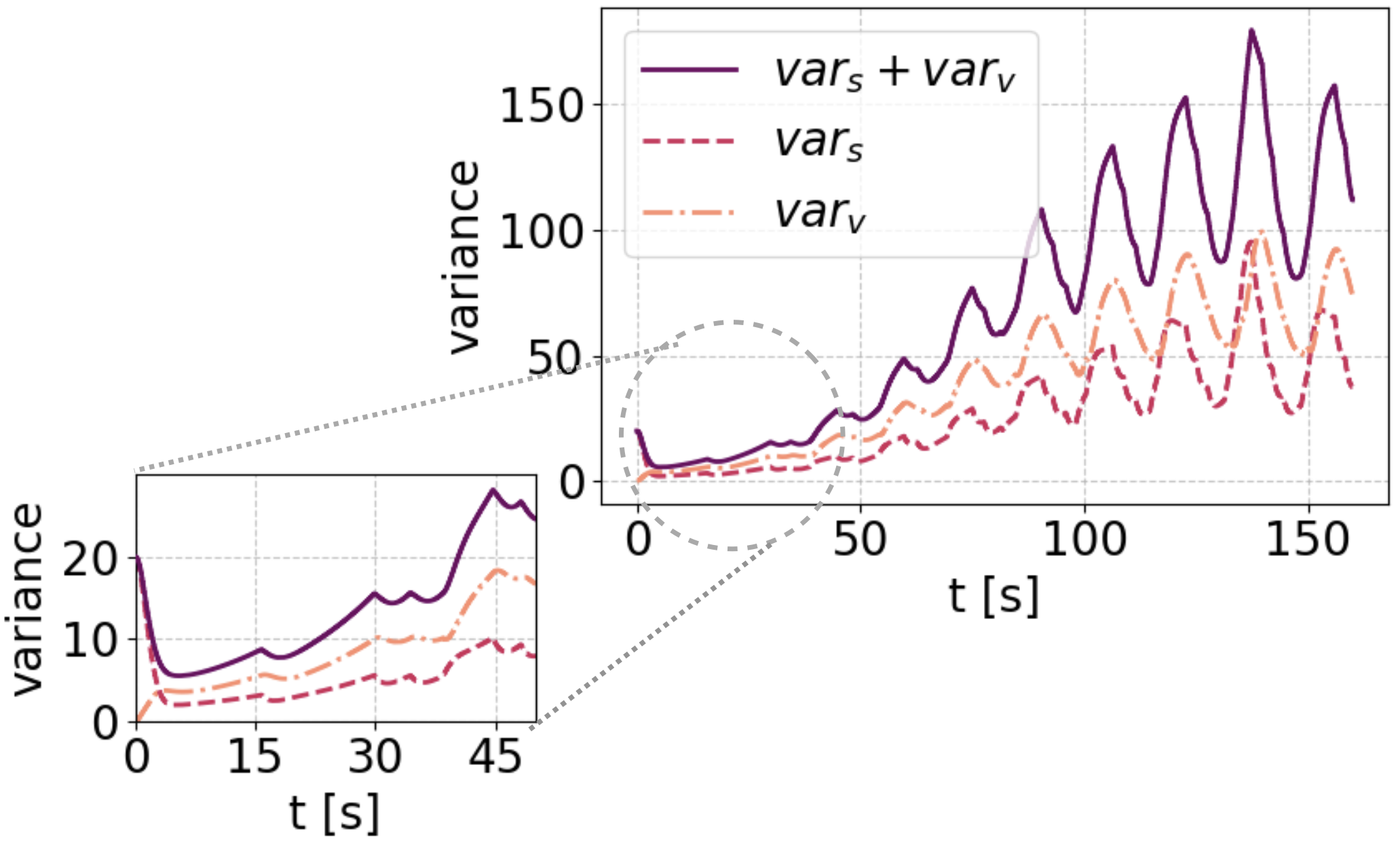}
        \caption{Variance}
        \end{subfigure}
        \begin{subfigure}{0.21\textwidth}
            \includegraphics[width=\textwidth]{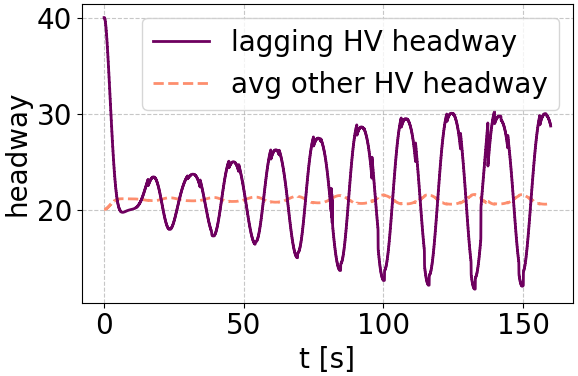}
        \caption{Headways}
        \end{subfigure}
    \end{subfigure}
    \caption{\textbf{Nominal trajectories of continuous controlled (top) and uncontrolled (bottom) periods under fixed initial states.} \\ (a, f) The initial states $\bar{z}_{0, c}, \bar{z}_{0, u}$ \sli{immediately after the AV enters or exits a stable lane}. (b, g) Time-space diagrams of the trajectories $\bar{z}_c(t), \bar{z}_u(t)$. (c, h) Time-velocity diagrams of the trajectories $\bar{z}_c(t), \bar{z}_u(t)$. (d, i) Variances throughout the trajectory $\overline{var}_c(t), \overline{var}_u(t)$ (controlled: decrease; uncontrolled: decrease followed by an increase). (e, controlled) Headway of the AV (increase followed by a decrease), the lagging HV after the AV enters, and the average headway of all HVs. (j, uncontrolled) Headway of the lagging HV prior to exiting (decrease followed by oscillation), and the average headway of all other HVs.}
    \label{fig:control_uncontrol_singlelane}\vspace{-0.2cm}
\end{figure*}

\subsection{Continuous Systems: Lyapunov Analysis}
\label{sec:analysis_control_var}
\slirebuttal{We derive the first set of variance upper bounds $f_c, f_u$ for continuous controlled and uncontrolled periods from Lyapunov analysis by converting the variance $var^l(t)$ to the error state norm $\|x^l(t)\|_2^2 = \|z^l(t) - z^*_{n_l}(t)\|_2^2$ in Lemma~\ref{lemma:varnorm}. We impose the following assumption to derive the lower bound of Lemma~\ref{lemma:varnorm}.}
\vspace{-0.4cm}
\begin{assumption}
Consider the trajectory $z^l(t)$ of lane $l$ with the headways and velocities $\{s_i^l(t)\}_{i=1}^{n_l}$ and $\{v_i^l(t)\}_{i=1}^{n_l}$. We assume that the trajectory satisfies the following condition: if $\exists k$ such that $v^l_k(t) \neq v^*_{n_l}$, then $\exists i, j$ such that $v^l_i(t) < v^*_{n_l}$ and $v^l_j(t) \geq v^*_{n_l}$.
\label{assump:varnorm}
\end{assumption}
\begin{remark}
Assump.~\ref{assump:varnorm} ensures that a low  $variance(\{v^l_i(t)\}_i)$ results in a small deviation $\|v^l(t) - v^*_{n_l}\|_2^2$, which we use to derive the lower bound of Lemma~\ref{lemma:varnorm}. 
Notably, the headway variance $variance(\{s_i^l(t)\}_i)$ always equals $\frac{1}{n_l}\|s^l(t) - s^*_{n_l}\|_2^2$, as the ring road structure ensures that the average headway (the variance ellipsoid center) always equals the equilibrium $s^*_{n_l}=\frac{C}{n_l}$; this is in contrast with the average velocity, which can deviate from $v^*_{n_l}$. Assump.~\ref{assump:varnorm} reflects typical unstable traffic systems, where stop-and-go waves are formed with vehicles moving at varied velocities (high and low) adjusted according to the varied headways, hence reflecting the implicit velocity regulation from the OVM dynamics due to the constant average headway. 

Assump.~\ref{assump:varnorm} can be further relaxed as the subspace defined by the condition is attractive: that is, the OVM dynamics will push the trajectory back to the subspace when it reaches outside. This can be shown by directly integrating the OVM formula in Eq.~\eqref{eq:ovm} with a state-dependent analysis; we leave as a future work to formally relax the assumption.
\end{remark}

We now establish a relationship between the variance and the error state norm $\|x^l(t)\|_2^2$. Notably, \slirebuttal{Lemma~\ref{lemma:varnorm} is \textit{empirically validated}} in Fig.~\ref{fig:lemma_varnorm} (see Sec.~\ref{sec:experiment_variance}  for details).
\begin{lemma} 
\label{lemma:varnorm}
At time $t$, consider a lane $l$ with an error state $x^l(t)$ and a variance $var^l(t)$. \slirebuttal{Under Assump.~\ref{assump:varnorm}}, we have
\begin{equation}
    c_1 \|x^l(t)\|_2^2 \leq var^l(t) \leq c_2 \|x^l(t)\|_2^2
\end{equation}
for some $c_1, c_2 > 0$.
\end{lemma}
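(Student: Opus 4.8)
The plan is to establish the two inequalities separately by relating the sample variance of each block of coordinates (headways and velocities) to the squared Euclidean norm of the corresponding error subvector, and then combining. Write $x^l(t) = [\tilde s_1,\tilde v_1,\dots,\tilde s_{n_l},\tilde v_{n_l}]^\intercal$ with $\tilde s_i = s_i^l(t) - s^*_{n_l}$ and $\tilde v_i = v_i^l(t) - v^*_{n_l}$, and recall $var^l(t) = variance(\{s_i^l(t)\}_i) + variance(\{v_i^l(t)\}_i)$. The first key observation, as noted in the remark, is that the ring-road constraint $\sum_i s_i^l(t) = C = n_l s^*_{n_l}$ forces the sample mean of the headways to equal the equilibrium $s^*_{n_l}$ exactly, so $variance(\{s_i^l(t)\}_i) = \frac{1}{n_l}\sum_i \tilde s_i^2 = \frac{1}{n_l}\|\tilde s^l(t)\|_2^2$ — an exact identity, no bound needed. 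So the entire difficulty is confined to the velocity block, where the sample mean $\bar v = \frac{1}{n_l}\sum_i v_i^l(t)$ need not equal $v^*_{n_l}$.

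For the velocity block I would use the standard variance decomposition: for any fixed reference $v^*_{n_l}$,
\begin{equation}
variance(\{v_i^l(t)\}_i) = \frac{1}{n_l}\sum_{i=1}^{n_l}\tilde v_i^2 - (\bar v - v^*_{n_l})^2 = \frac{1}{n_l}\|\tilde v^l(t)\|_2^2 - (\bar v - v^*_{n_l})^2 .
\end{equation}
The \emph{upper} bound is immediate from this identity: $variance(\{v_i^l(t)\}_i) \le \frac{1}{n_l}\|\tilde v^l(t)\|_2^2$, so summing with the headway identity gives $var^l(t) \le \frac{1}{n_l}\|x^l(t)\|_2^2$, i.e. $c_2 = 1/n_l$ works (any $c_2 \ge 1/n_l$ is fine). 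The \emph{lower} bound is where Assumption~\ref{assump:varnorm} enters and is the main obstacle: one must control the subtracted term $(\bar v - v^*_{n_l})^2$ from above by a constant multiple of $variance(\{v_i^l(t)\}_i)$, so that $variance(\{v_i^l(t)\}_i) \ge c\,\frac{1}{n_l}\|\tilde v^l(t)\|_2^2$ for some $c>0$. Under Assumption~\ref{assump:varnorm}, whenever the velocities are not all equal to $v^*_{n_l}$ there exist indices $i,j$ with $\tilde v_i < 0 \le \tilde v_j$; this "straddling" property prevents the mean shift $\bar v - v^*_{n_l}$ from being as large as the spread. Concretely I would argue that the presence of a coordinate on each side of $v^*_{n_l}$ gives $|\bar v - v^*_{n_l}| \le \max_i |\tilde v_i| - \min(\text{something})$, or more cleanly: since $\tilde v_i \le 0$ and $\tilde v_j \ge 0$ for some $i,j$, we have $\max_k |\tilde v_k| \le \tilde v_j - \tilde v_i = |\tilde v_i| + |\tilde v_j| \le$ (a bounded multiple of the sample standard deviation), because the range of a finite set straddling a point that lies between its min and max is comparable to its standard deviation up to a factor depending only on $n_l$. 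This yields $\|\tilde v^l(t)\|_2^2 \le n_l \max_k \tilde v_k^2 \le C(n_l)\, variance(\{v_i^l(t)\}_i)$, hence the lower bound with $c_1$ depending on $n_l$.

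The cleanest route for that straddling estimate is: let $M = \max_k \tilde v_k \ge 0$ and $m = \min_k \tilde v_k \le 0$ (signs guaranteed by Assumption~\ref{assump:varnorm} unless $x^l(t)=0$, in which case both sides are zero and the inequality is trivial). Then $\bar v - v^*_{n_l} \in [m, M]$, so $(\bar v - v^*_{n_l})^2 \le \max(m^2, M^2)$. Separately, $variance(\{v_i^l(t)\}_i) \ge \frac{1}{n_l^2}\sum_{p<q}(\tilde v_p - \tilde v_q)^2 \cdot$ (constant) $\ge \frac{c'}{n_l}(M-m)^2 \ge \frac{c'}{n_l}\max(M^2, m^2)$ using $(M-m)^2 \ge M^2$ and $\ge m^2$ since $M\ge 0\ge m$. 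Combining, $(\bar v - v^*_{n_l})^2 \le \frac{n_l}{c'} variance(\{v_i^l(t)\}_i)$, and plugging into the decomposition identity rearranges to $\frac{1}{n_l}\|\tilde v^l(t)\|_2^2 = variance(\{v_i^l(t)\}_i) + (\bar v - v^*_{n_l})^2 \le (1 + \frac{n_l}{c'})\, variance(\{v_i^l(t)\}_i)$, giving the velocity-block lower bound; adding the exact headway identity finishes it with $c_1 = \frac{1}{n_l}\big(1 + \frac{n_l}{c'}\big)^{-1}$ (and taking the min with the headway coefficient, which is already $1/n_l$ and hence larger, so it does not bind). I expect the only genuine subtlety is getting a clean pairwise-differences lower bound for the sample variance and verifying the constant $c'$ depends only on $n_l$; everything else is algebraic identities.
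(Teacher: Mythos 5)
Your proposal is correct and shares the paper's overall skeleton: the same translation-invariant decomposition of $var^l(t)$ into headway and velocity blocks, the same exact identity $variance(\{s_i^l\}_i)=\tfrac{1}{n_l}\|\tilde s^l\|_2^2$ from the ring-road constraint $\sum_i \tilde s_i^l=0$, and the same one-line upper bound with $c_2=1/n_l$. Where you genuinely diverge is the velocity-block lower bound, which is the only nontrivial step. The paper splits $\{\tilde v_i^l\}$ into its negative and nonnegative parts, uses $(a+b)^2\le a^2+b^2$ for $ab\le 0$ followed by $\|\cdot\|_1^2\le m\|\cdot\|_2^2$ on each part, and obtains $variance(\{\tilde v_i^l\}_i)\ge \frac{n_l-\max\{n_l^+,n_l^-\}}{n_l^2}\|\tilde v^l\|_2^2\ge \frac{1}{n_l^2}\|\tilde v^l\|_2^2$. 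You instead bound the mean shift by the range: $(\bar v - v^*_{n_l})^2\le\max(m^2,M^2)\le(M-m)^2\le n_l^2\,variance(\{\tilde v_i^l\}_i)$ via the pairwise-differences identity (an equality, incidentally, not just a lower bound), then rearrange the decomposition identity to get $c_1\sim \frac{1}{n_l(1+n_l^2)}$. Both arguments exploit Assumption~\ref{assump:varnorm} in the same essential way --- the existence of velocities on both sides of $v^*_{n_l}$ --- and both yield constants depending only on $n_l$; the paper's constant is slightly sharper (order $n_l^{-2}$ versus your $n_l^{-3}$), while your straddling/range argument is arguably more transparent and avoids tracking the cardinalities $n_l^{\pm}$. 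One cosmetic imprecision: the degenerate case to dispose of is $\tilde v^l=0$ (all velocities at equilibrium), not $x^l(t)=0$; the velocity-block inequality is then $0\ge 0$ and the headway identity carries the bound, so nothing breaks.
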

\begin{proof} The variance of the headways and velocities in Eq.~\eqref{eq:variance}, which are translation-invariant functions of the state $z^l$, can be converted into a combination of two error state terms $\frac{1}{n_l}\|x^l(t)\|_2^2 - \frac{1}{n_l^2}\big(\big(\sum\limits_{i=1}^{n_l}\tilde{s}^l_i(t)\big)^2 + (\sum\limits_{i=1}^{n_l}\tilde{v}^l_i(t)\big)^2\big)$. The upper (and lower) bound can then be obtained by ignoring (and lower bounding) the second term. See Appendix~\ref{appendix:varnorm} for details.
\end{proof}
\slirebuttal{We now establish upper bounds for $\|x^l(t)\|_2^2$, and convert these bounds to into variance upper bounds.} To upper bound $\|x^l(t)\|_2^2$ for the controlled period, there exists a valid Lyapunov function $V(x^l_c(t)) = x^l_c(t)^\intercal P x^l_c(t) > 0$ with $\dot{V}(x^l_c(t)) = -x^l_c(t)^\intercal Q x^l_c(t) < 0$, where $P > 0, Q > 0$ are $n\times n$ matrices (see Sec.~\ref{sec:prelim_singlelane}); the Lyapunov function ensures that the error state norm $\|x^l_c(t)\|_2^2$ decreases with $t$, when applying the Gronwall's inequality. For the uncontrolled period, the system is unstable so a valid Lyapunov function does not exist; we hence directly bound $U(x^l_u(t)) = x^l_u(t)^\intercal x^l_u(t) = \|x^l_u(t)\|_2^2$ with Gronwall's inequality, which produces an upper bound that increases with $t$. We present the following proposition.
\begin{proposition}
\label{prop:cont_var_bd1}
Let $var_{0, c}, var_{0, u}$ be the \slirebuttal{variances} at the starting point of a continuous, controlled and uncontrolled time interval, respectively. \slirebuttal{Under Assump.~\ref{assump:varnorm},} the \slirebuttal{variances} of the linearized continuous controlled and uncontrolled systems after a control period of $t$ are upper bounded by 
\begin{equation}
\begin{aligned}
var^l_c(t) & \leq f_c(var_{0,c}, t) = \alpha_1 \exp(-\alpha_2 t) var_{0,c} \\
var^l_u(t) & \leq f_u(var_{0,u}, t) = \beta_1 \exp(\beta_2 t) var_{0,u}
\end{aligned}
\label{eq:control_var}
\end{equation}
for some $\alpha_1, \alpha_2, \beta_1, \beta_2 > 0$ and any $t \geq 0$.
\end{proposition}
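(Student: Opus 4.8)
The plan is to prove both inequalities by first using Lemma~\ref{lemma:varnorm} to sandwich the variance between constant multiples of the squared error-state norm $\|x^l(t)\|_2^2$, then controlling $\|x^l(t)\|_2^2$ along each continuous flow by a Gronwall-type comparison argument, and finally translating the norm bound back into a variance bound. Concretely, Lemma~\ref{lemma:varnorm} gives $c_1,c_2>0$ with $c_1\|x^l(t)\|_2^2 \le var^l(t) \le c_2\|x^l(t)\|_2^2$; evaluating the lower bound at the start of the interval yields $\|x^l(0)\|_2^2 \le var_{0}/c_1$, and the upper bound at time $t$ yields $var^l(t)\le c_2\|x^l(t)\|_2^2$. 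So it suffices to show that $\|x^l(t)\|_2^2$ decays exponentially (controlled) or grows at most exponentially (uncontrolled) relative to $\|x^l(0)\|_2^2$, with the equilibrium (hence the centering of $var^l$) held fixed throughout the continuous interval, since on such an interval $x^l(t)$ evolves under the constant matrices $A_c-B_cK$ or $A_u$.

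For the controlled period I would use the Lyapunov function $V(x_c^l)=x_c^{l\intercal}Px_c^l$ from Sec.~\ref{sec:prelim_singlelane}, with $\dot V=-x_c^{l\intercal}Qx_c^l$ and $P>0,Q>0$. Eigenvalue bounds give $\lambda_{\min}(P)\|x_c^l\|_2^2\le V(x_c^l)\le\lambda_{\max}(P)\|x_c^l\|_2^2$ and $\dot V\le-\lambda_{\min}(Q)\|x_c^l\|_2^2\le-\frac{\lambda_{\min}(Q)}{\lambda_{\max}(P)}V$. Applying Gronwall's inequality to this differential inequality gives $V(x_c^l(t))\le V(x_c^l(0))\exp(-\frac{\lambda_{\min}(Q)}{\lambda_{\max}(P)}t)$, hence $\|x_c^l(t)\|_2^2\le\frac{\lambda_{\max}(P)}{\lambda_{\min}(P)}\exp(-\frac{\lambda_{\min}(Q)}{\lambda_{\max}(P)}t)\|x_c^l(0)\|_2^2$. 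Composing with the two sides of Lemma~\ref{lemma:varnorm} yields $var_c^l(t)\le\frac{c_2\lambda_{\max}(P)}{c_1\lambda_{\min}(P)}\exp(-\frac{\lambda_{\min}(Q)}{\lambda_{\max}(P)}t)\,var_{0,c}$, i.e. the claimed form with $\alpha_1=\frac{c_2\lambda_{\max}(P)}{c_1\lambda_{\min}(P)}>0$ and $\alpha_2=\frac{\lambda_{\min}(Q)}{\lambda_{\max}(P)}>0$.

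For the uncontrolled period there is no decreasing Lyapunov function, since $A_u$ is unstable by the standing assumption, so instead I would bound $U(x_u^l)=\|x_u^l\|_2^2$ directly: $\dot U=x_u^{l\intercal}(A_u+A_u^\intercal)x_u^l\le\lambda_{\max}(A_u+A_u^\intercal)\|x_u^l\|_2^2=\lambda_{\max}(A_u+A_u^\intercal)U$. Gronwall then gives $\|x_u^l(t)\|_2^2\le\exp(\lambda_{\max}(A_u+A_u^\intercal)t)\|x_u^l(0)\|_2^2$; taking $\beta_2:=\max\{\lambda_{\max}(A_u+A_u^\intercal),1\}>0$ keeps the exponent strictly positive while preserving the inequality for all $t\ge0$, and Lemma~\ref{lemma:varnorm} converts this into $var_u^l(t)\le\frac{c_2}{c_1}\exp(\beta_2 t)\,var_{0,u}$, giving $\beta_1=c_2/c_1>0$.

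The eigenvalue inequalities and the Gronwall step are routine; the main obstacle will be the uncontrolled case, where the absence of a Lyapunov function forces the cruder bound through the symmetric part $A_u+A_u^\intercal$ and a possibly conservative choice of $\beta_2$. A second point to handle carefully is that the \emph{lower} bound in Lemma~\ref{lemma:varnorm} --- which is what lets us replace $\|x^l(0)\|_2^2$ by $var_0/c_1$ --- relies on Assumption~\ref{assump:varnorm}; we invoke it at the start of each continuous interval, while the upper bound side (used at time $t$) needs no such hypothesis.
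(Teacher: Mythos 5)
Your proposal is correct and follows essentially the same route as the paper's proof: a Lyapunov/Gronwall argument on $V(x_c^l)=x_c^{l\intercal}Px_c^l$ for the controlled flow, a direct Gronwall bound on $\|x_u^l\|_2^2$ via the symmetric part $A_u+A_u^\intercal$ for the uncontrolled flow, and Lemma~\ref{lemma:varnorm} to pass between variance and error-state norm on both ends. Your bookkeeping of $\beta_1=c_2/c_1$ (rather than the paper's stated $\beta_1=c_2$) is in fact the more careful accounting of the lower-bound conversion at $t=0$.
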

\begin{proof} The proof involves applying Gronwall's inequality to $\dot{V}(x^l_c(t))$ and $U(x^l_u(t))$ defined above to obtain upper bounds on (i) $x^l_c(t)^\intercal P x^l_c(t)$, which is converted to an upper bound on $\|x^l_c(t)\|^2_2$ by $\|x^l_c(0)\|^2_2$, and (ii) $x^l_u(t)^\intercal x^l_u(t) = \|x^l_u(t)\|^2_2$ by $\|x^l_u(0)\|^2_2$. Then, we apply Lemma~\ref{lemma:varnorm} to convert the bounds into bounds on the variances. See Appendix~\ref{appendix:cont_var_bd1} for details.
\end{proof}

\vspace{-0.3cm}
\subsection{Continuous Systems: State-Dependent Analysis}

Prop.~\ref{prop:cont_var_bd1} provides a variance upper bound given \textit{any} initial state of the trajectory, based on the worst-case behavior with maximum or minimum singular value from the Lyapunov analysis. In practice, we often observe less-conservative short-term behaviors at the beginning of the trajectory: for example, the variance decreases and then increases for the uncontrolled period (see Fig.~\ref{fig:control_uncontrol_singlelane} (i) and Fig.~\ref{fig:single_lane_init} (d)). 

We can derive another state-dependent variance upper bound for trajectories whose initial states are within a region around a \textit{fixed} initial state. The following general theorem formalizes the bound, leveraging the continuous dependence of the system's solution on the initial state~\cite{khalil2001nonlinear}; as a byproduct, it also provides state information of the trajectories, which can be useful to analyze the impact of the discrete jumps in Sec.~\ref{sec:analysis_jump}: for example, a larger AV headway may lead to a large variance increase when AV exits a lane.

\begin{table*}[!htb]
\centering
\caption{\textbf{Discrete Lane-Switch.} Closed-form expressions for the headway and velocity variances change when AV exits or enters a lane, suppose the AV leaves lane $L$ and enters lane $R$, with $n_L = n, n_R = n-1$. See Fig.~\ref{fig:laneswitch_notation} in Sec.~\ref{sec:prelim_multilane} for detailed notation for the headways and velocities of individual vehicles. The closed-form expressions when the AV leaves lane $R$ and enters lane $L$ can be represented similarly by swapping $L$ and $R$ in the notation.}
\begin{tabular}{lcc}
\hline \\ [-0.7em]
         & lane $L$ (AV exits) & lane $R$ (AV enters) \\ \\ [-0.8em] \hline \\ [-0.5em]
headway  & 
{$\!\begin{aligned} \Delta_{c \rightarrow u, s}(\slirebuttal{s^L}) = \frac{1}{n_L-1}var^L_{c, s} & + \frac{1}{n_L - 1}2a^Lb^L  - \frac{1}{(n_L-1)^2n_L}L^2 \end{aligned}$}
&  {$\!\begin{aligned} \Delta_{u \rightarrow c, s}(\slirebuttal{s^R}) = -\frac{1}{n_R+1}var^R_{u, s} & - \frac{1}{n_R + 1}2a^Rb^R + \frac{1}{(n_R+1)^2n_R}L^2 \end{aligned}$} \\ \\ [-0.5em] \hline \\ [-0.5em]
velocity & {$\!\begin{aligned} \Delta_{c \rightarrow u, v}(\slirebuttal{v^L}) = & \frac{1}{n_L-1}var^L_{c, v}  - \frac{1}{(n_L-1)^2n_L}\slirebuttal{\Big(}-(n_L-1)v_n + \sum_{i: HV} v_i^L\slirebuttal{\Big)}^2 \end{aligned}$} &  {$\!\begin{aligned} \Delta_{u \rightarrow c, v}(\slirebuttal{v^R}) = & -\frac{1}{n_R+1}var^R_{u, v} + \frac{1}{(n_R+1)^2n_R}\slirebuttal{\Big(}-n_Rv_n + \sum_{i: HV} \tilde{v}_i^R\slirebuttal{\Big)}^2 \end{aligned}$}\\ \\ [-0.5em]
\hline 
\end{tabular}
\vspace*{-0.2cm}
\label{tab:analysis_jump}
\end{table*}
\begin{theorem}
\label{thm:cont_init}
For each lane $l \in \{L, R\}$, let $z(t)$ and $\bar{z}(t)$ be solutions of the continuous system dynamic with two initial conditions $z(t_0) = z_0, z(t_0) = \bar{z}_0$.
Then, 
\begin{equation}
    \|z(t) - \bar{z}(t)\|_2\leq \|z_0 - \bar{z}_0\|_2\exp[L_s(t-t_0)]
    \label{eq:cont_init_state}
\end{equation}
where $L_s$ with $s \in \{c, u\}$ represents the the Lipschitz constants of continuous controlled and uncontrolled system dynamics (Eq.~\eqref{eq:control_matrix_system} and~\eqref{eq:uncontrol_matrix_system}). We further have
\begin{equation}
var^l(t) \leq \overline{var}(t) + \frac{1}{n_l} \|z_0 + \bar{z}_0\|_2 \|z_0 - \bar{z}_0\|_2 \exp[2L_{s}(t-t_0)]
\label{eq:cont_init_var}
\end{equation}
where $var^l(t)$ and $\overline{var}(t)$ are the trajectory variances from initial states $z_0$ and $\bar{z}_0$, as in Eq.~\eqref{eq:variance}.
\end{theorem}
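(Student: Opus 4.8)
The plan is to prove the two claims (\ref{eq:cont_init_state}) and (\ref{eq:cont_init_var}) in sequence, with the first being a standard continuous-dependence-on-initial-conditions argument and the second a purely algebraic consequence of the first combined with the structure of the variance functional. First I would establish (\ref{eq:cont_init_state}). Both the controlled dynamics (\ref{eq:control_matrix_system}) and the uncontrolled dynamics (\ref{eq:uncontrol_matrix_system}) are linear and time-invariant, $\dot z = A z + b$ (with $b=0$ in the uncontrolled case and the constant equilibrium offset absorbed appropriately), so the right-hand side is globally Lipschitz with constant $L_s = \|A_s\|$ for $s\in\{c,u\}$ (using $A_c - B_c K$ as the relevant matrix in the controlled mode). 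Writing $e(t) = z(t) - \bar z(t)$, the two solutions satisfy $\dot e = A_s e$, hence $\tfrac{d}{dt}\|e(t)\|_2 \le L_s \|e(t)\|_2$ in the sense of Dini derivatives; Gronwall's inequality then yields $\|e(t)\|_2 \le \|e(t_0)\|_2 \exp[L_s(t-t_0)]$, which is exactly (\ref{eq:cont_init_state}). This is the same Gronwall machinery already invoked in Prop.~\ref{prop:cont_var_bd1}, so it can be stated compactly.

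Next I would derive (\ref{eq:cont_init_var}) from (\ref{eq:cont_init_state}). The key observation, already used in the proof of Lemma~\ref{lemma:varnorm}, is that $var^l(t)$ is (up to the constant $1/n_l$) of the form $\tfrac{1}{n_l}\|z^l(t)-z^*_{n_l}\|_2^2 - \tfrac{1}{n_l^2}\big((\sum_i \tilde s_i^l)^2 + (\sum_i \tilde v_i^l)^2\big)$, i.e.\ a quadratic form $\phi(z) = \tfrac{1}{n_l} z^\intercal M z + \text{(lower order)}$ for a fixed symmetric positive semidefinite matrix $M$ (the "centering" projection) that does not depend on $t$. Then $var^l(t) - \overline{var}(t) = \phi(z(t)) - \phi(\bar z(t))$, and expanding the quadratic form via the polarization-type identity $u^\intercal M u - w^\intercal M w = (u-w)^\intercal M (u+w)$ gives $\phi(z(t)) - \phi(\bar z(t)) = \tfrac{1}{n_l}(z(t)-\bar z(t))^\intercal M (z(t)+\bar z(t))$. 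Applying Cauchy--Schwarz and $\|M\|\le 1$ (the centering projection is an orthogonal projection, hence nonexpansive), this is bounded by $\tfrac{1}{n_l}\|z(t)-\bar z(t)\|_2\,\|z(t)+\bar z(t)\|_2$. Bounding $\|z(t)+\bar z(t)\|_2 \le \|z_0+\bar z_0\|_2 \exp[L_s(t-t_0)]$ (same Gronwall bound applied to the sum, or crudely to each term) and $\|z(t)-\bar z(t)\|_2$ by (\ref{eq:cont_init_state}) produces the product bound $\tfrac{1}{n_l}\|z_0+\bar z_0\|_2\|z_0-\bar z_0\|_2 \exp[2L_s(t-t_0)]$, giving (\ref{eq:cont_init_var}).

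The main obstacle I anticipate is handling the affine offsets cleanly. The "state" $z^l(t)$ in the paper is the \emph{absolute} state (headways and velocities), not the error state, so its dynamics are affine rather than linear; moreover the equilibrium $z^*_{n_l}$ appearing inside the variance-to-norm conversion and the Lipschitz argument must be the \emph{same} equilibrium for both trajectories $z(t)$ and $\bar z(t)$, which is fine here since both trajectories lie in the same mode with the same $n_l$ during the interval considered. Once this bookkeeping is fixed, the difference $z(t)-\bar z(t)$ satisfies the purely linear equation $\dot e = A_s e$ and the offset drops out, so the Lipschitz constant is genuinely $\|A_s\|$; and the $\tfrac{1}{n_l}$ factor together with the mean-subtraction terms combine exactly so that the cross term collapses to $\tfrac1{n_l}(z-\bar z)^\intercal M(z+\bar z)$ with $M$ a projection. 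A secondary subtlety is that $\overline{var}(t)$ in the statement is the variance of the \emph{nominal} trajectory $\bar z(t)$, so one must be careful that no additional error from Lemma~\ref{lemma:varnorm} (which introduces $c_1,c_2$) is needed here — it is not, because (\ref{eq:cont_init_var}) is an exact identity-plus-Cauchy--Schwarz estimate on the variance functional itself, independent of the $\|x^l\|_2^2$ comparison. I would remark on this to reassure the reader that (\ref{eq:cont_init_var}) does not inherit the conservativeness of Prop.~\ref{prop:cont_var_bd1}.
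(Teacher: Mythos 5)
Your proposal is correct and follows essentially the same route as the paper: Eq.~\eqref{eq:cont_init_state} via the standard Gronwall/continuous-dependence argument (the paper cites Khalil, Ch.~3.2), and Eq.~\eqref{eq:cont_init_var} by writing the variance as a quadratic form in the centering matrix $P_{var}=\tfrac{1}{m}I-\tfrac{1}{m^2}\mathbf{1}\mathbf{1}^{\intercal}$, applying the polarization identity and Cauchy--Schwarz with $\lambda_{\max}(P_{var})\le \tfrac{1}{m}$, and bounding both $\|z(t)-\bar z(t)\|_2$ and $\|z(t)+\bar z(t)\|_2$ by the same exponential estimate --- which is exactly the derivation in Appendix~\ref{appendix:cont_init}. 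Your added remarks on the affine offset and on why no Lemma~\ref{lemma:varnorm} constants enter are accurate and consistent with the paper's argument.
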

\begin{proof}
The proof for Eq.~\eqref{eq:cont_init_state} can be found in~\cite{khalil2001nonlinear}, Chapter 3.2. Eq.~\eqref{eq:cont_init_var} follows a similar derivation with additional algebraic manipulations, as detailed in Appendix~\ref{appendix:cont_init}.
\end{proof}
The above theorem states that there exists a tube around the nominal trajectory $\bar{z}(t)$ of a fixed initial state $\bar{z}_0$, for which the trajectory $z(t)$ of a perturbed initial state $z_0$ stays within; the theory captures the short-term behavior of the trajectory $z(t)$, guaranteeing the states and variances $z(t)$ and $var^l(t)$ to stay close to $\bar{z}(t)$ and $\overline{var}(t)$ during a short initial period\slired{, as depicted in Fig.~\ref{fig:statedep}}. As the tube width grows exponentially with time, the bound becomes loose as $t$ gets large. 

We present the following corollary of the theorem, which allows us to upper bound the state deviation and the variance within a short initial period of a trajectory. 
\slired{\begin{corollary}
\label{cor:cont_init}
For any $\epsilon > 0$, there exists a $\delta_\epsilon \geq 0$ and a time $t_\epsilon \geq 0$ such that if $\|z_0 - \bar{z}_0\|_2 \leq \delta_\epsilon$, 
\begin{equation}
\|z(t) - \bar{z}(t)\|_2\leq \epsilon \text{, and } var^l(t) \leq \overline{var}(t) + \epsilon \quad \forall t \in [0, t_\epsilon]
\label{eq:cor_cont_init}
\end{equation}
\end{corollary}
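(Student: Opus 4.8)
The plan is to obtain the corollary directly from Theorem~\ref{thm:cont_init} by a standard $\epsilon$--$\delta$ argument, the only care being that the variance estimate Eq.~\eqref{eq:cont_init_var} carries an extra factor $\|z_0+\bar z_0\|_2$ that must be controlled in terms of $\delta_\epsilon$ and the \emph{fixed} quantity $\|\bar z_0\|_2$. Without loss of generality I take $t_0 = 0$ (so that the interval $(t-t_0)$ in the theorem becomes $[0,t_\epsilon]$ as in the corollary), write $L$ for the relevant Lipschitz constant $L_s$, $s\in\{c,u\}$ (or $L := \max\{L_c,L_u\}$ if one wants a statement uniform over both continuous modes), and assume $\|z_0-\bar z_0\|_2 \le \delta_\epsilon$ with $\delta_\epsilon$ to be fixed below.

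First I would bound the state deviation: by Eq.~\eqref{eq:cont_init_state}, $\|z(t)-\bar z(t)\|_2 \le \delta_\epsilon e^{Lt}$, and since this is nondecreasing in $t$, on $[0,t_\epsilon]$ it is at most $\delta_\epsilon e^{Lt_\epsilon}$. Next I would bound the variance: by the triangle inequality $\|z_0+\bar z_0\|_2 \le \|z_0-\bar z_0\|_2 + 2\|\bar z_0\|_2 \le \delta_\epsilon + 2\|\bar z_0\|_2$, so Eq.~\eqref{eq:cont_init_var} yields, for every $t \in [0,t_\epsilon]$,
\[
var^l(t) \;\le\; \overline{var}(t) + \tfrac{1}{n_l}\,(\delta_\epsilon + 2\|\bar z_0\|_2)\,\delta_\epsilon\,e^{2Lt_\epsilon}.
\]

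Then I would fix the constants. The time $t_\epsilon$ may be chosen first, at any convenient positive value independent of $\epsilon$ — which matches its later role in Cor.~\ref{cor:traffic_round} as a fixed crossover horizon — after which $\delta_\epsilon$ is shrunk. Imposing additionally $\delta_\epsilon \le 1$ so that $\delta_\epsilon + 2\|\bar z_0\|_2 \le 1 + 2\|\bar z_0\|_2$, it suffices to take
\[
\delta_\epsilon \;=\; \min\Big\{\,1,\ \epsilon\,e^{-Lt_\epsilon},\ \frac{\epsilon\,n_l\,e^{-2Lt_\epsilon}}{1+2\|\bar z_0\|_2}\,\Big\} \;>\; 0,
\]
which makes both of the bounds above at most $\epsilon$, giving $\|z(t)-\bar z(t)\|_2 \le \epsilon$ and $var^l(t) \le \overline{var}(t)+\epsilon$ for all $t\in[0,t_\epsilon]$, as claimed.

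The argument is essentially routine; the only mild subtlety — the main obstacle, such as it is — is that the bilinear term $\|z_0+\bar z_0\|_2\,\|z_0-\bar z_0\|_2$ in the variance bound is quadratic in the perturbation size, so it needs the a priori cap $\delta_\epsilon\le 1$ (or any fixed bound) before it can be absorbed into an estimate that vanishes with $\delta_\epsilon$. I would also state explicitly that $t_\epsilon$ can be taken independently of $\epsilon$, with $\delta_\epsilon$ carrying the $\epsilon$-dependence, since downstream uses of the corollary treat $t_\epsilon$ as a fixed horizon and only shrink the perturbation radius.
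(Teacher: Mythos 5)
Your proof is correct and follows essentially the same route as the paper: the corollary is read off directly from Theorem~\ref{thm:cont_init} by an $\epsilon$--$\delta$ argument (the paper phrases it as ``set the right-hand sides equal to $\epsilon$ and solve,'' whereas you fix $t_\epsilon$ and shrink $\delta_\epsilon$, but these are interchangeable for the existential claim). Your explicit handling of the bilinear term $\|z_0+\bar z_0\|_2\|z_0-\bar z_0\|_2$ via the triangle inequality and the a priori cap $\delta_\epsilon\le 1$ is a detail the paper's one-line proof glosses over, and is a welcome tightening rather than a departure.
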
}
\begin{proof}
\slired{The initial state deviation $\delta_\epsilon$ and time $t_\epsilon$ can be found by substituting the corresponding right hand side component in Eq.~\eqref{eq:cont_init_state} and Eq.~\eqref{eq:cont_init_var} with $\epsilon$ and solve for $t$. Note that $t_\epsilon \geq 0$ and $\delta \geq 0$ are typically small when $\epsilon > 0$ is small.}
\end{proof}
\noindent \textbf{Fixed initial states $\mathbf{\bar{z}_{0,c}}$ and $\mathbf{\bar{z}_{0,u}}$.} We consider the fixed initial states $\bar{z}_{0,c}$ and $\bar{z}_{0,u}$ immediately after the AV enters or exits a stable lane \textit{at equilibrium}, as depicted in Fig.~\ref{fig:control_uncontrol_singlelane} (a) and (f): for the controlled period, we consider the traffic situation where an AV enters the lane in the middle of two HVs, where the lane is previously uncontrolled and contains $n-1$ HVs drives at equilibrium; for the uncontrolled period, we consider the traffic situation immediately after AV leaves the lane, which is previously under the AV control and contains one AV and $n-1$ HVs at equilibrium. The system's solutions (nominal trajectories) $\bar{z}_u(t)$ and $\bar{z}_c(t)$ under fixed initial states can be found by explicitly integrating the system's dynamics (Eq.~\eqref{eq:control_matrix_system}), as shown in Fig.~\ref{fig:control_uncontrol_singlelane} (b) and (g).

\vspace{0.1cm}
\noindent \textbf{Uncontrolled period: variance upper bound tightening, initial variance decrease.} Fig.~\ref{fig:control_uncontrol_singlelane} (g,i) depicts the nominal trajectory $\bar{z}_u(t)$ and the variance $\overline{var}_u(t)$ under the fixed initial state $\bar{z}_{0, u}$. The trajectory's variance decreases at an initial short time period, following an increase when the lane remains uncontrolled for a long time. Intuitively, when AV exits a stable lane, it opens up space with the headway of the lagging HV immediately increases. The uncontrolled system first rapidly closes the gap, but then gradually becomes unstable and forms stop-and-go waves. The following corollary hence further tighten the variance upper bound by taking the minimum of Eq.~\eqref{eq:control_var} and~\eqref{eq:cor_cont_init} during $t \in [0, t_\epsilon^c]$\slirebuttal{.}

\label{sec:analysis_uncontrol_var}
\begin{proposition}
\label{prop:uncontrol_var}
Let $var_{0, u}$ be the initial variance of a continuous, uncontrolled time interval, whose initial state $z_{0, u}$ is close to some fixed initial state $\bar{z}_{0, u}$. Let $\epsilon > 0$. If $\|z_{0, u} - \bar{z}_{0, u}\|_2 \leq \delta_\epsilon$ for some $\delta_\epsilon \geq 0$, the variance of the linearized continuous uncontrolled system after a period of $t$ is upper bounded by 
\slired{\begin{equation}
    \begin{aligned}
        var^l(t) & \leq f_u(var_{0, u}, t) \\
        & = 
        \begin{cases}
          \overline{var}_u(t) + \epsilon,  & t \in [0, t_\epsilon] \\
          \beta_1 (\overline{var}_u(t_\epsilon) + \epsilon) \exp(\beta_2 (t - t_\epsilon)) & t \in [t_\epsilon, \infty]
        \end{cases} 
    \end{aligned}
    \label{eq:uncontrol_var}
\end{equation}}
where $\beta_1, \beta_2 > 0, \epsilon \geq 0, t_\epsilon \geq 0$, and $\overline{var}_u(t)$ is the variance function of a trajectory $\bar{z}_u(t)$ with the fixed initial state $\bar{z}_{0, u}$.
\end{proposition}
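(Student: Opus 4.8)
The plan is to stitch together two bounds that are each already established earlier, exploiting the fact that each is sharpest on a different time range. On a short horizon the state-dependent tube estimate of Corollary~\ref{cor:cont_init} (a consequence of Theorem~\ref{thm:cont_init}) is the tighter of the two; past that horizon we fall back on the Lyapunov/Gronwall growth bound $\beta_1\exp(\beta_2 t)\,var_0$ from Proposition~\ref{prop:cont_var_bd1}, but \emph{restarted} at the intermediate time $t_\epsilon$ rather than at $0$. The structural fact that makes this ``restart'' legitimate is that the linearized uncontrolled dynamics $\dot{x}^l_u(t)=A_u x^l_u(t)$ in Eq.~\eqref{eq:uncontrol_matrix_system} is time-invariant, so every sub-arc of an uncontrolled trajectory is itself an uncontrolled trajectory whose ``initial'' variance is just its variance at the left endpoint.

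\textbf{Short horizon $t\in[0,t_\epsilon]$.} Let $\bar{z}_u(t)$ be the nominal uncontrolled trajectory started at the fixed state $\bar{z}_{0,u}$ and let $\overline{var}_u(t)$ be its variance; $\bar{z}_u(t)$ is obtained by explicitly integrating Eq.~\eqref{eq:uncontrol_matrix_system} from $\bar{z}_{0,u}$ (Fig.~\ref{fig:control_uncontrol_singlelane}(g,i)). Since the hypothesis gives $\|z_{0,u}-\bar{z}_{0,u}\|_2\le\delta_\epsilon$, Corollary~\ref{cor:cont_init} applies directly with $z_0=z_{0,u}$, $\bar{z}_0=\bar{z}_{0,u}$, $t_0=0$, and $s=u$, and delivers $var^l(t)\le\overline{var}_u(t)+\epsilon$ for all $t\in[0,t_\epsilon]$, which is the first branch of Eq.~\eqref{eq:uncontrol_var}.

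\textbf{Long horizon $t\ge t_\epsilon$.} Evaluating the previous step at the seam gives $var^l(t_\epsilon)\le\overline{var}_u(t_\epsilon)+\epsilon$. Treating $z^l_u(t_\epsilon)$ as the initial state of a fresh uncontrolled interval and invoking the time-invariance of $A_u$, the uncontrolled branch of Proposition~\ref{prop:cont_var_bd1} applies with starting variance $var^l(t_\epsilon)$ and elapsed time $t-t_\epsilon$, yielding $var^l(t)\le\beta_1\exp(\beta_2(t-t_\epsilon))\,var^l(t_\epsilon)$. Since $v\mapsto\beta_1\exp(\beta_2(t-t_\epsilon))v$ is increasing in $v$, we substitute the bound on $var^l(t_\epsilon)$ to obtain $var^l(t)\le\beta_1\big(\overline{var}_u(t_\epsilon)+\epsilon\big)\exp(\beta_2(t-t_\epsilon))$, which is the second branch of Eq.~\eqref{eq:uncontrol_var}; the constants $\beta_1,\beta_2$ are exactly those of Proposition~\ref{prop:cont_var_bd1}.

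\textbf{Main obstacle.} The one point that needs care is consistency of the two branches at $t=t_\epsilon$: the first gives $\overline{var}_u(t_\epsilon)+\epsilon$ and the second gives $\beta_1(\overline{var}_u(t_\epsilon)+\epsilon)$, so the piecewise function is a bona fide upper envelope of $var^l(t)$ on all of $[0,\infty)$ only if $\beta_1\ge1$. This holds automatically: evaluating the Gronwall bound of Proposition~\ref{prop:cont_var_bd1} at $t=0$ forces $var_{0,u}\le\beta_1 var_{0,u}$, hence $\beta_1\ge1$ (equivalently, $\beta_1$ is the condition-number-type ratio $c_2/c_1$ arising from Lemma~\ref{lemma:varnorm}, which is $\ge1$). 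I expect no further difficulty; I would only remark that the genuine \emph{tightening} relative to Proposition~\ref{prop:cont_var_bd1} on $[0,t_\epsilon]$ relies on the observed initial decrease of $\overline{var}_u$ there (Fig.~\ref{fig:control_uncontrol_singlelane}(i)), a property of the explicitly integrated nominal trajectory that is used only for usefulness, not for validity of the bound.
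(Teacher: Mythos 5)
Your proposal is correct and follows essentially the same route as the paper, whose one-line proof obtains the result by combining the state-dependent bound of Cor.~\ref{cor:cont_init} on $[0,t_\epsilon]$ with the Lyapunov/Gronwall bound of Prop.~\ref{prop:cont_var_bd1} thereafter. If anything, your explicit ``restart at $t_\epsilon$'' argument (using the time-invariance of $A_u$ and the observation that $\beta_1\ge 1$) is more faithful to the stated second branch than the paper's literal phrasing of ``taking the minimum'' of the two bounds, which read verbatim would give $\beta_1\exp(\beta_2 t)\,var_{0,u}$ for $t>t_\epsilon$ rather than the restarted expression in Eq.~\eqref{eq:uncontrol_var}.
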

\begin{proof} The proposition is obtained by taking the minimum of the Lyapunov-based upper bound in Prop.~\ref{prop:cont_var_bd1} and the state-dependent upper bound in Cor.~\ref{cor:cont_init}.
\end{proof}

When the uncontrolled system is initialized close to $\bar{z}_{0, u}$ (AV leaves a near-stable lane), Prop.~\ref{prop:uncontrol_var} assures that the variance of the trajectory rapidly decreases initially when $t \in [0, t_\epsilon]$ (following a similar rate of decrease as $\overline{var}_u(t)$ due to the continuous dependence on the initial state), but eventually becomes unstable when $t$ is large (following the exponential upper bound from Lyapunov analysis). The combined result tightens the upper bound from Prop.~\ref{prop:cont_var_bd1}, during $t \in [0, t_\epsilon]$ and explains the initial variance decrease typically observed during the uncontrolled period.

\vspace{0.1cm}
\noindent \textbf{Controlled period: state information.} In the nominal trajectory $\bar{z}_c(t)$, the AV first increases the headway between itself and the leading HV from $s^{*u}$ to beyond $s^{*c}$ (see Fig.~\ref{fig:control_uncontrol_singlelane} (e)) while gradually equalizing the HV headways (close to $s^{*c}$); then, the AV closes its headway to $s^{*c}$, ensuring the entire system to stay near equilibrium. The above corollary assures that the perturbed trajectories $z(t)$, whose initial state $z_{0}$ in a region around the fixed initial state $\bar{z}_{0, c}$, exhibit similar behaviors during an initial short period $t \in [0, t_\epsilon^c]$ for some $t_\epsilon^c \geq 0$; we use the state information to analyze the impact of discrete jumps $\Delta_{c\rightarrow u}(z^l)$ from AV exiting a lane at the end of the controlled period (see Sec.~\ref{sec:analysis_jump} for details). If the AV exits before the lane is stabilized, a larger AV headway can lead to a large $\Delta_{c \rightarrow u}(z^l)$ and increase the variance on the exiting lane.

\subsection{Discrete AV Lane-Switch Analysis}
\label{sec:analysis_jump}

The following Proposition presents the closed-form expressions of the variance changes for both lanes $l \in \{L,R\}$ at a discrete jump when the AV switches lanes.
\begin{proposition} 
\label{prop:lanechange}
Suppose AV leaves lane $L$ and enters lane $R$. Let the states of lane $L$ and $R$ before the jump be $z^L, z^R$, whose headway and velocity components are \slirebuttal{$(s^L, v^L), (s^R, v^R)$}, whose key sub-components are summarized with special notation in Fig.~\ref{fig:laneswitch_notation}. Let the headway- and velocity- variances of lane $L$ and $R$ before the jump be $(var^L_{c, s}, var^L_{c, v})$ and $(var^R_{u, s}, var^R_{u, v})$, and after the jump be $(\tilde{var}^L_{u, s}, \tilde{var}^L_{u, v})$ and $(\tilde{var}^R_{c, s}, \tilde{var}^R_{c, v})$. 

The headway- and velocity- variance changes at the discrete jump for both lanes are $\Delta^L_{c \rightarrow u, s}(\slirebuttal{s^L}) = \tilde{var}^L_{u, s} - var^L_{c, s}; \Delta^R_{u \rightarrow c, s}(\slirebuttal{s^R}) = \tilde{var}^R_{c, s} - var^R_{u, s}; \Delta^L_{c \rightarrow u, v}(\slirebuttal{v^L}) = \tilde{var}^L_{u, v} - var^L_{c, v}; \Delta^R_{u \rightarrow c, v}(\slirebuttal{v^R}) = \tilde{var}^R_{c, v} - var^R_{u, v}$, whose closed-forms are presented in Table~\ref{tab:analysis_jump}.
\end{proposition}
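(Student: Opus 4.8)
The plan is to compute each of the four variance changes in Table~\ref{tab:analysis_jump} directly from the definition of the sample variance in Definition~\ref{def:stability_metric}, tracking exactly which headway and velocity entries are altered by the lane-switch maneuver as catalogued in Fig.~\ref{fig:laneswitch_notation}. First I would establish a convenient algebraic identity: for a finite multiset $\{\delta_i\}_{i=1}^m$, $variance(\{\delta_i\}) = \frac{1}{m}\sum_i \delta_i^2 - \frac{1}{m^2}(\sum_i \delta_i)^2$, so the change in variance when the multiset is modified can be written purely in terms of the change in the power sum $\sum_i \delta_i^2$, the change in the linear sum $\sum_i \delta_i$, and the change in the cardinality $m$. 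This reduces each of the four cases to bookkeeping of a handful of entries.

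Next I would handle the headway variances. For lane $L$ (AV exits): the vehicle count drops from $n_L$ to $n_L-1$, the two headways $a^L$ (AV to leading HV) and $b^L$ (lagging HV to AV) are replaced by the single merged headway $a^L+b^L$, and all other headways are unchanged. The linear sum of headways is invariant (it always equals the circumference $L$, both before and after — this is the ring-road constraint), so only the power sum and the cardinality change: the power sum changes by $(a^L+b^L)^2 - (a^L)^2 - (b^L)^2 = 2a^Lb^L$. Substituting into the identity and collecting the $\frac{1}{n_L-1}$ and $\frac{1}{(n_L-1)^2 n_L}$ coefficients, one recovers $\Delta_{c\to u,s}(s^L) = \frac{1}{n_L-1}var^L_{c,s} + \frac{1}{n_L-1}2a^Lb^L - \frac{1}{(n_L-1)^2 n_L}L^2$. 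The lane $R$ (AV enters) case is the mirror image: cardinality goes $n_R \to n_R+1$, one headway $a^R+b^R$ splits into $a^R$ and $b^R$, the power sum changes by $-2a^Rb^R$, and the linear sum is again invariantly $L$; this yields the stated $\Delta_{u\to c,s}(s^R)$ with the signs flipped and $n_R+1$ replacing $n_L-1$.

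For the velocity variances the bookkeeping is different because no velocity entry changes value — the AV keeps velocity $v_n$ and every HV keeps its velocity — but the multiset membership changes: leaving lane $L$ removes $v_n$ from the $n_L$-element set, and entering lane $R$ adds $v_n$ to the $n_R$-element set. Here the linear sum is \emph{not} invariant: it changes by $-v_n$ (exit) or $+v_n$ (enter), and the power sum changes by $-v_n^2$ or $+v_n^2$. Plugging these into the identity and simplifying (writing $\sum_{i:HV}v_i^L$ for the HV velocity sum so that the pre-switch total is $v_n + \sum_{i:HV}v_i^L$) produces, after grouping terms, the expression $\Delta_{c\to u,v}(v^L) = \frac{1}{n_L-1}var^L_{c,v} - \frac{1}{(n_L-1)^2 n_L}\bigl(-(n_L-1)v_n + \sum_{i:HV}v_i^L\bigr)^2$, and symmetrically for lane $R$. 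Finally, the full variance changes $\Delta_{c\to u} = \Delta_{c\to u,s} + \Delta_{c\to u,v}$ and $\Delta_{u\to c} = \Delta_{u\to c,s} + \Delta_{u\to c,v}$ by additivity of the two-term metric in Eq.~\eqref{eq:variance}; specializing to the OVM ring road with $n_L=n$, $n_R=n-1$ and $L=C$ recovers Eqs.~\eqref{eq:traffic_31}–\eqref{eq:traffic_32}.

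The main obstacle is purely organizational rather than conceptual: correctly tracking the re-indexing of vehicles (the $R_{permute}$ part of the reset map) and making sure that "before/after the jump" is applied consistently to each lane — in particular that the lane $L$ quantities ($var^L_{c,s}$, $a^L$, $b^L$) refer to the \emph{controlled} state just prior to exit while the lane $R$ quantities refer to the \emph{uncontrolled} state just prior to entry. One must also verify the sign conventions so that $\Delta_{u\to c}$ and $\Delta_{c\to u}$ have the correct orientation (increase vs. decrease) relative to the merged/split headway geometry. Once the $variance$ identity is in place, each of the four derivations is a short direct calculation with no analytic subtlety.
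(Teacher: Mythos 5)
Your proposal is correct and follows essentially the same route as the paper's proof: the paper likewise writes out the sample variances of each lane before and after the jump directly from Definition~\ref{def:stability_metric}, uses the ring-road constraint that the headways sum to the circumference, notes that the only changes are the merge/split of the AV-adjacent headways (power sum shifted by $\pm 2ab$) and the removal/insertion of $v_n$ from the velocity multiset, and then subtracts (Appendix~\ref{appendix:lanechange}). Your bookkeeping via changes in power sum, linear sum, and cardinality is just a slightly more systematic packaging of the same direct calculation.
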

\begin{proof}
The closed-form expression of the variance changes at each discrete jump can be found with standard algebra by comparing terms in the variances before and after the jump. For example, since
$var^L_{c, s} = \frac{1}{n_L}((a^L)^2 + (b^L)^2 + \sum_{\substack{i: none\\lag\;HV}} s_i^2) - \frac{1}{n_L^2} (a^L + b^L + \sum_{\substack{i: none\\lag\;HV}} s_i)^2 = \frac{1}{n_L}((a^L)^2 + (b^L)^2 + \sum_{\substack{i: none\\lag\;HV}} s_i^2) - {n_L^2} C^2$ and $\tilde{var}^L_{u,s} = \frac{1}{n_L - 1}((a^L + b^L)^2 + \sum_{\substack{i: none\\lag\;HV}} s_i^2) - \frac{1}{(n_L - 1)^2}C^2$, we have $\Delta_{c \rightarrow u,s}(\slirebuttal{s^L})= \tilde{var}^L_{u, s} - var^L_{c, s}= \frac{1}{n_L-1}var^1_{c, s} + \frac{1}{n_L - 1}2ab - \frac{1}{(n_L-1)^2n_L}C^2$. See Appendix~\ref{appendix:lanechange} for detailed derivations. 
\end{proof}

\noindent \textbf{Impact of discrete jumps on the system stability.} The closed-form expressions allow interpretations of how variance changes during AV lane-switch, which is driven by the following two factors: 1) the pre-jump variance, e.g. $var^1_{s, c}$: a high pre-jump variance, e.g. $var_{s, c}^L$, leads to increases (and decreases) in the post-jump variance, e.g. $\tilde{var}_{s, c}^L$, on the exiting (and entering) lanes, 2) the state information, as illustrated in Fig.~\ref{fig:laneswitch_goodbad}: a large headway between the AV and its two adjacent HVs leads to increases (and decreases) in the post-jump headway variances. Meanwhile, a greater deviation of the AV's velocity from the average velocity of all HVs results in decreases (and increases) the post-jump velocity variances on the exiting (and entering) lanes. 

Specifically, the closed-forms in Table~\ref{tab:analysis_jump} result in qualitative and quantitative understanding of $\Delta_{c\rightarrow u}(z^L)$ when AV exits a lane, in the following scenarios:
\begin{itemize}
\item Insufficient Control Duration (detailed in Observation~\ref{observation:fd}): due to the continuity of the system's solution with respect to the fixed trajectory $\bar{z}_c(t)$, when the AV exits a controlled lane prematurely before stabilizing the lane, it results in a large AV's headway when it exits, as illustrated by the peak in Fig.~\ref{fig:control_uncontrol_singlelane} (e). This in turn causes a large variance increase in the exiting lane due to a combination of a high pre-jump variance $var^L_{s,c}$ and a large product $a^L\cdot b^L$.
\item Equilibrium State: if the system stabilizes before the AV exits, $z^L$ is close to a known equilibrium state, which enables the computation of the closed-form $\Delta_{c\rightarrow u}(z^L)$. For instance, with Sufficient Control Duration (see Observation~\ref{observation:fd}), the system reaches the equilibrium $z^*_n$, where AV headway equals all HV headways when the AV exits, as shown at the end of trajectory in Fig~\ref{fig:control_uncontrol_singlelane} (e); this leads to a smaller variance increase in the exiting lane due to lower pre-jump variance $var^L_{s,c}$ and a smaller product $a^L\cdot b^L$.
\end{itemize}

These insights provide a systematic understanding of how AV controllers influence system stability. They also facilitate the design of efficient AV lane-switch controller, including \slired{proactively} reducing the AV headway before it exits, as discussed in the following Sec.~\ref{sec:controller}.

\begin{figure}
    \centering
    \includegraphics[width=0.48\textwidth]{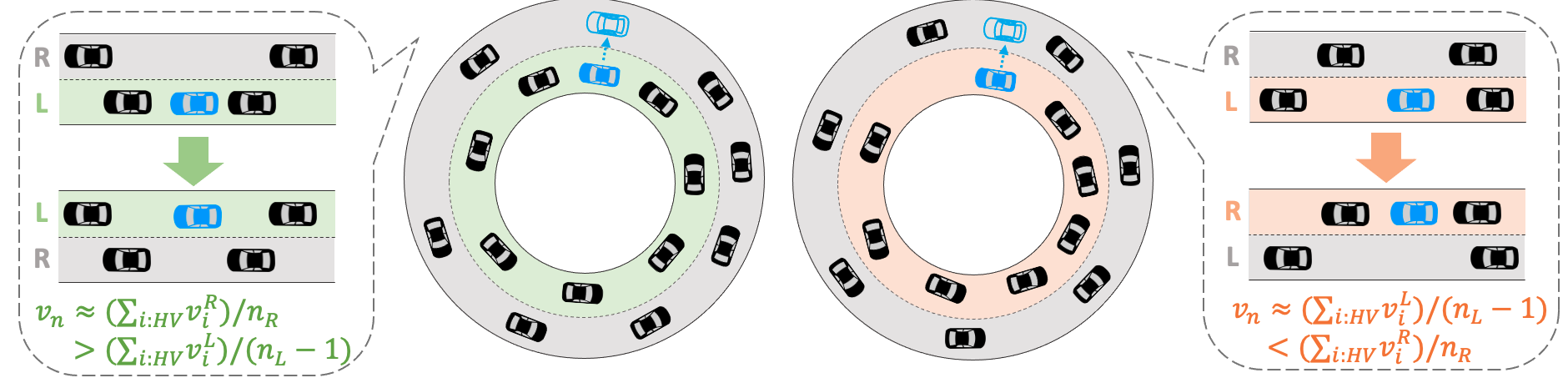}
    \caption{\textbf{Impact of AV lane switches on the system stability.} Illustrations of AV lane switches that decrease (left) or increase (right) system's variance based on the result in Table~\ref{tab:analysis_jump}.}
    \vspace*{-0.2cm}
    \label{fig:laneswitch_goodbad}
\end{figure}

\begin{figure*}[!t]
    \centering
    \includegraphics[width=0.985\textwidth]{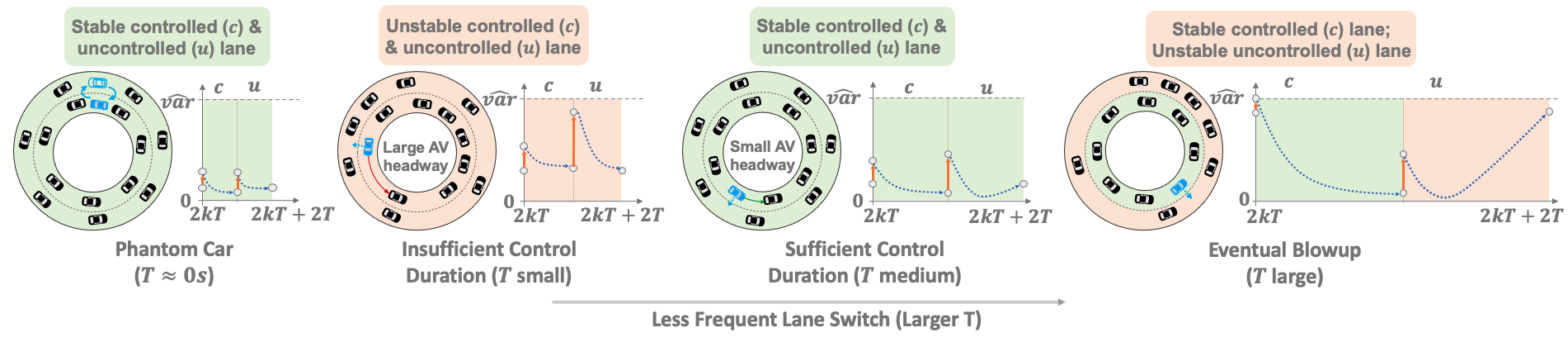}
    \caption{\textbf{Varied behaviors of the fixed-duration controller.} The theoretical analysis reveals distinct behaviors of the periodic orbit under different switch frequencies (every $Ts$). Low variance orbits occur in Phantom Car ($T \approx 0s$) and Sufficient Control Duration ($T$ medium), and high variance orbits occur in Insufficient Control Duration ($T$ small, both lanes are unstable), and Eventual Blowup ($T$ large, the uncontrolled lane is unstable). We provide a qualitative theoretical estimation of the trajectory variance and an illustration of the state at an AV lane-switch for each scenario.}
    \vspace*{-0.2cm}
    \label{fig:fixed_duration}
\end{figure*}

\section{TRAFFIC PHENOMENA ANALYSIS}
\label{sec:controller_fixed_duration}
This section applies the stability analysis in Sec.~\ref{sec:analysis} to investigate emergent traffic phenomena such as the traffic breaks and less-intrusive traffic regulation. By default, we assume when an AV is not switching lanes, it stabilizes its current lane with a full state feedback control $u^l(t) = -K x_c^l(t)$ (see Sec.~\ref{sec:experiment_setup} for details). Specifically, we examine the impact of lane-switch frequency on the system's stability by consider a fixed-duration switch strategy specified as follows.

\begin{controller}[Fixed Duration]
For the fixed-duration controller with a duration $T$, the AV controls one lane for a duration of $T$ seconds, and then switch to the other lane for another duration of $T$ seconds, and repeat.
\end{controller}
Due to the symmetry in the control duration, it is sufficient to consider the stability of lane $L$, as lane $R$ exhibits a similar behavior. Notably, for all $T$ presented in simulation (Sec.~\ref{sec:experiment}), we observe that the two-lane system converges to a periodic orbit around equilibrium. We state as an assumption below and empirically verify in Fig.~\ref{fig:assump_orbit} (see Sec.~\ref{sec:experiment_variance} for details). 
\begin{assumption}[Periodic Orbit]
\label{assump:orbit}
For any lane-switch frequency $T$, the two-lane system converges to a periodic orbit around equilibrium, that is, $var_0^{l, k} = var_0^{l, k+1}$ and $var_1^{l, k} = var_1^{l, k+1}$ for all $l \in \{L, R\}$ and all $k \geq K$ for some constant $K \geq 0$. 
\end{assumption}
\slirebuttal{We identify different stability behaviors arise from different lane-switch frequencies}, as depicted in Fig.~\ref{fig:fixed_duration} and stated below.
\begin{observation} 
\label{observation:fd}
The periodic orbit of the two-lane system exhibits the following distinct behaviors based on the lane-switch frequency, presented in the ascending order of $T$:
\begin{enumerate} 
\item \textbf{Phantom Car ($T \approx 0$, low-variance):} the system converges to a low-variance orbit, for both the controlled and uncontrolled lane, where the AV operates as if it duplicates itself to controls both lanes simultaneously.
\item \textbf{Insufficient Control Duration (small $T$, high-variance):} The system converges to a high-variance orbit, for both the controlled and uncontrolled lanes. Here, frequent AV lane-switch causes substantial variance increase, which cannot be adequately dissipated due to the limited time that the AV controls the lane before it exits.
\item \textbf{Sufficient Control Duration (medium $T$, low-variance):} The system converges to a low-variance orbit, both for the controlled and uncontrolled lane. In this scenario, the AV maintains control of each lane for a sufficient duration before exiting, effectively countering the impact of less frequent AV lane-switches.
\item \textbf{Eventual Blowup (large $T$, high-variance):} The system converges to a high-variance orbit with low variance in the controlled lane but high variance in the uncontrolled lane. \slirebuttal{Here, the AV's excessively long control over one lane leads to instability in the uncontrolled lane.}
\end{enumerate}
\end{observation}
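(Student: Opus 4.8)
\section*{Proof proposal for Observation~\ref{observation:fd}}

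The plan is to turn the four-regime statement into the analysis of a single scalar fixed-point recursion in the switch period $T$. By the symmetry of the fixed-duration controller---every lane is controlled for $T$ and then uncontrolled for $T$---it suffices to track lane $L$, and Assumption~\ref{assump:orbit} guarantees that after finitely many rounds the pair $(var_0^{L,k},var_1^{L,k})$ reaches a $T$-dependent limit $(v_0(T),v_1(T))$. Plugging this limit into the two-step recursion of Corollary~\ref{cor:traffic_round} with $t_1^k-t_0^k=t_0^{k+1}-t_1^k=T$ gives $v_1\le\alpha_1(v_0+\Delta_{u\rightarrow c})e^{-\alpha_2 T}$ and $v_0\le g_u(v_1+\Delta_{c\rightarrow u},T)$, where $g_u$ is the Lyapunov bound $\beta_1(\cdot)e^{\beta_2 T}$ of Proposition~\ref{prop:cont_var_bd1} or, for $T\le t_\epsilon$, the tightened state-dependent bound of Eq.~\eqref{eq:traffic_2}. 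Eliminating $v_1$ yields the one-dimensional relation $v_0\,[1-\alpha_1\beta_1 e^{(\beta_2-\alpha_2)T}]\le \beta_1 e^{\beta_2 T}[\alpha_1 e^{-\alpha_2 T}\Delta_{u\rightarrow c}+\Delta_{c\rightarrow u}]$, in which the ``loop gain'' $\alpha_1\beta_1 e^{(\beta_2-\alpha_2)T}$ and the jump-driven ``source term'' on the right are explicit $T$-dependent quantities built from Proposition~\ref{prop:cont_var_bd1} and the closed forms of $\Delta_{u\rightarrow c},\Delta_{c\rightarrow u}$ in Proposition~\ref{prop:lanechange}. The regime structure will come from how these two quantities trade off.

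Next I would feed in the state-dependent description of the two switch points of a round. For $T\le t_\epsilon$, Corollary~\ref{cor:cont_init} places the states at both jumps within $\epsilon$ of the nominal trajectories $\bar z_c(T)$ and $\bar z_u(T)$ (Fig.~\ref{fig:control_uncontrol_singlelane}), so the AV headway, the lagging-HV headway and the AV-velocity deviation entering $\Delta_{u\rightarrow c},\Delta_{c\rightarrow u}$ can be read off those curves, and the argument becomes self-consistent: a low $v_0$ keeps the switch states near the nominal ones, which keeps $v_0$ low. Two facts then drive the dichotomy. (i) As $T\to 0$ the two jumps of a round become mutual inverses (each splits, respectively recombines, the same headway at the same location), so $\Delta_{u\rightarrow c}(z_0^{L,k})+\Delta_{c\rightarrow u}(z_1^{L,k})\to 0$; moreover the controlled half is near the nominal trajectory starting at $\bar z_{0,c}$ and the uncontrolled half is contractive because $\overline{var}_u$ decreases on $[0,t_\epsilon]$, so the source term is only $O(\overline{var}_u(0)+\epsilon)=O(1/n^2)$, the unavoidable floor forced by the equilibrium change---this is the \textbf{Phantom Car} regime. (ii) For a slightly larger but still small $T$ the nominal trajectories have moved away from equilibrium, so the Proposition~\ref{prop:lanechange} closed forms give a strictly positive, non-negligible net jump $\Delta_{u\rightarrow c}+\Delta_{c\rightarrow u}\ge\Delta_{\min}(T)>0$, while $e^{-\alpha_2 T}\approx 1-\alpha_2 T$ is too weak to dissipate it within a round; the source term is then large, the \textbf{Insufficient Control Duration} regime. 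For medium $T$ I would show the controlled half-period is long enough that the controlled linear system has essentially converged, so the AV exits near $z^{*}_n$ and $\Delta_{c\rightarrow u}$ is small, while $\alpha_1 e^{-\alpha_2 T}\ll 1$ makes the loop gain small; restricting $T$ to the window where additionally the uncontrolled growth $e^{\beta_2 T}$ has not overwhelmed this contraction keeps the fixed point low---\textbf{Sufficient Control Duration}. Finally, for large $T$ the controlled lane is still driven to low variance by the long control, but the uncontrolled half-period of length $T$ is governed by the linearized OVM, which under the instability condition recalled in Sec.~\ref{sec:prelim_singlelane} has an unstable mode; since the AV-exit jump opens a gap, the post-exit state lies off the $(n-1)$-vehicle equilibrium with a nonzero component on the unstable eigenspace of $A_u$, producing exponential growth and forcing the high-variance \textbf{Eventual Blowup} orbit.

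The main obstacle is that Corollary~\ref{cor:traffic_round} provides only \emph{upper} bounds on the variances, which certify the low-variance orbits of the Phantom Car and Sufficient Control regimes but not the \emph{lower} bounds needed for the two high-variance regimes. To close this I would supply matching lower bounds: during a controlled half-period, $\dot V(x_c^l)=-(x_c^l)^\intercal Q x_c^l\ge-\tfrac{\lambda_{\max}(Q)}{\lambda_{\min}(P)}V(x_c^l)$, so by Gronwall and Lemma~\ref{lemma:varnorm} the variance cannot decay faster than a fixed exponential rate, leaving at the end of a short controlled period at least a constant multiple of the jump-injected variance; for the uncontrolled half-period, projecting onto the unstable eigenspace of $A_u$ gives $\|x_u^l(T)\|_2^2\ge c\,e^{2\mu T}\,\|x_u^l(0)\|_2^2$ with $\mu>0$ and $\|x_u^l(0)\|_2^2=\Omega(1/n^2)>0$. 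Combining these with the strictly positive lower bound on $\Delta_{u\rightarrow c}+\Delta_{c\rightarrow u}$ obtained by evaluating the Proposition~\ref{prop:lanechange} closed forms along the nominal trajectories away from equilibrium yields a round map whose fixed point is bounded \emph{below} by a large quantity in precisely these regimes. A secondary subtlety is that the genuine nonlinear orbit is kept finite only by saturation outside the linearized model, and that Assumption~\ref{assump:orbit} is itself empirical; accordingly I would state the low-variance halves of Observation~\ref{observation:fd} as rigorous consequences of the framework and the high-variance halves as the qualitative predictions that Fig.~\ref{fig:fixed_duration} and the experiments in Sec.~\ref{sec:experiment} confirm.
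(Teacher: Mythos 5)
Your proposal follows essentially the same route as the paper: the paper's own justification of Observation~\ref{observation:fd} (the ``Formalization'' passage in Sec.~\ref{sec:controller_fixed_duration}) likewise unwraps one round of the recursion $var_1^{L,k}\leq f_c(var_0^{L,k}+\Delta_{u\rightarrow c},T)$, $var_0^{L,k+1}\leq f_u(var_1^{L,k}+\Delta_{c\rightarrow u},T)$ from Cor.~\ref{cor:traffic_round}, invokes the same mechanisms regime by regime (net jump $\Delta_{u\rightarrow c}+\Delta_{c\rightarrow u}\approx 0$ as $T\to 0$; a positive net jump that $f_c$ cannot dissipate for small $T$; near-equilibrium exit states and hence small $\Delta_{c\rightarrow u}$ for medium $T$; growth of $f_u$ beyond $t_\epsilon$ for large $T$), and uses Prop.~\ref{prop:lanechange} together with the nominal-trajectory state information to argue the size of the jumps. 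Where you go beyond the paper is in (i) eliminating $v_1$ to obtain an explicit scalar fixed-point inequality with a loop gain and a source term, and (ii) recognizing that the framework supplies only \emph{upper} bounds, so the two high-variance regimes are not certified by Cor.~\ref{cor:traffic_round} alone, and proposing matching lower bounds (a Gronwall lower bound on the controlled decay and a projection onto the unstable eigenspace of $A_u$). The paper does not close this gap; it explicitly labels the observation a \emph{qualitative} interpretation and defers quantitative confirmation to the simulations in Sec.~\ref{sec:experiment_fd}. Your lower-bound supplement is the right fix in principle, though note that the claim that the post-exit state has a nonzero component on the unstable eigenspace of $A_u$ is asserted rather than proved, and your own closing caveat---treating the low-variance regimes as consequences of the framework and the high-variance regimes as empirically confirmed predictions---is exactly the epistemic status the paper itself adopts.
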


\noindent \textbf{Formalization of Observation~\ref{observation:fd}.} We formalize the observation using the variance upper bound in Thm.~\ref{thm:general_round} (specified in Cor.~\ref{cor:traffic_round}) through examining the effects of the duration $T$ on the continuous dynamics and the discrete jumps. 

Specifically, we examine $var_1^{L, k} \leq f_c(var_0^{L, k} + \Delta_{u \rightarrow c}(z_0^{L, k}), T)$ and $var_0^{L, k+1} \leq f_{u}(var_1^{L,k} + \Delta_{c\rightarrow u}(z_1^{L,k}), T)$ for a lane $L$ at a round $k$, where the AV first controls the lane followed by an uncontrolled period. According to the specification in Cor.~\ref{cor:traffic_round}, $f_c(\cdot, T)$ decreases with $T$, while $f_u(\cdot, T)$ decreases with $T$ initially (when $t \in [0, t_\epsilon]$) but increases with $T$ in the long run (when $t > t_\epsilon$); the discrete jumps $\Delta_{u \rightarrow c}(z_0^{L, k})$ and $\Delta_{c\rightarrow u}(z_1^{L,k})$ may lead to variance increases, the extent of which depends on the state $z_0^{L, k}, z_1^{L,k}$ after the continuous controlled or uncontrolled duration $T$ (see Sec.~\ref{sec:analysis_control_var}).

\begin{enumerate}
\item \textbf{Phantom Car}: \textit{the primary reason for the low-variance orbit is the minimal stability impact from the discrete jumps.} Specifically, $\Delta_{u \rightarrow c}(z_0^{L, k}) + \Delta_{c \rightarrow u}(z_1^{L,k}) \approx 0$ due to frequent AV lane-switches; this sum equals zero in the limit when $T = 0s$, as system recovers the same state ($z_0^{L,k} = z_0^{L, k+1}$) when the AV re-enters instantaneously. Moreover, the variance upper bounds $var_1^{L, k} \leq f_c(var_0^{L, k} + \Delta_{u \rightarrow c}(z_0^{L, k}), T)$ and $var_0^{L,k+1} \leq f_{u}(var_1^{L,k} + \Delta_{c\rightarrow u}(z_1^{L,k}), T)$ decrease during both the continuous controlled and uncontrolled periods, as $T \in [0, t_\epsilon]$ is small (see Cor.~\ref{cor:traffic_round}); the system hence gradually converges to a low-variance orbit when initialized from a high-variance state.
\item \textbf{Insufficient Control Duration}: \textit{the primary reason for the high-variance orbit is insufficient time to decrease the variance during the controlled period, which fails to dissipate the substantial variance increase from the frequent discrete jumps.} Specifically, an increased $T > 0$ leads to a large $\Delta_{u \rightarrow c}(z_0^{L, k}) + \Delta_{c \rightarrow u}(z_1^{L,k}) > 0$ since the traffic state changes substantially between two AV lane-switches. Starting from a high initial variance $var_0^{L, k}$ at a round $k$, (i) the variance at the end of the controlled period $var_1^{L, k} \leq f_c(var_0^{L, k} + \Delta_{u \rightarrow c}(z_0^{L, k}), T)$ remains high when $T$ is small (see Eq.~\eqref{eq:traffic_1}). (ii) The system hence incurs a high $\Delta_{c \rightarrow u}$ when AV exits the lane, due to a large $var_1^{L, k}$ and a large headway product $a^L\cdot b^L$ (see the discussion in Sec.~\ref{sec:analysis_jump}). (iii) This leads to a high variance at the end of the uncontrolled period $var_0^{L, k+1} \leq f_{u}(var_1^{L,k} + \Delta_{c\rightarrow u}(z_1^{L,k}), T)$ when the initial variance (first component) is high and $T$ is small (see Eq.~\eqref{eq:traffic_2}), resulting in a high initial variance for the next round $k+1$.  
\item \textbf{Sufficient Control Duration}: \textit{the primary reason for the low-variance orbit is sufficient variance decrease in the controlled period, which offsets the variance increase caused by the infrequent AV lane-switch relative to the control duration $T$.} Starting from an initial state with a high variance $var_0^{L, k}$, (i) the variance $var_1^{L, k} \leq f_c(var_0^{L, k} + \Delta_{u \rightarrow c}(z_0^{L, k}), T)$ can be adequately reduced when $T$ is sufficiently long (see Eq.~\eqref{eq:traffic_1}). (ii) This results in a low $\Delta_{c\rightarrow u}(z_1^{L, k})$ as the lane is stabilized near equilibrium (see the discussion in Sec.~\ref{sec:analysis_jump}). (iii) This leads to a low $var_0^{L, k+1} \leq f_{u}(var_1^{L,k} + \Delta_{c\rightarrow u}(z_1^{L,k}), T)$ as the initial variance (first component) is low and the duration $T$ (second component) is not excessively long (see Eq.~\eqref{eq:traffic_2}). Hence, the system converges to a low-variance orbit with an appropriate $T$ (neither too short nor too long).
\item \textbf{Eventual Blowup}: \textit{the primary reason for the high-variance orbit is the high variance at the end of the uncontrolled period.} When $T$ is excessively long, the system has a high $var_0^{L, k+1} \leq f_{u}(var_1^{L,k} + \Delta_{c\rightarrow u}\big(z_1^{L,k}), T)$ that makes the uncontrolled lane unstable (see Eq.~\eqref{eq:traffic_2}). In contrast, the system's variance is low at the end of the control period (low $var_1^{L, k} \leq f_c(var_0^{L, k} + \Delta_{u \rightarrow c}(z_0^{L, k}), T)$), as the AV controls the lane for a long time (see Eq.~\eqref{eq:traffic_1}).
\end{enumerate}

\noindent \textbf{Explanation of periodic orbital behaviors.} When $T$ is close to zero (Phantom Car) or sufficiently large (Sufficient Control Duration, Eventual Blowup), the system converges to a periodic orbit (Assump.~\ref{assump:orbit}) as the AV stabilizes the lane to equilibrium before it exits, i.e. $var_1^{l, k} \approx 0$ with $z_1^{l, k}$ near equilibrium ($z^*_{n-1}$ or $z^*_n$, as further discussed in Sec.~\ref{sec:experiment_fd}); hence, the trajectory visits similar states in different rounds. we leave as a future work to explain the periodic orbital behavior for other scenarios (Insufficient Control Duration, where $var_1^{l, k} > 0$) possibly with Floquet theory~\cite{barone1977floquet}.

\noindent \textbf{Emergent Traffic Phenomena.} The observation reveals emergent behaviors in multi-lane traffic control. When $T \approx 0$ (Phantom Car), the AV with rapid lane-switches resembles duplicated phantom vehicles to stabilize both lanes. As previewed in Fig.~\ref{fig:illustration}, this mimics the traffic break (rolling slowdown) typically implemented by a patrolling vehicle weaving across multiple lanes to guide the traffic~\cite{fhwa2019, californiahighway}. This paper hence provides a theoretical justification for a single AV to stabilize multi-lane traffic by imposing traffic breaks.\\
\indent \sli{While theoretically clean, the $T\approx 0$ setting is impractical due to physical limitations of moving actual vehicles,} and may lead to disruption on the natural traffic. Extending to a control longer duration $T$ offers a more practical and less intrusive alternative. \sli{This theory indicates the importance of selecting an appropriate duration $T$,} which should be neither too short (insufficient variance reduction) nor too long (instability during excessive uncontrolled period) \slirebuttal{to balance the variance changes} from the controlled and uncontrolled periods \slirebuttal{and mitigate impacts} from the discrete jumps. 

Notably, Observation~\ref{observation:fd} uncovers the aforementioned emergent traffic phenomena through a \textit{qualitative} interpretation of the theoretical analysis. Experiments in Sec.~\ref{sec:experiment_fd} further validates the close alignment of the theory and simulation, both \textit{qualitatively} and \textit{quantitatively}. Combining the theory with simulation, we provide additional insights such as 1) the low impact of the controller on the discrete jump when AV enters a lane $\Delta_{u \rightarrow c}$ (compared to $\Delta_{c \rightarrow u}$), and 2) state-dependent properties of the periodic orbits. We refer the readers to Sec.~\ref{sec:experiment_theory} and~\ref{sec:experiment_fd} for details.

\section{CONTROLLER DESIGN}
\label{sec:controller}
Besides interpreting traffic phenomena emerging from the fixed-duration controller, the theoretical framework allows us to devise AV lane-switch strategies to enhance system stability, by reducing specific terms from the variance upper bound in Thm.~\ref{thm:general_round}. We present two controller extensions to reduce the impact from the discrete jumps $\Delta_{c \rightarrow u}$ and $\Delta_{u \rightarrow u}$, respectively.

\subsubsection{Anticipatory Control}
\label{sec:controller_pe}

\begin{figure}[!t]
    \centering
    \includegraphics[width=0.4\textwidth]{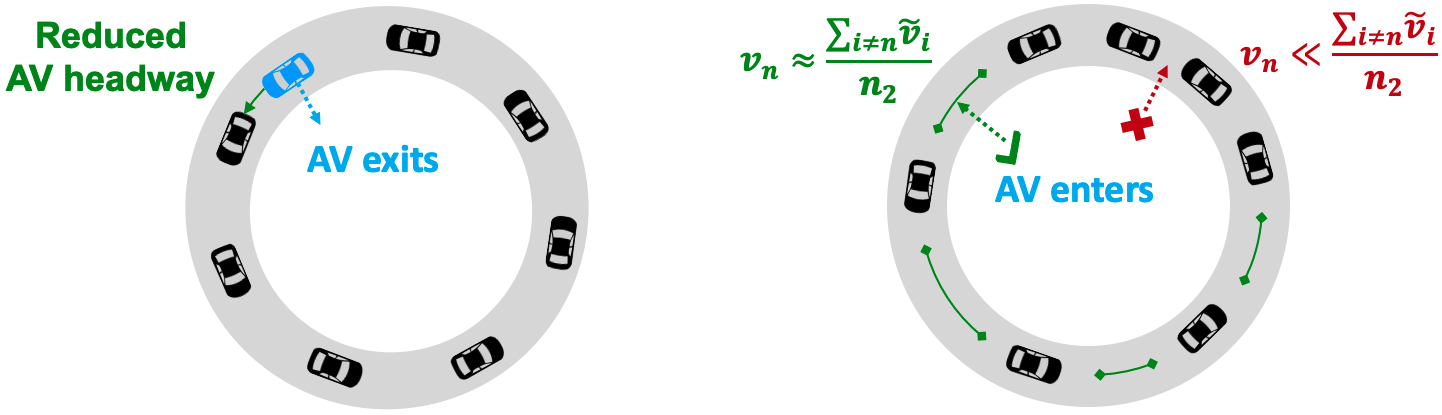}
    \caption{\textbf{Anticipatory (left) and Traffic-aware (right) controllers.} The anticipatory controller reduces the variance when AV exits a lane by decreasing the AV headway and transitioning the traffic to a new equilibrium $\hat{s}^{*l}, \hat{v}^{*l}$ prior to the exit. The traffic-aware controller reduces the variance when AV enters a lane by executing a lane-switch within a time window when the imposed criteria are satisfied (sufficient AV headway and alignment of the AV velocity to the average traffic velocity).}
    \vspace*{-0.3cm}
    \label{fig:ta_pe_controller}
\end{figure}

To reduce the stability impact when AV exits a lane ($\Delta_{c \rightarrow u}$), we apply a single-lane theoretical result derived in~\cite{zheng2020smoothing}, which states that the AV can reduce its headway and stabilize the system to a higher velocity equilibrium with the headways $\hat{s}^{*}_n = [\hat{s}_{HV}^{*}, ..., \hat{s}_{HV}^{*}, \hat{s}_{AV}^{*}]$, where $\hat{s}_{HV}^{*}$ and $\hat{s}_{AV}^{*}$ are the equilibrium headways for HVs and the AV, with $\hat{s}_{AV}^{*} \in [0, s^{*}_n]$ and $\hat{s}_{HV}^{*} = \frac{C - s_{AV}^{*}}{n-1}$, with $s^*_n = C / n$ the original equilibrium headway described in Sec.~\ref{sec:prelim_singlelane} with the original equilibrium state $z^*_n$. Similarly, the velocities $\hat{v}^{*} = (\hat{v}_{HV}^{*}, ... , \hat{v}_{HV}^{*}, \hat{v}_{AV}^{*})$, where $\hat{v}_{HV}^{*} = \hat{v}_{AV}^{*}$ is greater than $v^{*}_n$ when $\hat{s}^{*}_{AV} < s^{*}_n$. This can be achieved by a linear feedback control $\hat{u}(t) = - \hat{K} \hat{x}(t)$ with the error state $\hat{x}(t) = z(t) - \hat{z}^*_n$, computed under the linearized system around the higher-velocity equilibrium $\hat{z}^*_n = [\hat{s}_{HV}^{*}, \hat{v}_{HV}^{*}, ..., \hat{s}_{AV}^{*}, \hat{v}_{AV}^{*}]$. We modify the fixed-duration controller as follows:

\begin{controller}[\sli{Anticipatory} Control]
\label{statement:pe} \sli{This controller anticipates the traffic after the AV exits and brings the lane towards the equilibrium of the uncontrolled lane before the AV exits the lane.} Specifically, consider a control period $[0, T]$. During $t \in [0, p_{ex}\cdot T]$ where $p_{ex} \in [0, 1]$, the AV adopts the fixed-duration controller $u(t)$ to stabilize the lane to the equilibrium state $z^{*}_n$. During $t \in [p_{ex}\cdot T, T]$, the AV updates to the new controller $\hat{u}(t)$ for the equilibrium state $\hat{z}^{*}_n$ \sli{that reduces the AV headway, before it exits the lane at $T$}. 
\end{controller}

From Table~\ref{tab:analysis_jump} (the AV exit column), the anticipatory control strategy reduces $\Delta_{c\rightarrow u, s}$ by reducing the product $a^l\cdot b^l$, as the AV headway is reduced with a reduced $\hat{s}_{AV}^{*l} < s_{AV}^{*l}$. In the limit, we have $a = \hat{s}_{AV}^{*l} = 0$, and hence the new $a^l\cdot b^l = 0$ is significantly smaller than the previous equilibrium $a^l\cdot b^l = s_{HV}^{*l} = (C/n)^2$. Meanwhile, $\Delta_{c\rightarrow u, v}$ remains small as the AV velocity remains similar to the HV's average velocity. The strategy hence reduces $\Delta_{c\rightarrow u}$ when AV exits a lane.

Notably, during the controlled period, the new equilibrium $\hat{z}^*_n$ has an increased the headway variance as the AV headway is smaller than the HV headway, while the original equilibrium $z^{*}_n$ has a zero headway variance. The reduced AV headway $\hat{s}^{*}_{AV}$ may be less ideal due to safety or comfort concern for the HVs. The constant $p_{ex}$ in the statement balances the usage of the controller $u(t)$ and $\hat{u}(t)$ to stabilize the system to the original $z^{*l}$ or new $\hat{z}^{*}$, with a trade-off between lower variance during controlled period, or lower discrete jump $\Delta_{c\rightarrow u}$.

\begin{figure*}
    \centering
    \begin{subfigure}{0.25\textwidth}
    \includegraphics[width=\textwidth]{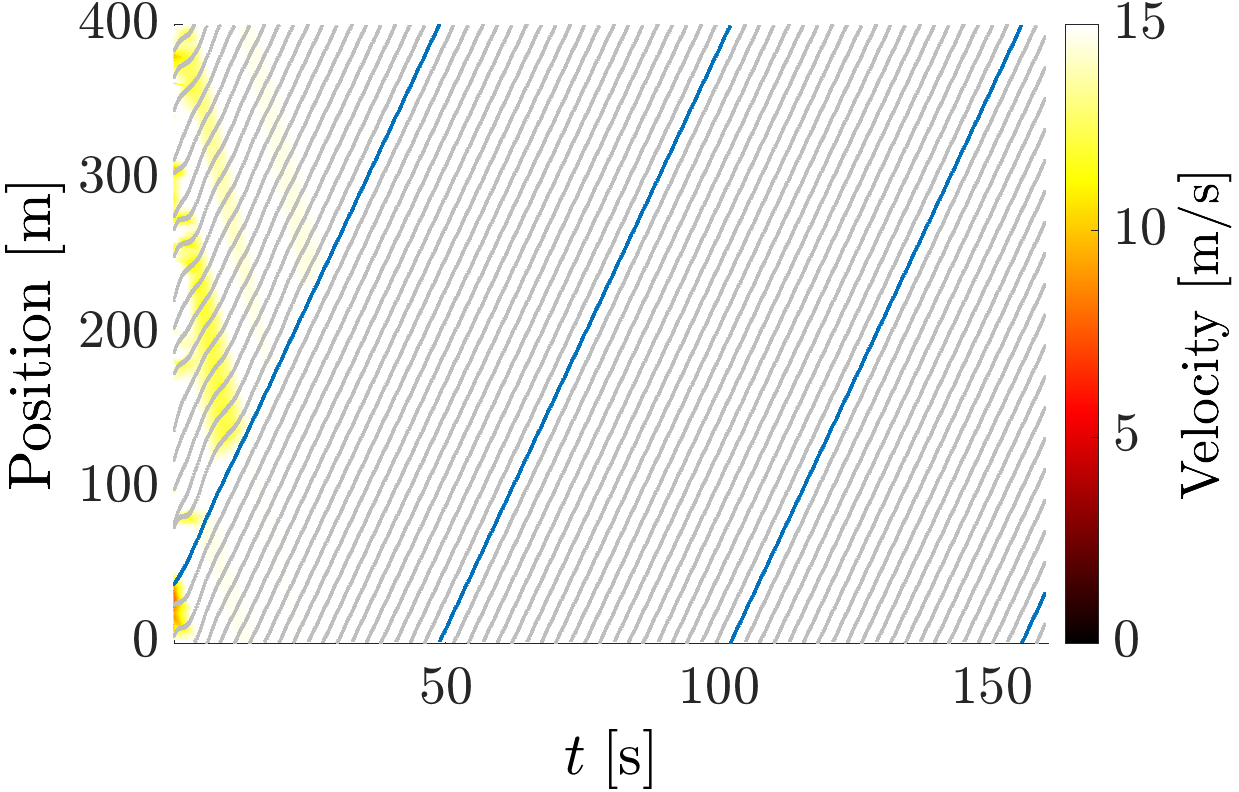}
    \caption{Controlled Time-Space Diag.}
    \end{subfigure}
    \begin{subfigure}{0.24\textwidth}
    \includegraphics[width=\textwidth]{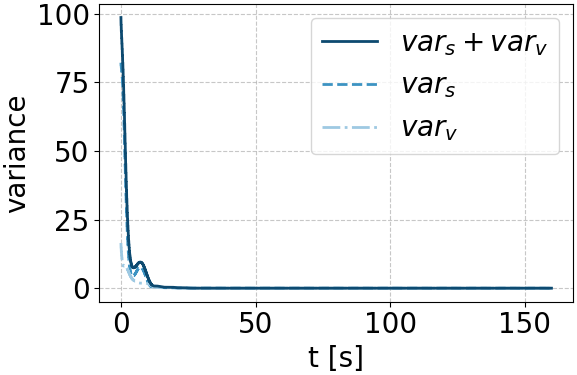}
    \caption{Controlled Variance}
    \end{subfigure}
    \begin{subfigure}{0.25\textwidth}
    \includegraphics[width=\textwidth]{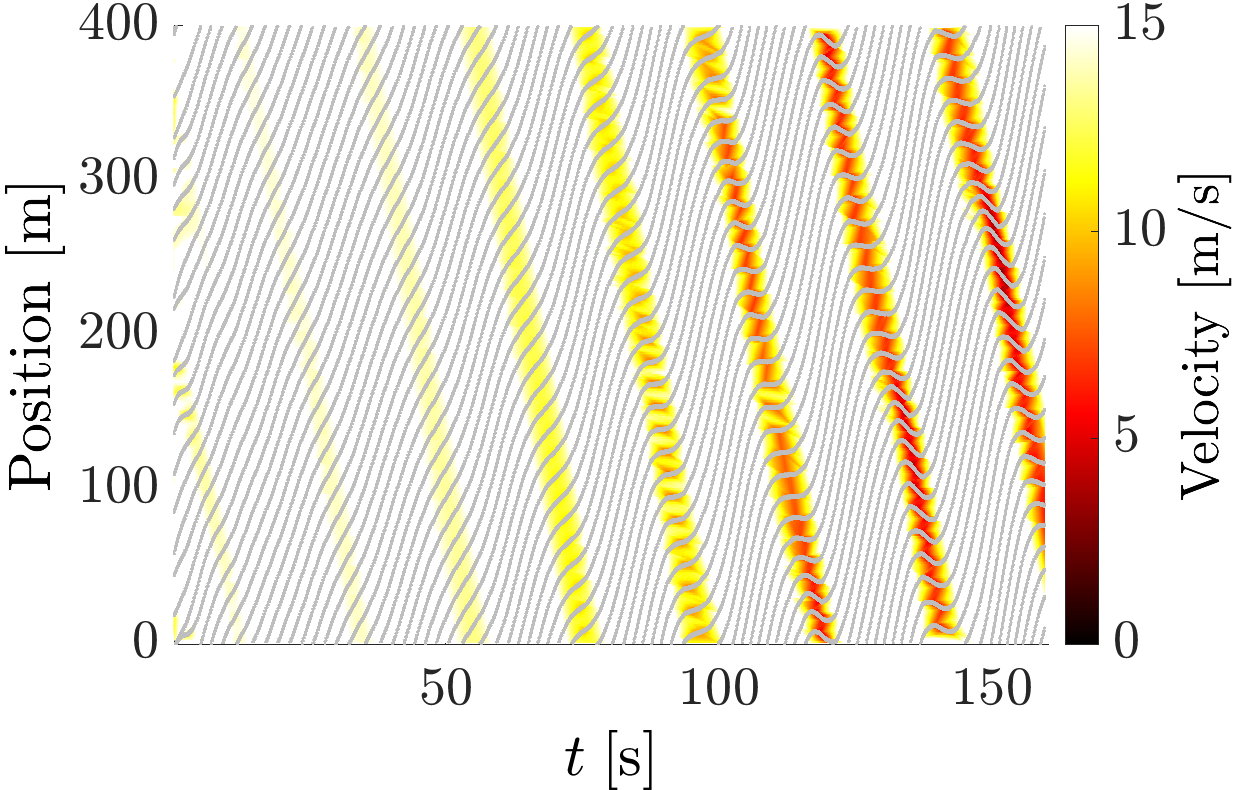}
    \caption{Uncontrolled Time-Space Diag.}
    \end{subfigure}
    \begin{subfigure}{0.24\textwidth}
    \includegraphics[width=\textwidth]{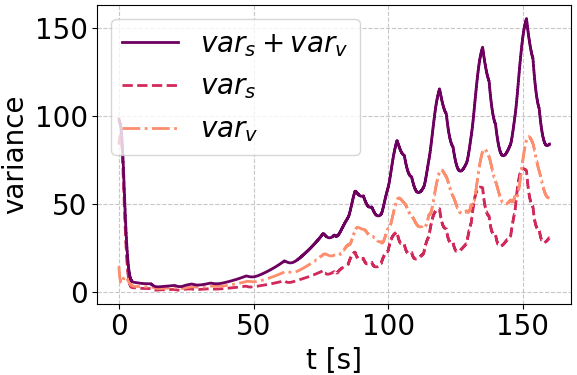}
    \caption{Uncontrolled Variance}
    \end{subfigure}
    \caption{\textbf{Controlled and uncontrolled single-lane trajectories with random initialization.} The time-space diagram and simulation variance across the trajectories for the controlled (a-b) and uncontrolled (c-d) lanes.}
    \vspace*{-0.2cm}
    \label{fig:single_lane_init}
\end{figure*}
\begin{table*}[!ht]
\centering
 \caption{System parameters for the two-lane traffic system based on the Optimal Velocity Model (OVM).}
\begin{tabular}{ |c|c|c| } 
 \hline
 \textbf{Symbol} & \textbf{Value} & \textbf{Description} \\ 
 \hline
 $C$ & $400m$ & Circumference of each ring-road $l \in \{L,R\}$, where the equilibrium spacing $s^*_{n_l} = C / n_l$ \\ 
 \hline
 $n_l$ & $20$ (c) / $19$ (u) & Number of vehicles in the controlled (c) or uncontrolled (u) ring-road, where the equilibrium spacing $s^*_{n_l} = C / n_l$ \\
 \hline
 $s_{st}$ & $5m$ & Small spacing threshold such that the optimal velocity $= 0$ below the threshold, see Eq.~\eqref{eq:ovm_f} \\
 \hline
 $s_{go}$ & $35m$ & Large spacing threshold such that the optimal velocity $= v_{max}$ above the threshold, see Eq.~\eqref{eq:ovm_f} \\
 \hline
  $v_{max}$ & $30m/s$ & Maximum optimal velocity, see Eq.~\eqref{eq:ovm_f} \\
 \hline
 $\alpha$ & $0.6$ & Driver's sensitivity to the difference between the current velocity and the desired spacing-dependent optimal velocity, see Eq.~\eqref{eq:ovm}\\
 \hline
 $\beta$ & $0.9$ & Driver's sensitivity to the difference between the velocities of the ego vehicle and the preceding vehicle, see Eq.~\eqref{eq:ovm} \\
 \hline
\end{tabular}
 \label{tab:params}
 \vspace*{-0.3cm}
\end{table*}
\vspace*{0.2cm}
\subsubsection{Traffic-aware Lane-switch}
\label{sec:controller_ta}
To reduce the variance impact when AV enters a lane ($\Delta_{u \rightarrow c}$), we can augment the fixed-duration controller with the following lane-switch strategy.

\begin{controller} [Traffic-Aware Lane-switch]
\label{statement:ta}
\sli{This controller allows the AV to enter a lane if the state after the AV enters is near the equilibrium of a controlled lane.} Specifically, consider a lane that is uncontrolled from $t = 0$. The AV enters the lane during $t \in [T - \Delta T_{ta}, T + \Delta T_{ta}]$ if the following hold:
\begin{enumerate}
\item $a^l\cdot b^l$ is large, i.e. the distances between the AV and the two adjacent HVs on the enter lane are at least $p_{en, s}\cdot s^{*}_{n_l}$ for some $p_{en, s} \in [0, 1]$.
\item $|v_n - \frac{1}{n-1}\sum_{i: HV} v^l_i|$ is small, i.e. the difference in the AV's velocity and the average HVs' velocity is at most $p_{en, v} \cdot v^*_{n}$ for some $p_{en, v} \in [0, 1]$,
\end{enumerate}
If the criteria are not met within the time window, controller mandates the AV to execute a lane-switch at $T + \Delta T_{ta}$.
\end{controller}
Based on Table~\ref{tab:analysis_jump} (the AV enters column), the two additional criteria for the traffic-aware lane-switch strategy reduces $\Delta_{u \rightarrow c, s}$ and $\Delta_{u \rightarrow c, v}$ from the fixed-duration controller, through an increase in the product $a^l\cdot b^l$ and a reduction in the velocity difference $|v_n - \frac{1}{n-1}\sum_{i: HV} v^l_i|$. Hence, the controller can reduce $\Delta_{c \rightarrow u}$ when AV enters a lane.

Notably, this strategy may change the lane-switch duration, with the magnitude of the change depending on the choice of $\Delta T_{ta}$). As a consequence, this may lead to additional variance change in the continuous controlled and uncontrolled periods. For example, a longer lane-switch duration reduces (and in general increases) the variance in the controlled (and the uncontrolled) period. Empirically, we choose the parameters $\Delta T_{ta}$, $p_{en, s}$ and $p_{en, v}$) to balance the variance change $\Delta_{u \rightarrow c}$ at the discrete jumps and during the continuous periods.

\section{NUMERICAL ANALYSIS}
\label{sec:experiment}
In this section, we compare the theoretical analysis with numerical simulation. We consider the following questions:
\begin{enumerate}
\item Is the variance function in Eq.~\eqref{eq:variance} an appropriate stability metric for the multi-lane mixed autonomy system?
\item To what extent does the variance upper bound in Cor.~\ref{cor:traffic_round} align with the variances observed in \slirebuttal{simulations}?
\item Can the theory reliably explain emergent traffic phenomena, such as traffic break in Obs.~\ref{observation:fd}, when employing a fixed-duration controller with various duration $Ts$?
\item Do the anticipatory and traffic-aware controllers effectively reduce the variance when the AV switches lanes, comparing with the fixed-duration controller? 
\end{enumerate}

\subsection{Experimental Setup}
\label{sec:experiment_setup}
We extend the single-lane implementation from Zheng et al.~\cite{zheng2020smoothing} in Python to the two-lane system with an equal circumference of $C=400\slirebuttal{m}$. As detailed in Tab.~\ref{tab:params}, we use the same set of OVM parameters, with $\alpha = 0.6, \beta = 0.9, s_{st} = 5\slirebuttal{m}, s_{go} = 35\slirebuttal{m}, v_{max} = 30\slirebuttal{m/s}$. Each lane has $19$ HVs; a single AV switches between the two lanes, resulting in $n_l = 20$ or $n_l = 19$ total number of vehicles when the lane is controlled or uncontrolled, respectively, at any  given time.
Vehicles are initialized by a uniform perturbation around the equilibrium, with the $i^{th}$ vehicle's position and velocity $(x_{i, 0}^{l}, v^l_{i, 0}) = (s^{*l} + \delta_s, v^{*l} + \delta_v)$ where $\delta_s \sim Unif[-12\slirebuttal{m}, 12\slirebuttal{m}], \delta_v \sim Unif[-7.5\slirebuttal{m / s}, 7.5\slirebuttal{m / s}]$, and $s^{*l} = C / n_l, v^{*l} = V(s^{*l})$ are the equilibrium headway and velocity (see Sec.~\ref{sec:prelim_singlelane}). The AV applies a $\mathcal{H}_2$ optimal full state-feedback controller within the controlled lane $u^l(t) = -K x^l_c(t)$, where $K \in \mathbb{R}^{1\times 2n_l}$ can be obtained by the following convex program with $K = ZX^{-1}$:
\begin{equation}
\begin{aligned}
   \min\limits_{X, Y, Z} \quad &  \text{Trace}(QX) + \text{Trace}(RY) \label{eq:H2_obj}\\
   \text{subject to} \quad & (A_c X - B_c Z) + (A_c X - B_c Z)^\intercal +H H ^\intercal \preccurlyeq 0,\\
   & \begin{bmatrix}
        Y & Z\\
        Z^\intercal & X
     \end{bmatrix} \succcurlyeq 0, X \succ 0.
\end{aligned}
\end{equation} 
\begin{equation}
    \text{where } \quad Q^{\frac{1}{2}} = \text{diag}(\gamma_s, \gamma_v, ..., \gamma_s, \gamma_v), \; R^{\frac{1}{2}} = \gamma_u, \; H = I \label{eq:H2_param}
\end{equation}
with $\gamma_s = 0.03, \gamma_v = 0.15, \gamma_u = 1$ and the performance state $ z(t) = \begin{bmatrix}Q^{\frac{1}{2}}\\ 0\end{bmatrix}x^l_c(t) + \begin{bmatrix}0 \\ R^{\frac{1}{2}}\end{bmatrix}u^l(t)$. 

We simulate the system by integrating the ordinary differential equations (Eq.~\ref{eq:control_matrix_system},~\ref{eq:uncontrol_matrix_system}) using the forward Euler method, with a discretization of $T_{step}=0.01s$. \slirebuttal{We pre-filter initial states $(x^l_{i, 0}, v^l_{i, 0})$ to exclude those leading to immediate collision (within $0.1s$), and we} equip all vehicles with a standard automatic emergency braking system $\dot{v}(t) = a_{min}, \text{ if } \frac{v_i^2(t) - v_{i-1}^2(t)}{2(s_i(t) - s_d)} \geq |a_{min}|$, where $a_{min} = -5m/s^2$ is the maximum deceleration, and $s_d = 0.5m$ is the safe distance~\cite{zheng2020smoothing}\slirebuttal{; we find the emergency braking to have negligible effect on the simulated trajectories}. 

As shown in Fig.~\ref{fig:single_lane_init}, the default uncontrolled single-lane OVM system is unstable, gradually forming stop-and-go waves in the system, whereas introducing an AV to the single-lane system effectively stabilizes the lane. The figure also depicts the simulation variances for the single-lane trajectories. The variance for the controlled OVM system gradually decreases to zero as the system reaches equilibrium (uniform headway and velocity), while the variance for the uncontrolled OVM system initially decreases (due to the short-term headway alignment), and subsequently increases (due to the asymptotic instability of the uncontrolled OVM dynamics). This observation validates using variance as a stability-metric for the single-lane system. We examine the multi-lane system next. 

\begin{figure}
    \centering
    \begin{subfigure}[b]{0.24\textwidth}
         \centering
         \includegraphics[width=\textwidth]{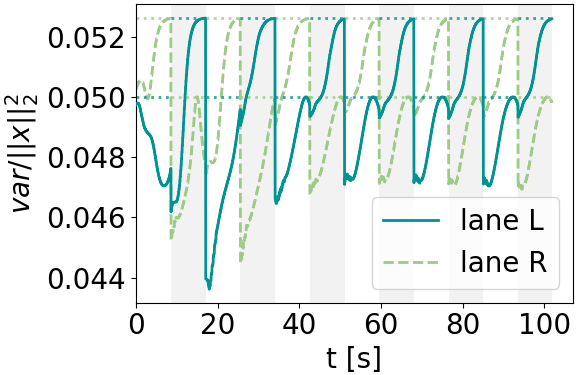}
         \caption{\quad $T = 8.5s$}
    \end{subfigure}
    \begin{subfigure}[b]{0.24\textwidth}
         \centering
         \includegraphics[width=\textwidth]{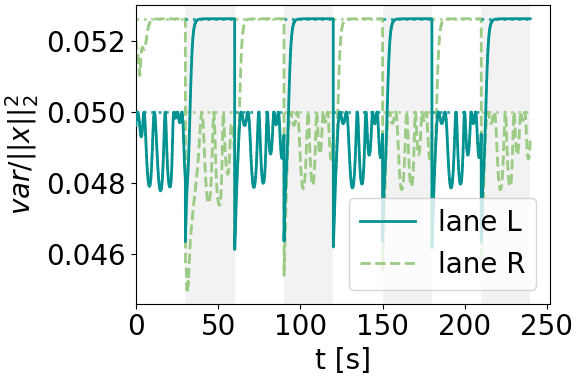}
         \caption{\quad $T = 30s$}
    \end{subfigure}
    \caption{\textbf{Variance-to-state norm ratio.} The ratio is calculated across the trajectory for various AV lane-switch frequencies. \sli{The white and gray shading represents the AV presence in lane $L$ and $R$, respectively. The horizontal dotted lines at $1/n_l = $ $1 / 19$ and $1 / 20$ represents the ratio upper bounds when the lanes are controlled and uncontrolled with $n_l = 20$ and $19$, respectively (see proof in Appendix~\ref{appendix:varnorm}).} The proximity of the ratio to the upper bound validates Lemma~\ref{lemma:varnorm} empirically. }
    \vspace*{-0.2cm}
    \label{fig:lemma_varnorm}
\end{figure}

\begin{figure}
    \centering
    \begin{subfigure}[b]{0.24\textwidth}
         \centering
         \includegraphics[width=\textwidth]{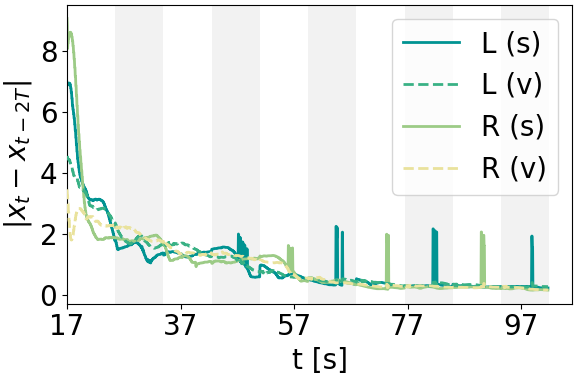}
         \caption{\quad $T = 8.5s$}
    \end{subfigure}
    \begin{subfigure}[b]{0.24\textwidth}
         \centering
         \includegraphics[width=\textwidth]{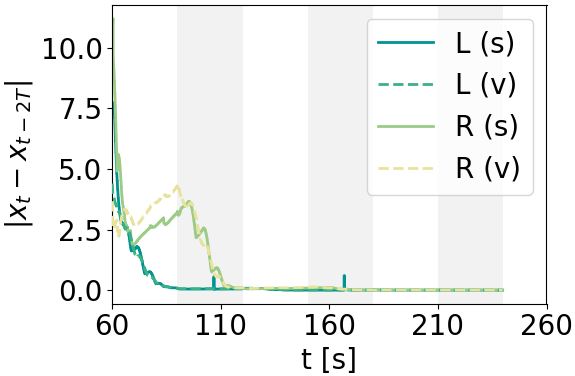}
         \caption{\quad $T = 30s$}
    \end{subfigure}
    \caption{\textbf{Periodic orbit.} We plot the distance between each state $x_t$ at time $t$ and the state $x_{t - 2T}$ in the previous round under various AV lane-switch frequencies\sli{, calculated as $\|x_t - x_{t - 2T}\|_1 / dim(x_t)$}. The convergence of the corresponding states validates Assumption ~\ref{assump:orbit} empirically. \sli{As the AV may enter a different pair of HVs at each round, we reorder the HV indices for each state $x_t$, so that the AV always appear the last vehicle in the vector representation of the state. Notably, our analysis uses the variance-based stability metric, which is invariant to the ordering of the vehicles in a specific vector representation.} }
    \vspace*{-0.2cm}
    \label{fig:assump_orbit}
\end{figure}

\vspace{-0.1cm}
\subsection{Variance as the stability metric} 
\label{sec:experiment_variance}
Fig.~\ref{fig:experiment_fd} shows the simulation trajectories and the corresponding variances of the two-lane system, where the AV employs the fixed-duration strategy under different duration $Ts$. The simulation variances appropriately captures the stability (or instability) of both lanes, establishing a clear connection between low variance and proximity to equilibrium, as well as high variance and formation of stop-and-go waves. This relationship holds for both continuous dynamics (controlled or uncontrolled) and the discrete jumps (AV exits or enters). Additionally, Fig.~\ref{fig:lemma_varnorm} plots the ratio of variance to error state norm throughout the trajectory; this ratio consistently maintains a positive value, confirming the validity of Lemma 1 (the alignment between variance and the closeness of the corresponding state to the equilibrium).

The time-space diagram reveals that the trajectories converge to periodic orbits for various duration $Ts$. Taking the $T=30s$ plot as an example, the trajectory gradually converges to a low variance orbit, with all vehicles' headways and velocities remaining roughly constant. Fig.~\ref{fig:assump_orbit} confirms the distance between the entering state at round $k$ and the subsequent rounds remains small, validating our Assump.~\ref{assump:orbit} regarding convergence to periodic orbits. 

\begin{figure}
    \centering
    \includegraphics[width=0.49\textwidth]{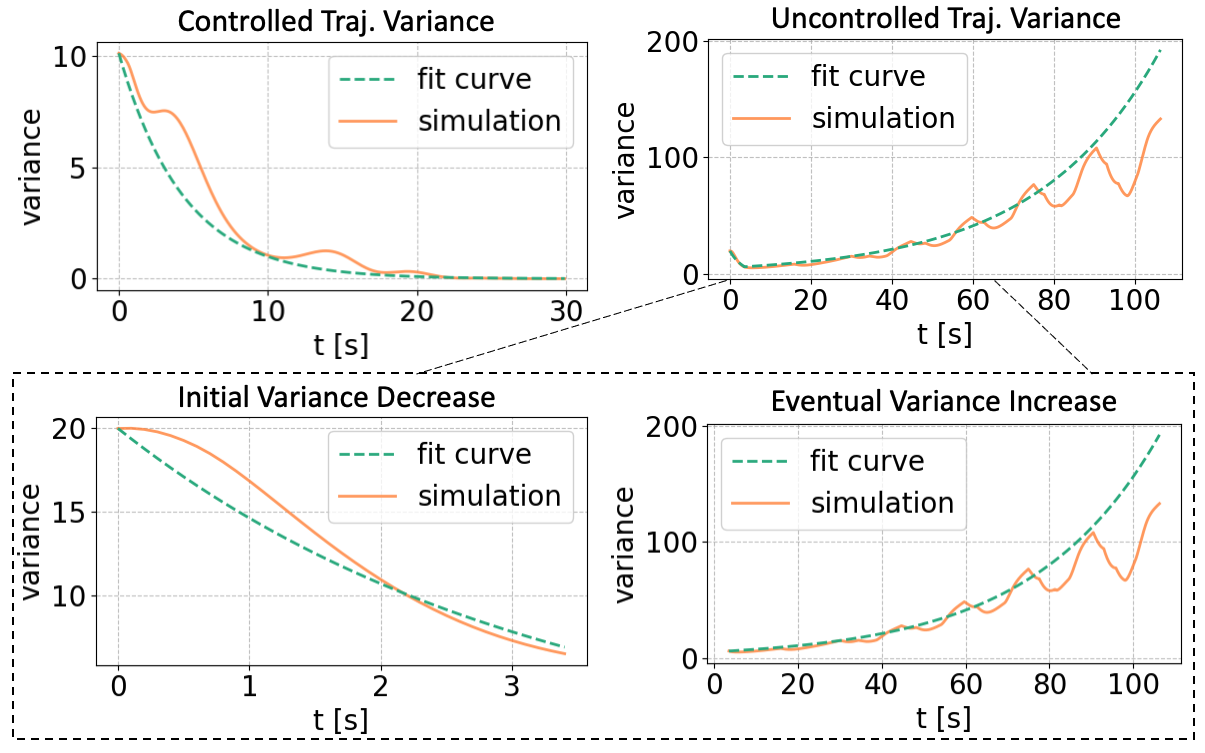}
    \caption{\textbf{Variance function fitting for the continuous dynamics}. Top: Fitted variance function for the continuous controlled and uncontrolled trajectories under fixed initial states $\bar{z}_{0, c}, \bar{z}_{0, u}$ depicted in Fig.~\ref{fig:control_uncontrol_singlelane} (a) and (f). Bottom: The fitted variance of the uncontrolled trajectory consists of an initial decrease segment followed by an eventual increase segment.}
    \vspace*{-0.2cm}
    \label{fig:single_lane_fit}
\end{figure}

\begin{figure*}
    \centering
    \begin{subfigure}[b]{0.49\textwidth}
        \centering
        \includegraphics[width=0.5\textwidth]{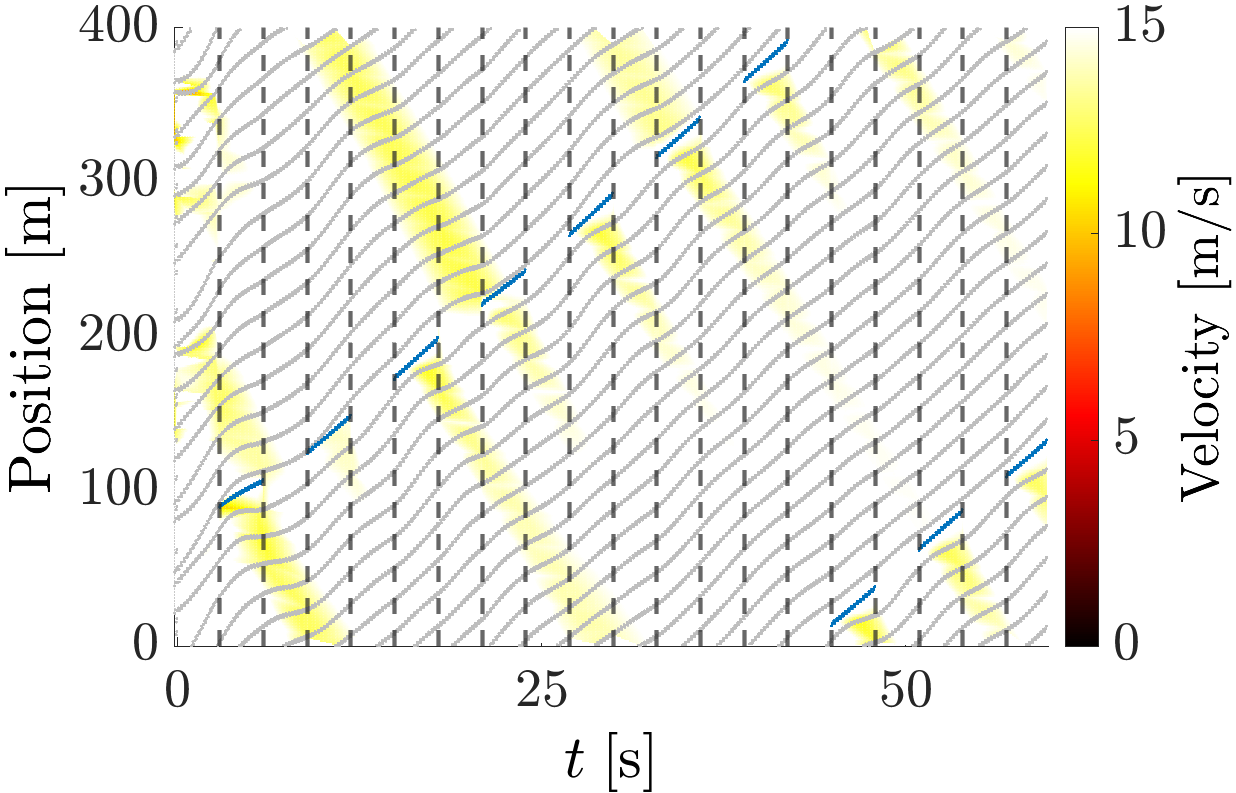}
        \includegraphics[width=0.48\textwidth]{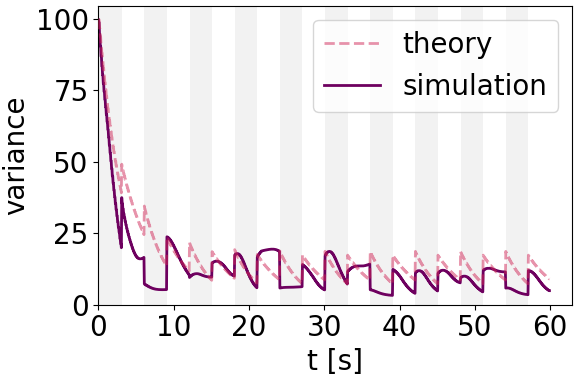}
        \caption{Phantom Car $T = 3s$}
    \end{subfigure}
    \begin{subfigure}[b]{0.49\textwidth}
        \centering
        \includegraphics[width=0.5\textwidth]{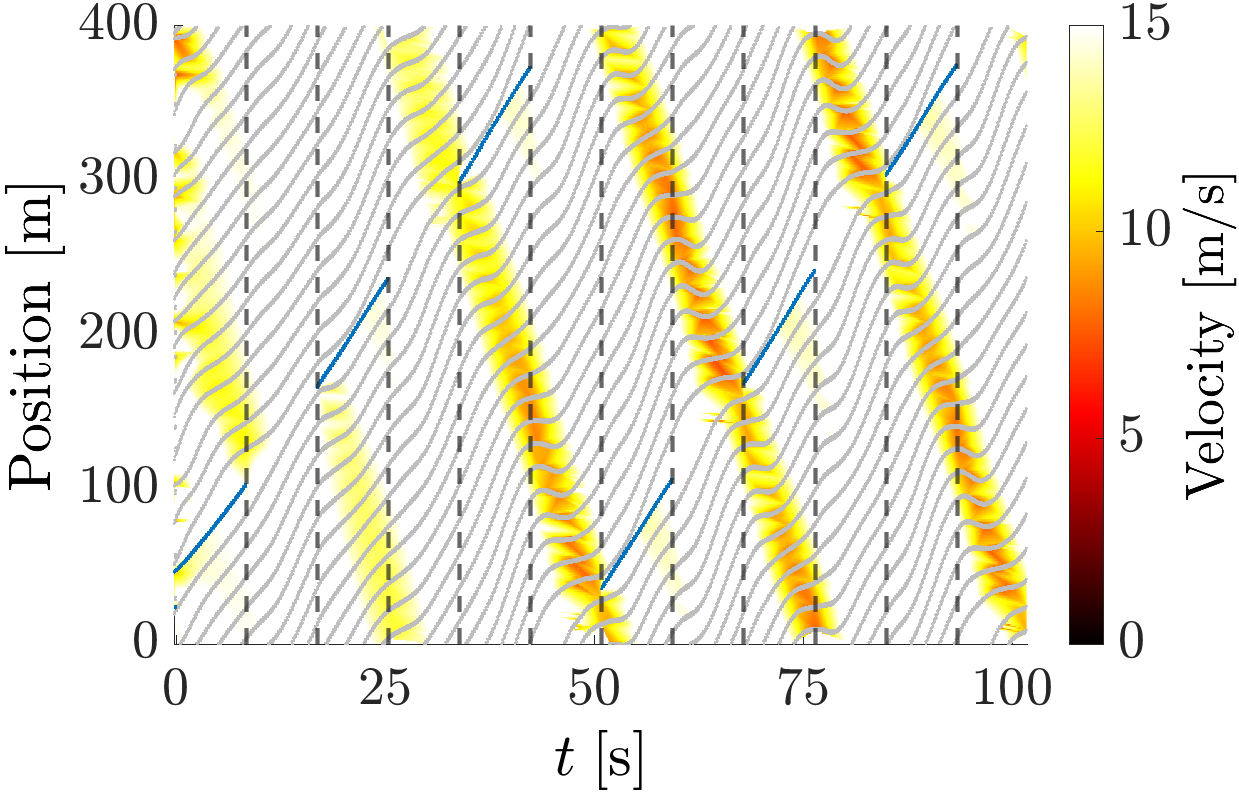}
        \includegraphics[width=0.48\textwidth]{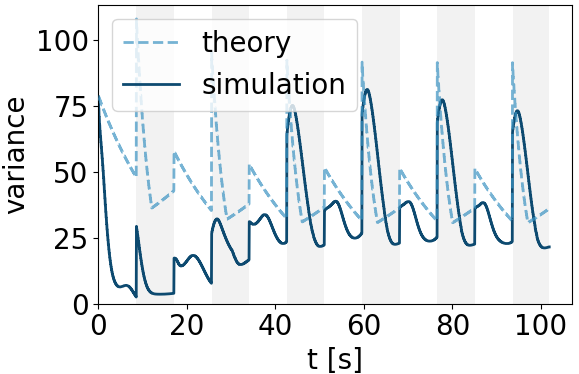}
        \caption{Insufficient Control Duration $T = 8.5s$ }
    \end{subfigure} \\ 
    \begin{subfigure}[b]{0.49\textwidth}
        \centering
        \vspace{0.3cm}
        \includegraphics[width=0.5\textwidth]{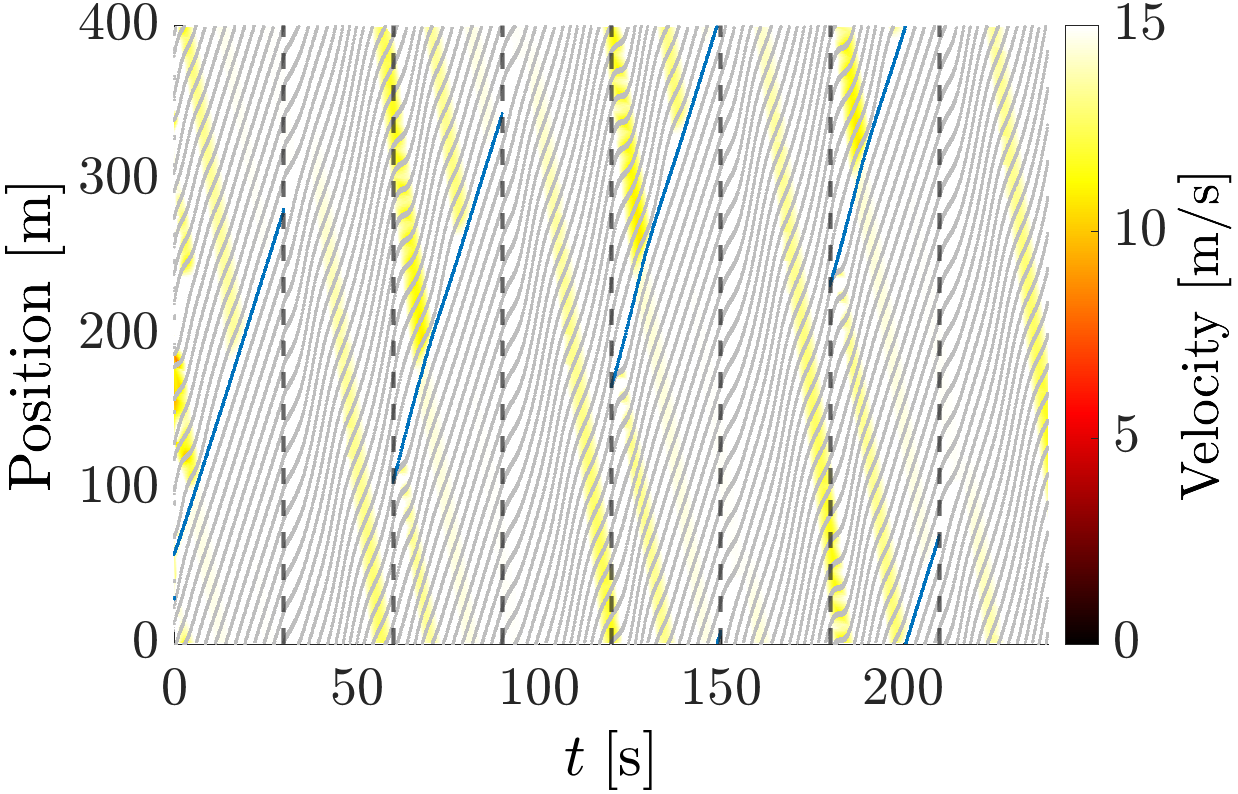}
        \includegraphics[width=0.48\textwidth]{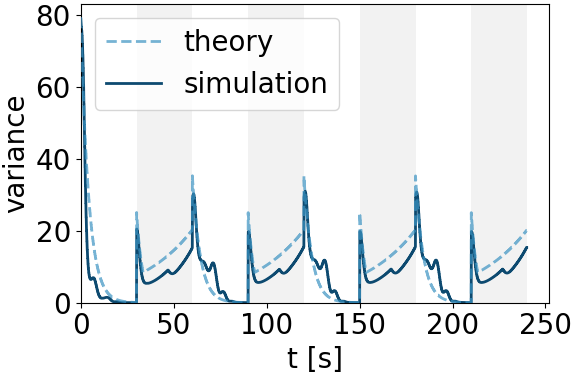}
        \caption{Sufficient Control Duration $T = 30s$}
    \end{subfigure}
    \begin{subfigure}[b]{0.49\textwidth}
        \centering
        \vspace{0.3cm}
        \includegraphics[width=0.5\textwidth]{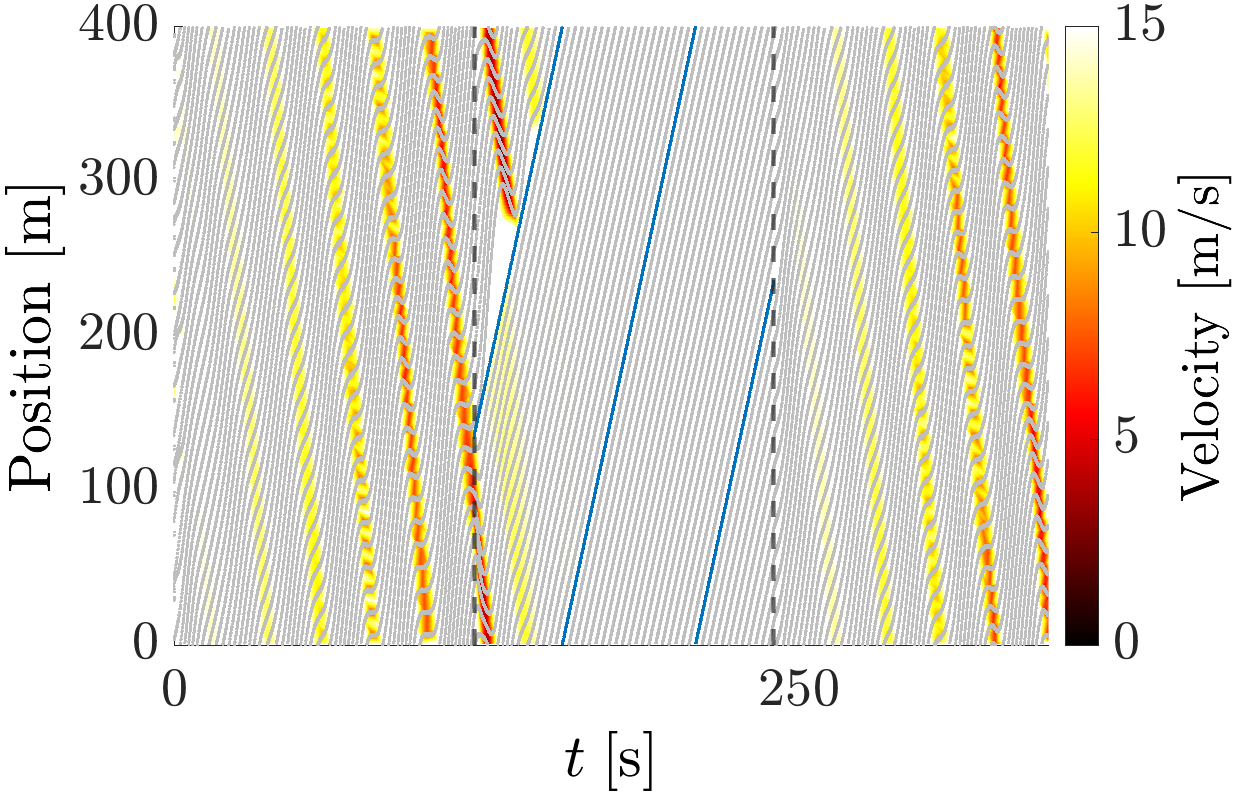}
        \includegraphics[width=0.48\textwidth]{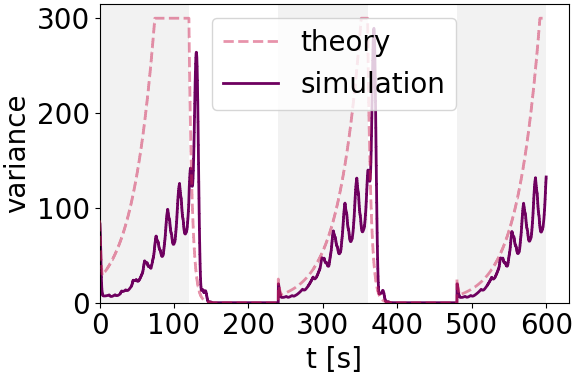}
        \caption{Eventual Blowup $T = 120s$}
    \end{subfigure}
    \caption{\sli{\textbf{Simulated trajectories and variance plots for the fixed-duration controller.} The time-space diagram, simulated and theoretically estimated variances under various AV lane-switch frequencies. The blue curves in the time-space diagram represent the AV, and the dashed lines indicate AV lane-switches. The white and gray shading in the variance plots represent the presence and absence of the AV in the lane, respectively. We show lane $L$ in (b) and (c), and lane $R$ in (a) and (d).}}
    \vspace*{-0.2cm}
    \label{fig:experiment_fd}
\end{figure*}

\vspace*{-0.2cm}
\subsection{Theoretical Variance Upper Bound}
\label{sec:experiment_theory}
We compare the theoretical variance upper bound in Thm.~\ref{thm:general_round} with the simulation variance to validate the theoretical variance upper bound as an appropriate stability metric, given the alignment between the simulation variance and system's stability.

\textbf{Parameter Fitting.} Cor.~\ref{cor:traffic_round} requires parameter values for the continuous dynamics $(\alpha_1, \alpha_2, \beta_1, \beta_2, t_\epsilon)$. We fit the parameter values from the nominal single-lane trajectories under fixed initial states $\bar{z}_{0, c}, \bar{z}_{0, u}$ in Fig.~\ref{fig:control_uncontrol_singlelane}, where we also fit the $\overline{var}_u(t)$ function with exponential functions. The fitted curves are displayed in Fig.~\ref{fig:single_lane_fit}. The resulting variance function for the continuous controlled and uncontrolled dynamics are 
\begin{equation}
    f_c(var_{0,c}, t) = \exp(-0.2319 t) var_{0,c}
    \label{eq:cont_var_fit}
\end{equation}
\begin{equation}
    \begin{aligned}
    & f_u(var_{0,u}, t) = \begin{cases} \exp(-0.3115 t) var_{0,u}, \hspace{-0.3cm} & \text{ if } t \in [0, 3.5s] \\ \exp\left(0.03313 (t - 3.5)\right) var'_{0,u}, \hspace{-0.3cm} & \text{ if } t\geq 3.5s\\  \end{cases}, \\
    \end{aligned}
    \label{eq:uncont_var_fit}
\end{equation}
\noindent where $var'_{0,u}  = \exp(-0.03115 \cdot 3.5) var_{0,u} = f_u(var_{0, u}, 3.5)$ is the variance evaluated at $t = 3.5s$, which serves as the initial variance for the increasing period $t \geq 3.5s$.
We set the constant multiplier to each exponential as $\alpha_1 = \beta_1 = 1$, so that the fitted curve starts with the same variance as the simulation curve (the values at $t = 0$ match). \slired{For Insufficient Control Duration ($T = 8.5s$), we empirically observe slower variance decrease during the controlled period $f_c$, due to the challenges imposed by the unstable system to the controller. We hence adjust the fitted controlled dynamics as $f_c(var_{0,c}, t) = \exp(-0.2319 / 4 t) var_{0,c} = \exp(-0.0580 t) var_{0,c}$ to account for the slower variance decrease.}

While the proof of our theory provides parameter estimates using min/max singular values, fitting the parameters based on the single-lane trajectories lead to tighter estimates of the variance upper bounds. We leave as a future work to find tight value estimates solely from the theory. 

\begin{table}[!t]
\centering
\caption{\textbf{Estimated and simulation variance increases at the discrete jumps.} The estimated variance increases, divided into high, middle, and low values based on the theoretical analysis, are consistent with the simulation variance (displayed with both the mean and standard deviation).\vspace*{-0.05cm}}
\scalebox{0.95}{
\begin{tabular}{ccccc}
\hline \\[-0.85em]
& \begin{tabular}[c]{@{}c@{}}Phantom Car\\ $T = 3s$\end{tabular}  & \begin{tabular}[c]{@{}c@{}}Insufficient \\ Control Dur.\\ $T = 8.5s$\end{tabular}  & \begin{tabular}[c]{@{}c@{}}Sufficient \\ Control Dur.\\ $T = 30s$\end{tabular}   & \begin{tabular}[c]{@{}c@{}}Eventual\\ Blowup\\ $T = 120s$\end{tabular}    \\ \\[-0.85em]
\hline \\[-0.85em]
\begin{tabular}[c]{@{}c@{}}estimated\\ $\Delta_{c\rightarrow u}$\end{tabular}   & 10 & 60 & 20& 20  \\ \\[-0.85em]
\begin{tabular}[c]{@{}c@{}}estimated\\ $\Delta_{u\rightarrow c}$\end{tabular}   & 10   & 15    & 15   & 5    \\ \\[-0.85em]
\hline \\[-0.85em]
\begin{tabular}[c]{@{}c@{}}simulation\\ $\Delta_{c\rightarrow u}$\end{tabular} & \begin{tabular}[c]{@{}c@{}}1.45 \\ (10.45)\end{tabular} & \begin{tabular}[c]{@{}c@{}}53.54\\ (1.61)\end{tabular} & \begin{tabular}[c]{@{}c@{}}19.32\\ (0.01)\end{tabular} & \begin{tabular}[c]{@{}c@{}}18.89\\ (1.18)\end{tabular} \\ \\[-0.85em]
\begin{tabular}[c]{@{}c@{}}simulation\\ $\Delta_{u\rightarrow c}$\end{tabular} & \begin{tabular}[c]{@{}c@{}}7.15 \\ (2.11)\end{tabular} & \begin{tabular}[c]{@{}c@{}}13.06\\ (0.28)\end{tabular} & \begin{tabular}[c]{@{}c@{}}10.42\\ (2.12)\end{tabular} & \begin{tabular}[c]{@{}c@{}}6.86\\ (1.82)\end{tabular} \\ \\[-0.85em]
\hline
\end{tabular}
}
\vspace*{-0.5cm}
\label{tab:discrete_jump_value}
\end{table}
We classify the parameter values for discrete jumps as high, middle and low based on lane-switch duration, using qualitative interpretations from the theoretical analysis in Sec.~\ref{sec:analysis_jump} and~\ref{sec:controller_fixed_duration}. The estimated values are presented in Table~\ref{tab:discrete_jump_value}. Notably, the estimated jump values closely align with simulation, whose mean and standard deviation are shown in the same table. We have the following two specific cases:
\begin{itemize}
\item $\boldsymbol{\Delta_{c\rightarrow u}}$ \textbf{(AV exits)}: Insufficient Control Duration has a high estimated jump value ($60$) due to insufficient variance decrease during the control period (high $var_1^{l, k}$ and a large headway product $a \cdot b$ that make $\Delta_{c\rightarrow u}$ high, see Table~\ref{tab:analysis_jump}). Sufficient Control Duration and Eventual Blow up has a lower estimated jump value ($20$), as the systems are near stable when AV exits (low $var_1^{l, k}$ and a smaller headway product). Phantom Car has the lowest estimated jump value ($10$) as high frequent switching results in a near stable multi-lane system with $\Delta_{c \rightarrow u} + \Delta_{c \rightarrow u} \approx 0$. 
\item $\boldsymbol{\Delta_{u \rightarrow c}}$ \textbf{(AV enters)}: we estimate a jump value ($15$) for Insufficient and Sufficient Control, a lower jump value ($10$) for Phantom Car due to the small net effects $\Delta_{c\rightarrow u} + \Delta_{u\rightarrow c}$ from consecutive AV exit and re-enter, and a lowest jump value ($5$) for Eventual Blowup due to a large $var^2_{u}$ leading to a lower variance increase at the jump (see Table~\ref{tab:analysis_jump}). 
\end{itemize}
Empirically, we observe $\Delta_{u \rightarrow c}$ has less variability and is smaller than $\Delta_{c \rightarrow u}$ for various AV control duration $Ts$. As time passes, the state of the uncontrolled (and controlled) lane become less (and more) correlated with the AV action. Hence, the variance jump $\Delta_{u \rightarrow c}(z^l_u)$, when AV enters a lane that is uncontrolled for $Ts$, is less affected by the AV's control strategy than $\Delta_{c \rightarrow u}(z^l_c)$, when AV exits a lane after controlling it for $Ts$. This hence results in more homogeneous jump values for $\Delta_{u \rightarrow c}(z^l_u)$ under different control duration $Ts$. \slired{Furthermore, $\Delta_{u \rightarrow c}$ is smaller than $\Delta_{c \rightarrow u}$ due to the opposite signs of the variance and the state information in the closed-form expressions (for example, a large pre-jump variance leads to a large $\Delta_{c\rightarrow u}$ but small $\Delta_{u\rightarrow c}$, as seen in  Table~\ref{tab:analysis_jump}).}

Alternatively, we can estimate the discrete jumps by imposing assumptions on state (e.g. assume $a^R, b^R$ and the pre-jump variance are within a close neighborhood from the equilibrium), and compute the closed-form jump values of the estimated states using Table~\ref{tab:analysis_jump}. We leave this as a future work.

\begin{figure}
    \centering
    \includegraphics[width=0.49\textwidth]{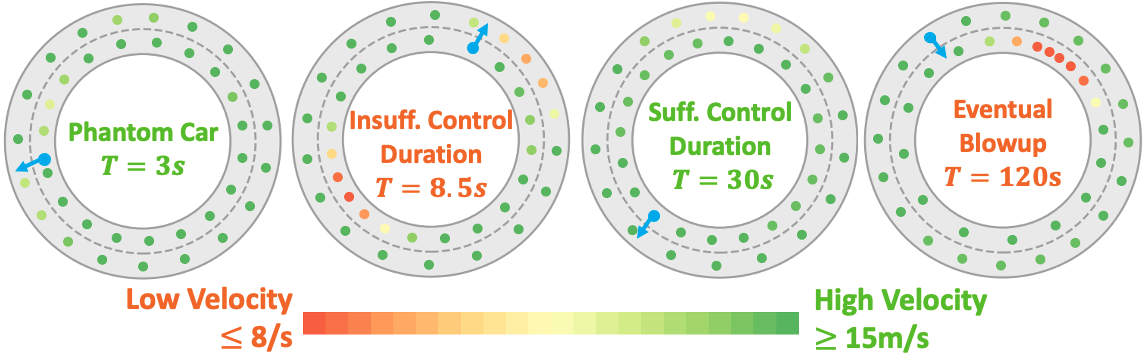}
    \caption{\textbf{Simulated traffic states of the fixed-duration controllers.} Depiction of the simulated traffic states under various lane-switch frequencies, when the AV (blue circle) switches lanes (the arrow indicates the AV movement). The traffic flow moves counter-clockwise. The color of each HV represents its velocity (red indicates traffic jams). The simulated states matches the theoretical illustration in Fig.~\ref{fig:fixed_duration}.}
    \vspace*{-0.2cm}
    \label{fig:experiment_fd_illustration}
\end{figure}

\begin{figure*}[!t]
    \centering
    \begin{subfigure}[b]{0.49\textwidth}
        \centering
        \includegraphics[width=0.5\textwidth]{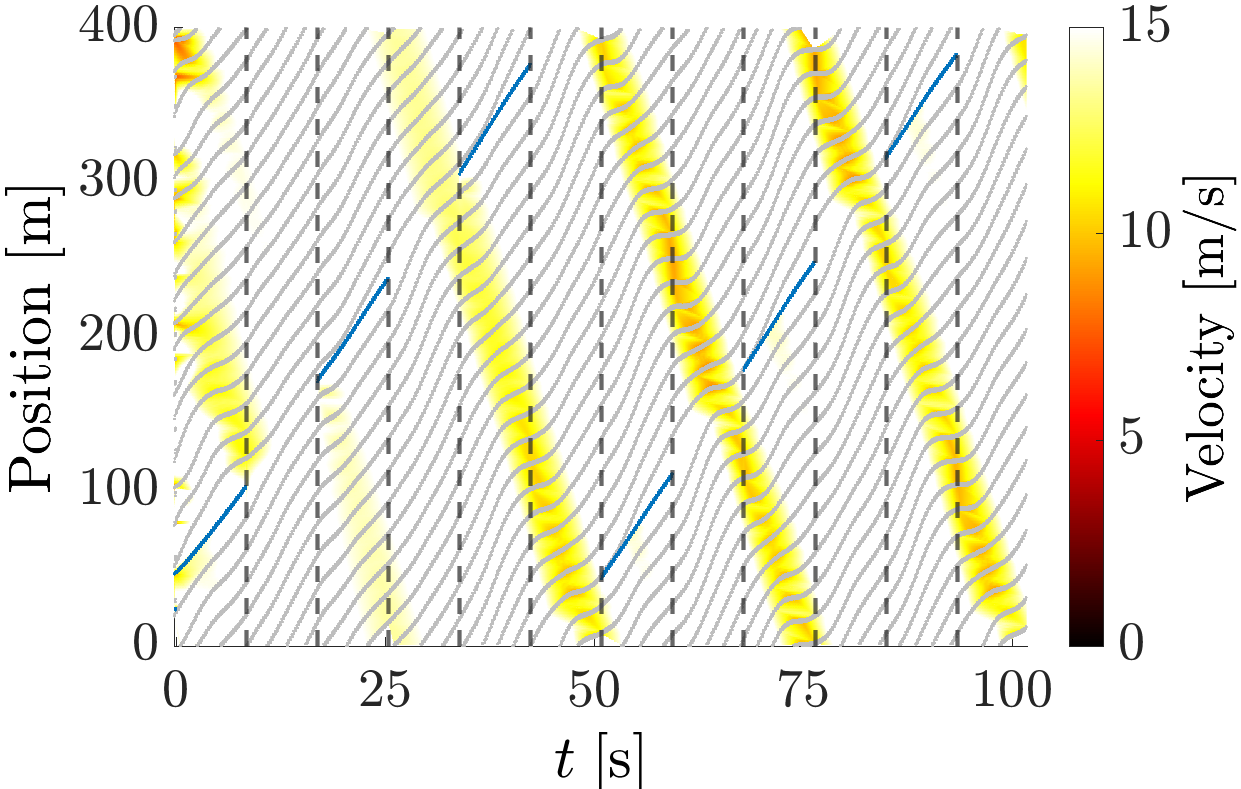}
        \includegraphics[width=0.48\textwidth]{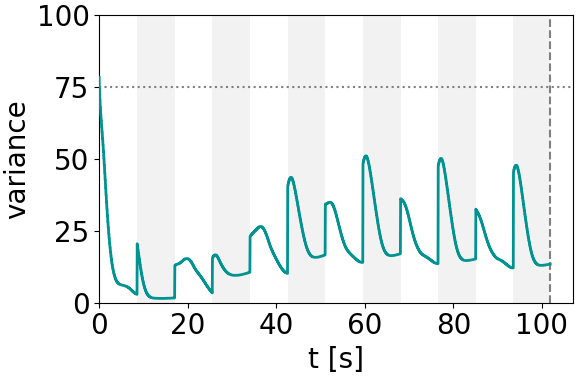}
        \caption{$T = 8.5s$, Anticipatory Control}
    \end{subfigure}
    \begin{subfigure}[b]{0.49\textwidth}
        \centering
        \includegraphics[width=0.49\textwidth]{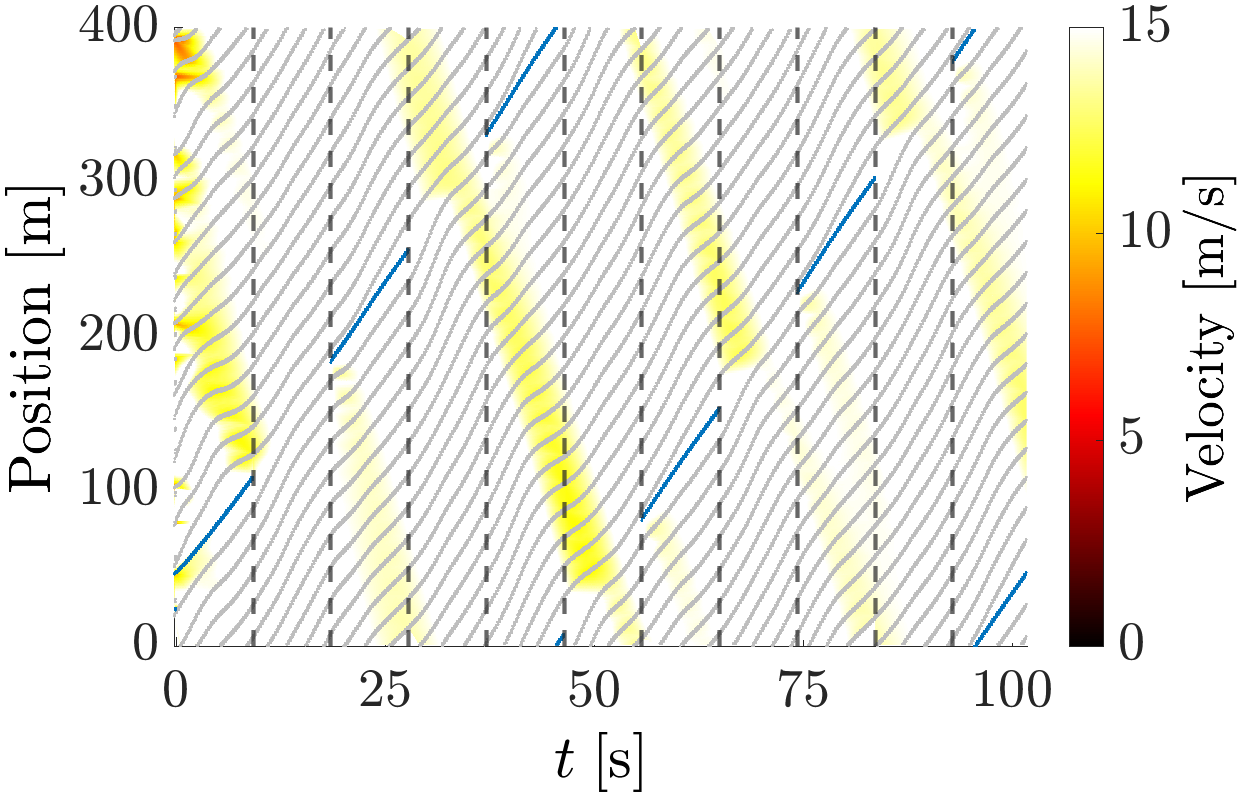}
        \includegraphics[width=0.48\textwidth]{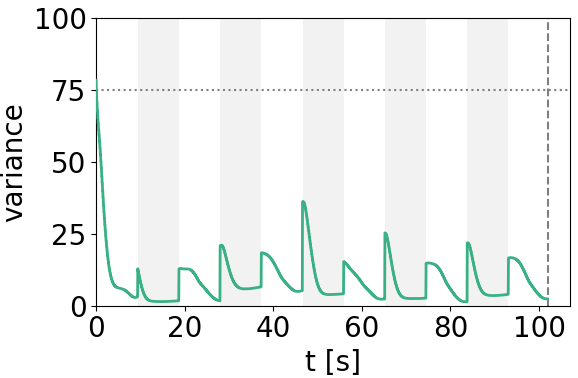}
        \caption{$T = 8.5s$, Integrated Anticipatory and Traffic-Aware Control}
    \end{subfigure} \\
    \begin{subfigure}[b]{0.49\textwidth}
        \centering
        \vspace{0.3cm}
        \includegraphics[width=0.5\textwidth]{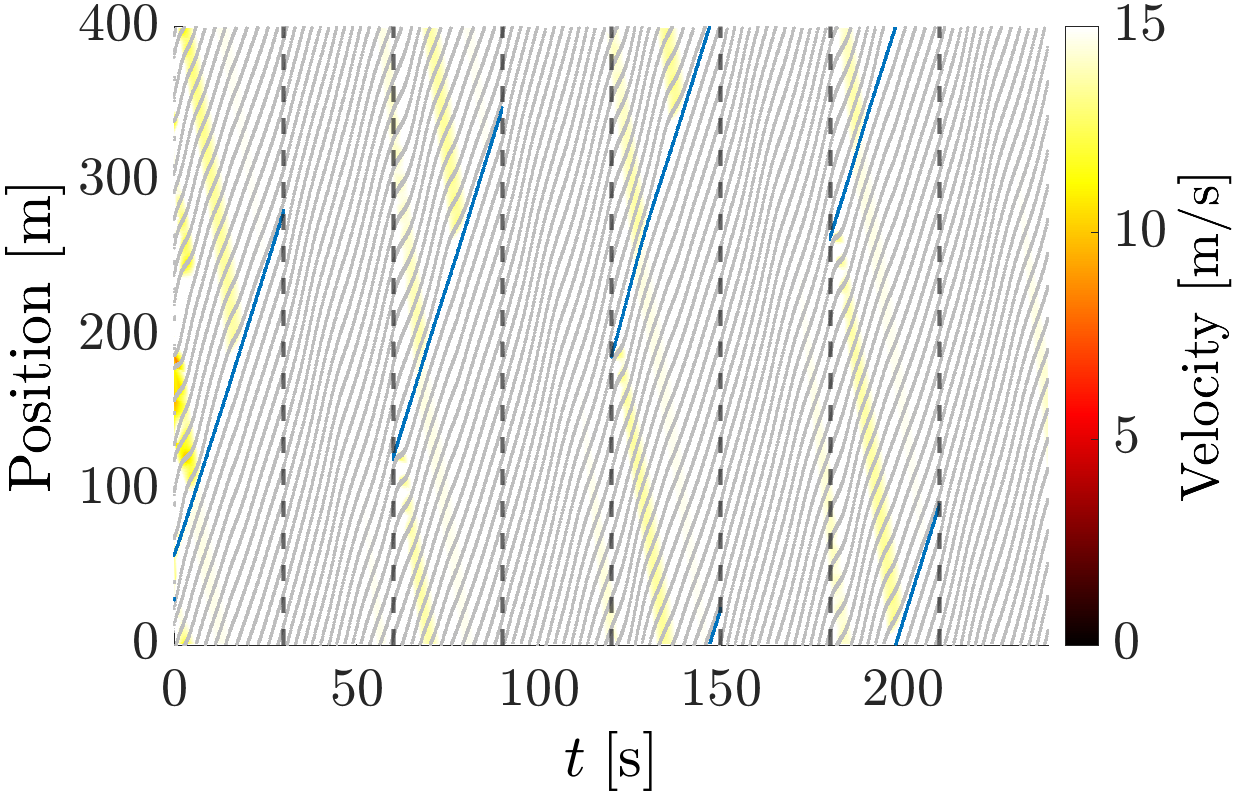}
        \includegraphics[width=0.48\textwidth]{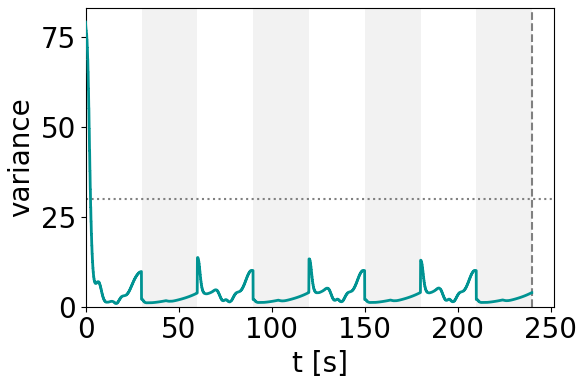}
        \caption{$T = 30s$, Anticipatory Control}
    \end{subfigure}
    \begin{subfigure}[b]{0.49\textwidth}
        \centering
        \vspace{0.3cm}
        \includegraphics[width=0.5\textwidth]{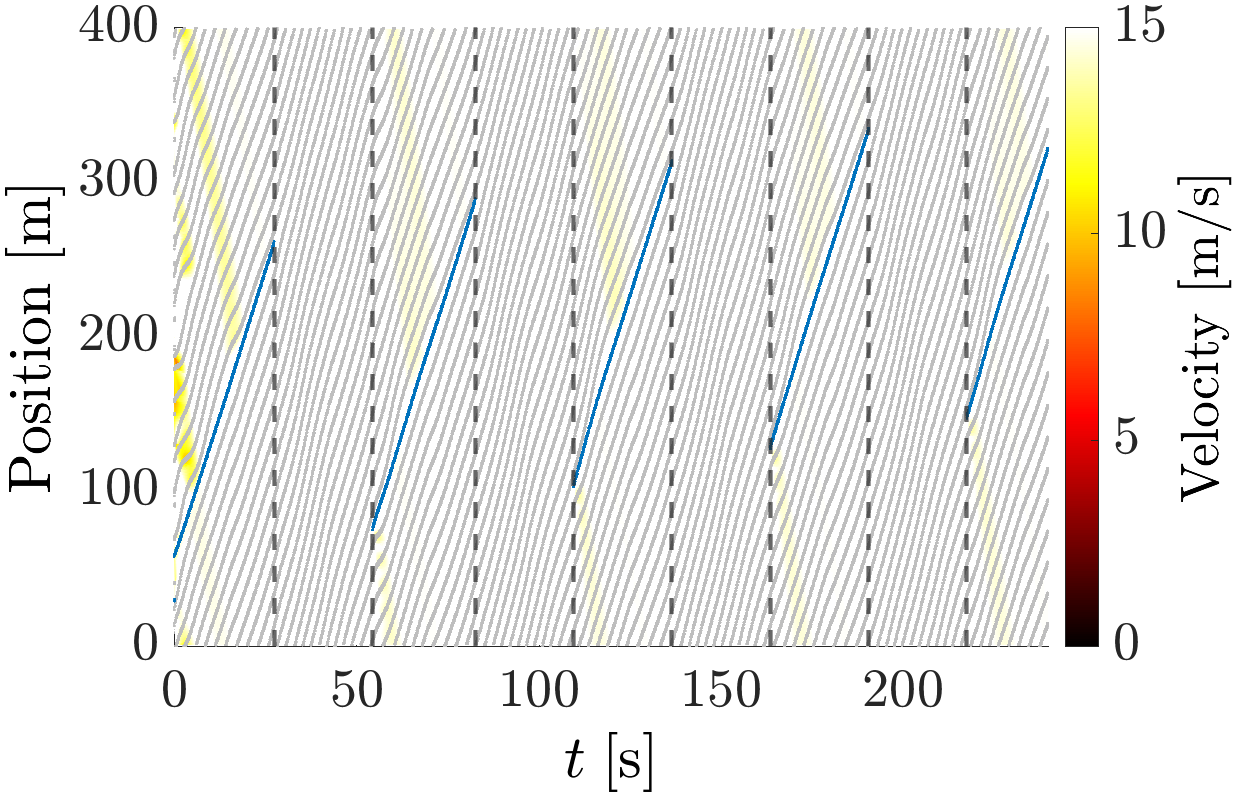}
        \includegraphics[width=0.48\textwidth]{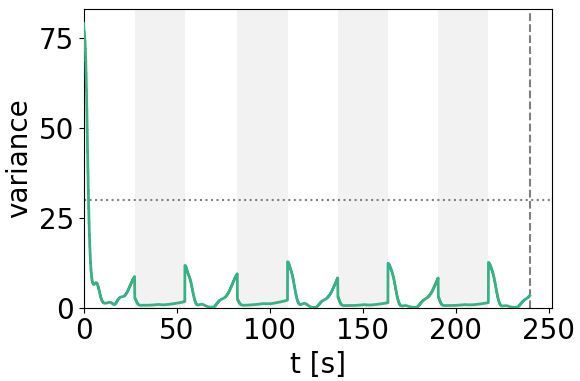}
        \caption{$T = 30s$, Integrated Anticipatory and Traffic-Aware Control}
    \end{subfigure} 
    \caption{\textbf{Simulated trajectories of the anticipatory and traffic-aware controllers.} The time-space diagram and the simulated variance for the anticipatory controller (left two columns) and the integrated anticipatory and traffic-aware controller (right two columns), with lane-switch duration $T=8.5s$ (top) and $T=30s$ (bottom). The horizontal dotted lines are the simulation variances for the corresponding fixed-duration controller with the same $Ts$. The augmented controllers effectively reduce the systems' variances from the fixed-duration controller.}
    \vspace*{-0.2cm}
    \label{fig:experiment_pe_ta}
\end{figure*}

\textbf{Theory and Simulation Variance Comparison.} 
Fig~\ref{fig:experiment_fd} displays the simulated and theoretically estimated  variance upper bound for the fixed-duration controller. Notably, we observe a strong correspondence between the theory upper bound and simulation variance across various control duration $Ts$. The theory accurately captures the variance trends, both qualitatively and quantitatively, for both the continuous dynamics and the discrete jumps. This alignment indicates the efficacy of the proposed theoretical stability analysis for the multi-lane mixed-autonomy system. 

\vspace*{-0.3cm}
\subsection{Traffic Phenomena Analysis and Controller Design}
This section evaluates the effect of the different AV controllers in Sec.~\ref{sec:controller_fixed_duration} and~\ref{sec:controller}, aiming to provide insights into AV lane-switch strategies to stabilize the two-lane system.

\vspace{0.2cm}
\subsubsection{Fixed-Duration Control for Emergent Traffic Phenomena}
\label{sec:experiment_fd}
Both simulated and theoretically estimated variance plots in Fig.~\ref{fig:experiment_fd} reveal emergent traffic phenomena under various fixed-control duration $Ts$; we further visualize the traffic state at an AV lane switch in Fig.~\ref{fig:experiment_fd_illustration}, after the trajectory reaches the periodic orbit. These figures are consistent with the four scenarios identified by Observation~\ref{observation:fd} from the proposed stability theory.

Interestingly, we observe different low-variance orbits for Phantom Car and Sufficient Control Duration. Let $z^*_{n}$ denote the equilibrium state for the controlled period (the initial state in Fig.~\ref{fig:control_uncontrol_singlelane} (a)) and $z^*_{n-1}$ denote the equilibrium state for the uncontrolled period (the initial state in Fig.~\ref{fig:control_uncontrol_singlelane} (f)). In the Phantom Car scenario ($T=3s$), the system converges to $z^*_{n-1}$ during both the continuous controlled and uncontrolled periods, whereas in the Sufficient Control Duration scenario ($T=30s$), the system converges to $z^*_n$ and $z^*_{n-1}$ for the respective controlled and uncontrolled period. 

Theoretically speaking, in the Phantom Car scenario ($T\approx 0$), the high-frequency lane-switching primarily produces a first-order effect between the AV and its two neighboring HVs, as the limited duration prevents propagation to higher-order effects which involve other vehicles. The first-order local effect consists of a competing force between increasing and decreasing the neighboring HVs' headways, as illustrated in Fig.~\ref{fig:control_uncontrol_singlelane} (e), when AV starts controlling the lane, and Fig.~\ref{fig:control_uncontrol_singlelane} (j), after the AV exits the lane. Due to the uncontrolled dynamics' faster initial response to close AV's headway gap, the system converges towards $z^*_{n-1}$, where HVs maintain nearly equal headways (ignoring the AV), with the presence of the AV ensuring the stability of the system. 

In the Sufficient Control Duration scenario ($T=30s$), in contrast, the controlled dynamics have ample time to guide the system towards $z^*_n$. The AV exits the lane afterwards, and the uncontrolled dynamics shifts the system towards $z^*_{n-1}$. The AV subsequently re-enters lane before the uncontrolled dynamics become unstable. The convergence to different equilibriums $z^*_n$ and $z^*_{n-1}$ for the controlled and uncontrolled periods hence leads to a higher variance compared to the Phantom Car scenario. Nonetheless, this scenario is easier to implement and less disruptive to the traffic (e.g. the AV isn't required keep all HVs behind at all times). 

\vspace{0.15cm}
\subsubsection{Anticipatory Control}
\label{sec:experiment_pe}

Sec.~\ref{sec:controller_pe} proposes a theory-informed anticipatory control strategy to reduce the variance when AV exits the lane. Fig.~\ref{fig:experiment_pe_ta} (left two columns) illustrates the relevant time-space diagram and simulation variance plot for $T = 8.5s$ and $T = 30s$, setting $p_{ex, s} = 0.5$, when we impose the anticipatory control. We observe a significant reduction in the variance increase when AV exits a lane, due to the effective reduction of the AV's headway that reduces the headway gap between the neighboring HVs when the AV exits the lane.

\vspace{0.15cm}
\subsubsection{Integrated Anticipatory and Traffic-Aware Control} 
\label{sec:experiment_ta}

Finally, we combine the traffic-aware control with the anticipatory control, and visualize the corresponding plots in Fig.~\ref{fig:experiment_pe_ta} (right two columns). 
Sec.~\ref{sec:controller_ta} proposes an augmented AV lane-switch strategy based on insights from the theoretical analysis (Prop.~\ref{prop:lanechange}), with the aim to reduce the variance when AV enters a lane. Fig.~\ref{fig:experiment_pe_ta} displays the time-space diagram and the simulation variance curve for the proposed strategy under various $Ts$, setting $p_{en, s} = 0.2, p_{en, v} = 0.8$. We further set $\Delta T_{ta} = 0.1T$ and mandate a lane switch at $1.1T$ if the imposed criteria are not met within the $[0.9T, 1.1T]$ period.

Compared with the anticipatory control, the combined strategy effectively \slirebuttal{reduces} the variance increase when the AV both exits and enters, further validating the practicality of the proposed theory in designing integrated AV control strategy for multi-lane mixed-autonomy systems.
\vspace{-0.1cm}
\section{CONCLUSION}
\label{sec:conclusion}
This work presents a theoretical framework to analyze the stability of the multi-lane mixed-autonomy system. Casting into the hybrid system framework, the proposed variance-based analysis combines theoretical bounds for the continuous dynamics and the discrete jumps (induced by AV lane-switch). The analysis provides principled understanding to emergent traffic phenomena such as traffic breaks and less-intrusive regulation, and can inform AV controller design for anticipatory and traffic-aware controllers to mitigate traffic congestion. 

The analysis in this work opens up several interesting avenues for future research: the proposed theoretical analysis \slirebuttal{can be enhanced} by (1) finding tight\slirebuttal{er} theoretical \slirebuttal{bounds of} the variance changes for both the continuous dynamics and discrete jumps (see Sec.~\ref{sec:analysis})\slirebuttal{, potentially extending from the single-lane reachability analysis~\cite{liu2023reachability} to consider different shapes of the ellipsoidal sets}, and (2) considering more general traffic scenarios including more lanes and varying number of vehicles per-lane. We would like to \slirebuttal{further} expand beyond the proposed analysis by (1) incorporating lane changes from human drivers, \slirebuttal{potentially modeling as exogenous and endogenous random processes and extending the Hybrid System formulation to a combinatorial number of modes,} and (2) applying the theoretical analysis to design more complicated AV control strategies, leveraging advanced control techniques and potentially reinforcement learning. \sli{We plan to further conduct field studies to assess the effectiveness of the theoretically-informed AV control strategies.} We hope this work can seed more future work on providing principled theoretical justification of emergent traffic behaviors in autonomous vehicle control, and for future development of efficient, safe, and sustainable multi-lane mixed-autonomy traffic systems.

\vspace{-0.15cm}
\appendix[]
\subsection{Proof of Lemma~\ref{lemma:varnorm}}
\label{appendix:varnorm}
\begin{proof}
Due to the translation invariance of the variance operator, we can subtract the equilibrium headway and velocity $s^{*}_{n_l}, v^{*}_{n_l}$ from the actual headways and velocities, and obtain
\begin{equation}
    \begin{aligned}
        & var^l(t) = variance(\{s_i^l(t) - s^{*}_{n_l}\}_i) + variance(\{v_i^l(t)  - v^{*}_{n_l}\}_i) \\
        = \; & variance(\{\tilde{s}_i^l(t)\}_i) + variance(\{\tilde{v}_i^l(t)\}_i) \\
        = \; &\Big(\frac{1}{n_l}\|\tilde{s}^l(t)\|_2^2 - \frac{1}{n_l^2}\big(\sum\limits_{i=1}^{n_l}\tilde{s}^l_i(t)\big)^2\Big) + \Big(\frac{1}{n_l}\|\tilde{v}^l(t)\|_2^2 - \frac{1}{n_l^2}\big(\sum\limits_{i=1}^{n_l}\tilde{v}^l_i(t)\big)^2\Big)
    \end{aligned}
\end{equation}
where the error state $x^l(t) = [\tilde{s}_1^l(t), \tilde{v}_1^l(t), ..., \tilde{s}_{n_l}^l(t), \tilde{v}_{n_l}^l(t)]$ with the headway errors $\tilde{s}^l(t) = [\tilde{s}_1^l(t), ..., \tilde{s}_{n_l}^l(t)]$ and velocity errors $\tilde{v}
^l(t) = [\tilde{v}_1^l(t), ..., \tilde{v}_{n_l}^l(t)]$.

\vspace{0.2cm}
\noindent \textbf{Upper Bound.} Ignoring the negative terms, we arrive at the following upper bound with $c_2 = \frac{1}{n_l} > 0$: 
\begin{equation}
\begin{aligned}
    var^l(t) \leq \frac{1}{n_l}\|\tilde{s}^l(t)\|_2^2 + \frac{1}{n_l}\|\tilde{v}^l(t)\|_2^2 =  c_2 \|x^l(t)\|_2^2 
    \label{eq:proof_varnorm_ub}
\end{aligned}
\end{equation}
\noindent \textbf{Lower Bound.} By the property of the ring road and the OVM dynamics (where $s^*_{n_l} = C / n_l$), we have $\sum\limits_{i=1}^{n_l} \tilde{s}_i^l(t) = \sum\limits_{i=1}^{n_l} s_i^l(t) - n_l \cdot s^*_{n_l} = C - C = 0$, from which we obtain the exact equality (and hence the same lower bound) for the headway $variance(\{\tilde{s}_i^l(t)\}_i) = \frac{1}{n_l}\|\tilde{s}(t)\|_2^2$.
While the sum of the headways stays constant, this does not always hold for the velocity. \slirebuttal{Under Assump.~\ref{assump:varnorm},} $\{\tilde{v}^l_i(t)\}_i = \{v^l_i(t) - v^{l*}_{n_l}\}_i$ consists of negative and \slirebuttal{nonnegative} terms which cancel out each other, making $\frac{1}{n_l^2}\big(\sum\limits_{i=1}^{n_l} \tilde{v}^l(t)\big)^2$ small and $variance(\{\tilde{v}_i^l(t)\}_i)$ large. 
\slirebuttal{Specifically,}
\vspace{-0.1cm}
\begin{equation*}
   \begin{aligned}
        & \frac{1}{n_l^2}\Big(\sum\limits_{i=1}^{n_l} \tilde{v}_i^l(t)\Big)^2 = \frac{1}{n_l^2}\Big(\sum\limits_{i=1}^{n_l^{-}} \tilde{v}_i^{l-}(t) + \sum\limits_{i=1}^{n_l^{+}}\tilde{v}_i^{l+}(t)\Big)^2  \\
        \leq \; &\frac{1}{n_l^2} \Big(\sum\limits_{i=1}^{n_l^{-}} \tilde{v}_i^{l-}(t)\Big)^2 + \frac{1}{n_l^2} \Big(\sum\limits_{i=1}^{n_l^{+}} \tilde{v}_i^{l+}(t)\Big)^2 = \frac{1}{n_l^2} \|\tilde{v}^{l-}(t)\|_1^2 + \frac{1}{n_l^2} \|\tilde{v}^{l+}(t)\|_1^2 \\
        \leq \; & \frac{n_l^{-}}{n_l^2} \|\tilde{v}_i^{l-}(t)\|_2^2 + \frac{n_l^{+}}{n_l^2}  \|\tilde{v}_i^{l+}(t)\|_2^2 \leq \frac{\max\{n_l^{-}, n_l^{+}\}}{n_l^2}\|\tilde{v}_i^l(t)\|_2^2
    \end{aligned}
\end{equation*}
where the first line separates $\{\tilde{v}^l_i(t)\}_{i=1}^{n_l}$ into \slirebuttal{negative} and \slirebuttal{nonnegative} terms ($\{\tilde{v}^{l-}_i(t)\}_{i=1}^{n_l-}, \{\tilde{v}^{l+}_i(t)\}_{i=1}^{n_l+}$); the second line splits the two sets of terms by ignoring the negative cross terms, ~\slirebuttal{and uses the} definition of the $l_1$-norm as terms in each set have the same sign; the last line upper bounds the $l_1$ norm by $n_2$ norm through the inequality $\|\delta\|_1^2 \leq m \|\delta\|_2^2$ where $\delta \in \mathbb{R}^m$, \slirebuttal{and also} aggregates and upper bounds the two $l_2$-norm terms into one single term. 
\slirebuttal{From Assump.~\ref{assump:varnorm}, we have either the velocity component all at equilibrium, i.e. $\tilde{v}_i^l(t) = 0 \; \forall i$, or $1 < n^{-}_l, n^{+}_l < n_l$ and $n_l - \max\{n^{-}_l, n^{+}_l\} > 0$, which leads to }
\vspace*{-0.1cm}
\begin{equation}
\begin{aligned}
    & variance(\{\tilde{v}_i^l(t)\}_i)  = \frac{1}{n_l}\|\tilde{v}^l(t)\|_2^2 - \frac{1}{n_l^2}\Big(\sum\limits_{i=1}^{n_l} \tilde{v}_i^l(t)\Big)^2 \\
   \geq  \; &  \frac{\left(n_l - \max\{n^{+}_l, n^{-}_l\}\right)}{n_l^2}\|\tilde{v}_i^l(t)\|_2^2 = c_1 \|\tilde{v}^l(t)\|_2^2 
\end{aligned}
\label{eq:proof_varnorm_lb}
\end{equation}
for some $c_1 > 0$.
\end{proof}
\vspace{-0.4cm}
\subsection{Proof of Proposition~\ref{prop:cont_var_bd1}}
\label{appendix:cont_var_bd1}
\begin{proof}
\slirebuttal{\textit{Controlled period.}} Given the Lyapunov function $V(x^l_c(t)) = x^l_c(t)^\intercal P x^l_c(t)$ with $\lambda_{min}(P)\|x^l_c(t)\|_2^2 \leq x^l_c(t)^\intercal P x^l_c(t) \leq \lambda_{max}(P) \|x^l_c(t)\|_2^2$, we have
\begin{equation}
\begin{aligned}
 \dot{V}(t) & = \frac{d}{dt}x^l_c(t)^\intercal P x^l_c(t) = -{x^l_c}^\intercal(t) Q x^l_c(t) \\
& \leq -\lambda_{min}(Q)\|x^l_c(t)\|_2^2 \leq -\frac{\lambda_{min}(Q)}{\lambda_{max}(P)}{x^l_c}^\intercal(t) P x^l_c(t)
\end{aligned}
\end{equation}
where $P, Q > 0$. By Gronwall's inequality, we obtain
\begin{equation*}
\begin{aligned}
x^l_c(t)^\intercal P x^l_c(t) & \leq \exp\left(-\frac{\lambda_{min}(Q)}{\lambda_{max}(P)}t\right)x^l_c(0)^\intercal Px(0), \text{ hence }
\end{aligned}
\end{equation*}
\begin{equation}
    \begin{aligned}
    \|x^l_c(t)\|_2^2 & \leq \frac{\lambda_{max}(P_1)}{\lambda_{min}(P_1)} \exp\left(-\frac{\lambda_{min}(Q_1)}{\lambda_{max}(P_1)}t\right) \|x^l_c(0)\|_2^2\\
    & \leq \frac{1}{c_1}\frac{\lambda_{max}(P_1)}{\lambda_{min}(P_1)} \exp\left(-\frac{\lambda_{min}(Q_1)}{\lambda_{max}(P_1)}t\right) var_0,
    \end{aligned}
\end{equation}
for some $c_1 > 0$, from Lemma~\ref{lemma:varnorm}. Applying the Lemma again, we obtain $var^l(t) \leq c_2\|x^l_{c}(t)\|_2^2 \leq \alpha_1 \exp(-\alpha_2 t) var_0$ with $\alpha_1 = \big(c_2\lambda_{max}(P_1)\big) / \big(c_1 \lambda_{min}(P_1)\big), \alpha_2 = \lambda_{min}(Q_1) / \lambda_{max}(P_1)$.
\slirebuttal{\textit{Uncontrolled period.}} We can invoke a similar analysis $\dot{x}^l_u(t) = A_u x^l_u(t)$ by considering $U(x) = x^l_u(t)^\intercal x^l_u(t) = \|x^l_u(t)\|_2^2$. We have
$\dot{U}(t) = \frac{d}{dt}x^l_u(t)^\intercal x^l_u(t) = x^l_u(t)^\intercal W x^l_u(t)$, where $W = A_u + A_u^\intercal$ with $\lambda_{max}(W) > 0$, as the uncontrolled OVM system is unstable. Therefore, $\frac{d}{dt} x^l_u(t)^\intercal x^l_u(t) = x^l_u(t)^\intercal W x^l_u(t) \leq \lambda_{max}(W)\|x^l_u(t)\|^2$.
Applying Gronwall's inequality, we obtain
\begin{equation}
\begin{aligned}
\|x^l_u(t)\|_2^2 = x^l_u(t)^\intercal x^l_u(t) & \leq \exp\left(\lambda_{max}(W) t\right)x^l_u(0)^\intercal x^l_u(0)\\
& = \exp\left(\lambda_{max}(W) t\right) \|x^l_u(0)\|_2^2
\end{aligned}
\end{equation}
Apply Lemma~\ref{lemma:varnorm}, we obtain $ var^l(t) \leq c_2\|x^l_u(t)\|_2^2 \leq \beta_1 \exp(\beta_2 t) var_0$ with $\beta_1 = c_2, \beta_2 = \lambda_{max}(Q)$ for all $t\geq 0$.
\end{proof}

\vspace*{-0.4cm}
\subsection{Proof of Theorem~\ref{thm:cont_init}}
\label{appendix:cont_init}
\begin{proof}
The variance operator on a vector $\delta = \{\delta_i\}_{i=1}^m \in \mathbb{R}^d$ can be written as a quadratic function 
$variance(\delta) = \delta^\intercal P_{var} \delta$ where $P_{var} = \frac{1}{m}\mathit{I}_{m\times m} - \frac{1}{m^2}\mathit{1}_{m\times 1}\mathit{1}_{1\times m} \geq 0$ with $\lambda_{max}(P_{var}) \leq \frac{1}{m}$. Consier $\delta = \{\delta_i\}_{i=1}^m, \bar{\delta} = \{\bar{\delta}_i\}_{i=1}^m$, we have 
\begin{equation}
    \begin{aligned}
    & \; |variance(\delta) - variance(\bar{\delta})| \\
    = & \; |\delta^\intercal P_{var} \delta - \bar{\delta}^\intercal P_{var}\bar{\delta}| 
    = |(\delta + \bar{\delta})^\intercal P_{var} (\delta - \bar{\delta})|\\
    \leq & \;\lambda_{max}(P_{var}) |(\delta + \bar{\delta})^\intercal(\delta - \bar{\delta})| \\
    \leq & \; \frac{1}{m} |(\delta + \bar{\delta})^\intercal(\delta - \bar{\delta})| 
    \leq \frac{1}{m} \|\delta + \bar{\delta}\|_2 \|\delta - \bar{\delta} \|_2
    \end{aligned}
\end{equation}
Denote the state $z(t) = [s_{z, 1}^l(t), v_{z, 1}^l(t), ..., s_{z, n_l}^l(t), v_{z, n_l}^l(t)]$ with the headways $s_{z}(t) = \{s_{z, i}^l(t)\}_{i=1}^{n_l}$ and velocities $v_{z}(t) = \{v_{z, i}^l(t)\}_{i=1}^{n_l}$, and the corresponding variance $var^l_{z}(t) = variance(s_{z}(t)) + variance(v_{z}(t))$ (and similar notation for $\bar{z}$), we can follow a similar derivation as above and get
\begin{equation}
    \begin{aligned}
        & \left|var^l_{z}(t) - var^l_{\bar{z}}(t)\right| = \big|(variance(s_z(t)) - variance(s_{\bar{z}}(t))) + \\
        & \hspace*{2.95cm} (variance(v_z(t)) - variance(v_{\bar{z}}(t)))\big|\\
        \leq\; &  \frac{1}{n_l} |(s_{z}(t) + s_{\bar{z}}(t))^\intercal(s_{z}(t) - s_{\bar{z}}(t)) +  (v_{z}(t) + v_{\bar{z}}(t))^\intercal(v_{z}(t) - v_{\bar{z}}(t))| \\
        =\; & \frac{1}{n_l} |(z(t) + \bar{z}(t))^\intercal (z(t) - \bar{z}(t))| \leq  \frac{1}{n_l} \|z(t) + \bar{z}(t)\|_2 \|z(t) - \bar{z}(t)\|_2
    \end{aligned}
    \label{eq:appendix_var}
\end{equation}
Following the same derivation as for Eq.~\eqref{eq:cont_init_state} (with the minus sign replaced with a plus sign), we arrive at
\begin{equation}
\|z(t) + \bar{z}(t)\|_2\leq \|z_0 + \bar{z}_0\|_2\exp[L(t-t_0)]
\end{equation}
Combining with Eq.~\eqref{eq:appendix_var} and~\eqref{eq:cont_init_state}, we hence obtain
\begin{equation}
    \begin{aligned}
        \left|var^l_{z}(t) - var^l_{\bar{z}}(t)\right| \leq \frac{1}{n_l}\|z_0 + \bar{z}_0\|_2 \|z_0 - \bar{z}_0\|_2 \exp[2L(t-t_0)]
    \end{aligned}
\end{equation}
Moving $var^l_{z_0}(t)$ to the right hand side, we arrive at the final upper bound.
\end{proof}
\vspace*{-0.6cm}
\subsection{Proof of Proposition~\ref{prop:lanechange}}
\label{appendix:lanechange}
\begin{proof}
Similar to $\Delta^1_{c \rightarrow u,s}$ derived in the main paper, we have the other cases
\begin{equation*}
    \begin{aligned}
        var^R_{u, s} & = \frac{1}{n_R}\slirebuttal{\Big(}(a^R + b^R)^2 + \sum_{\substack{i: none\\lag\;HV}} (s^R_i)^2\slirebuttal{\Big)} - \frac{1}{n_R^2} C^2\\
        \tilde{var}^R_{c, s} & = \frac{1}{n_R + 1}\slirebuttal{\Big(}(a^R)^2 + (b^R)^2 + \sum_{\substack{i: none \\ lag\;HV}} (s^R_i)^2\slirebuttal{\Big)} - \frac{1}{(n_R - 1)^2}C^2 
        \end{aligned}
\end{equation*}
\begin{equation*}
\begin{aligned}
        var^L_{c, v} &= \frac{1}{n_L}\slirebuttal{\Big(}v_n^2 + \sum_{i: HV} (v_i^L)^2\slirebuttal{\Big)} - \frac{1}{n_L^2}\slirebuttal{\Big(}v_n + \sum\limits_{i: HV} v_i^L\slirebuttal{\Big)}^2\\
        \tilde{var}^L_{u, v} &= \frac{1}{n_L - 1}\slirebuttal{\Big(}\sum\limits_{i: HV} (v_i^L)^2\slirebuttal{\Big)} - \frac{1}{(n_L - 1)^2}\slirebuttal{\Big(}\sum\limits_{i: HV} v_i^L\slirebuttal{\Big)}^2\\
        var^R_{u, v} &= \frac{1}{n_R}\slirebuttal{\Big(}\sum\limits_{i: HV} (v^R_i)^2\slirebuttal{\Big)} - \frac{1}{n_R^2}\slirebuttal{\Big(}\sum\limits_{i: HV} v^R_i\slirebuttal{\Big)})^2\\
        \tilde{var}^R_{c, v} &= \frac{1}{n_R + 1}\slirebuttal{\Big(}\sum\limits_{i: HV} (v^R_i)^2 + v_n^2\slirebuttal{\Big)} - \frac{1}{(n_R + 1)^2}\slirebuttal{\Big(}\sum\limits_{i: HV} v^R_i + v_n\slirebuttal{\Big)}^2\\
    \end{aligned}
\end{equation*}
\slirebuttal{We hence obtain the closed-form expressions for $\Delta_{u \rightarrow c,s}(s^R) = \tilde{var}^R_{c, s} - var^R_{u, s}, \Delta_{c \rightarrow v, u}(v^L) = \tilde{var}^L_{u, v} - var^L_{c, v}$ and $\Delta_{u \rightarrow v, c}(v^R) = \tilde{var}^R_{c, v} - var^R_{u, v}$ as in Table~\ref{tab:analysis_jump} by standard algebra.}
\end{proof}


\vspace*{-0.5cm}
\bibliography{references}

\end{document}